
\documentclass[11pt,english,letterpaper]{article}
\date{}
\usepackage{babel}
\usepackage[T1]{fontenc}
\usepackage[latin9]{inputenc}
\usepackage{float}
\usepackage{amsmath}
\usepackage{graphicx}
\usepackage{amssymb}
\usepackage{verbatim}
\usepackage{fullpage}
\usepackage{geometry}
\usepackage{framed}
\usepackage{color}
\usepackage{theorem}

\geometry{verbose,letterpaper,tmargin=1in,bmargin=1.2in,lmargin=1.1in,rmargin=1.1in}

\floatstyle{ruled}
\newfloat{algorithm}{tbp}{loa}
\floatname{algorithm}{Algorithm}

\newtheorem{definition}{Definition}
\newtheorem{lemma}{Lemma}

\newtheorem{proposition}{Proposition}
\newtheorem{notation}{Notation}

\newenvironment{proof}{{\noindent\bf Proof. } }{{\hfill $\Box$}}

\begin{document}

\title{Dynamic FTSS in Asynchronous Systems: \\the Case of Unison\protect\footnote{This work was funded in part by ANR project SHAMAN, ALADDIN, and SPADES.} $^,$\footnote{A preliminary version of this work was published as a 2-pages brief announcement in DISC'09 \cite{DPT09ca}.}}

\author{
Swan Dubois$^{1,2}$ \and Maria Potop-Butucaru$^{1,2}$ \and S\'ebastien Tixeuil$^{1,3}$}

\maketitle

\vspace{-0.75cm}

\begin{center}
\scriptsize{
 $^{1}$ UPMC Sorbonne Universit\'es\\
 $^{2}$ INRIA Rocquencourt, Project-team REGAL\\
 $^{3}$ Institut Universitaire de France\\
 Postal adress: LIP6, Case 26-00/225, 4 place Jussieu, 75005 Paris (France)\\
 Mail: \{swan.dubois,maria.gradinariu,sebastien.tixeuil\}@lip6.fr -- Fax: +33 1 44 27 74 95 
}
\end{center}

\begin{abstract}
Distributed fault-tolerance can mask the effect of a limited number of permanent faults, while self-stabilization provides forward recovery after an arbitrary number of transient faults hit the system. FTSS (Fault-Tolerant Self-Stabilizing) protocols combine the best of both worlds since they tolerate simultaneously transient and (permanent) crash faults. To date, deterministic FTSS solutions either consider static (\emph{i.e.} fixed point) tasks, or assume synchronous scheduling of the system components. 

In this paper, we present the first study of deterministic FTSS solutions for dynamic tasks in asynchronous systems, considering the unison problem as a benchmark. Unison can be seen as a local clock synchronization problem as neighbors must maintain digital clocks at most one time unit away from each other, and increment their own clock value infinitely often. We present several impossibility results for this difficult problem and propose a FTSS solution (when the problem is solvable) for the state model that exhibits optimal fault containment.
\end{abstract}

\textbf{Keywords:} Distributed algorithms, Self-stabilization, Fault-tolerance, Unison, Clock synchronization.

\section{Introduction}\label{sec:Introduction}

The advent of ubiquitous large-scale distributed systems advocates that tolerance to various kinds of faults and hazards must be included from the very early design of such systems. \emph{Self-stabilization}~\cite{D74j,D00b} is a versatile technique that permits forward recovery from any kind of \emph{transient} fault, while \emph{Fault-tolerance}~\cite{FLP85j} is traditionally used to mask the effect of a limited number of \emph{permanent} faults. Making distributed systems tolerant to both transient and permanent faults is appealing yet proved difficult~\cite{AH93c,GP93c} as impossibility results are expected in many cases.

The seminal works of~\cite{AH93c,GP93c} define FTSS protocols as protocols that are both fault tolerant and self-stabilizing, \emph{i.e.} able to tolerate a few crash faults as well as arbitrary initial memory corruption. In~\cite{AH93c}, impossibility results for size computation and election in asynchronous systems are presented, while unique naming is proved possible. In~\cite{GP93c}, a general transformer is presented for synchronous systems, as well as positive results with failure detectors. The transformer of~\cite{GP93c} was later proved impossible to transpose to asynchronous systems due to the impossibility of tight synchronization in the FTSS context. For \emph{local} tasks (\emph{i.e.} tasks whose correctness can be checked locally, such as vertex coloring), the notion of \emph{strict} stabilization was proposed~\cite{NA02c,MT07j}. Strict stabilization guarantees that there exists a \emph{containment radius} outside which the effect of permanent faults is masked, provided that the problem specification makes it possible to break the causality chain that is caused by the faults. Strong stabilization~\cite{MT06cb,DMT10ca,DMT10cd} weakens this requirement and ensures processes outside the containment radius are only impacted a finite number of times by the Byzantine nodes.

It turns out that FTSS possibility results in fully \emph{asynchronous} systems known to date are restricted to \emph{static} tasks, \emph{i.e.} tasks that require eventual convergence to some global fixed point (tasks such as naming or vertex coloring fall in this category). In this paper, we consider the more challenging problem of \emph{dynamic} tasks, \emph{i.e.} tasks that require both eventual safety and liveness properties (examples of such tasks are clock synchronization and token passing). Due to the aforementioned impossibility of tight clock synchronization, we consider the \emph{unison} problem, which can be seen as a \emph{local} clock synchronization problem. In the unison problem~\cite{M91j}, each node is expected to keep its digital clock value within one time unit of every of its neighbors' clock values (weak synchronization), and increment its clock value infinitely often (liveness). Note that in synchronous systems where the underlying topology is a fully connected graph in which clocks have discrete time unit values, unison induces tight clock synchronization. Several self-stabilizing solutions exist for this problem~\cite{BPV04c,BPV05c,CFG92c,GH90j}, both in synchronous and asynchronous systems, yet none of those can tolerate crash faults.

As a matter of fact, there exists a number of FTSS results for \emph{dynamic} tasks in \emph{synchronous} systems. \cite{DW97j,PT97j} provide self-stabilizing clock synchronization that is also~\emph{wait free}, \emph{i.e} that tolerate napping faults, in complete networks. Also \cite{D97j} presents a FTSS clock synchronization for general networks. Still in synchronous systems, it was proved that even \emph{malicious} (\emph{i.e.} Byzantine) faults can be tolerated, to some extent. In~\cite{BDH08c,DW04j}, probabilistic FTSS protocols were proposed for up to one third of Byzantine processors, while in \cite{DH07cb,HDD06c} deterministic solution tolerate up to one fourth and one third of Byzantine processors, respectively. Note that all solutions presented in this paragraph are for fully \emph{synchronous} systems. \cite{HBD10c} is a notable exception since it proposes a probabilistic solution to a clock synchronization problem in an asynchronous system.  

In this paper, we tackle the open issue of FTSS \emph{deterministic} solutions to \emph{dynamic} tasks in \emph{asynchronous} systems, using the unison problem as a case study. Our first negative results show that whenever two or more crash faults may occur, FTSS unison is impossible in any asynchronous setting. The remaining case of one crash fault drives the most interesting results (see Section~\ref{sec:negative}). The first main contribution of the paper is the characterization of two key properties satisfied by all previous self-stabilizing asynchronous unison protocols: \emph{minimality} and \emph{priority}. Minimality means that nodes maintain no extra variables but the digital clock value. Priority means that if incrementing the clock value does not break the local safety predicate between neighbors, then the clock value is actually incremented in a finite number of activations, even if no neighbor modifies its clock value. Then, depending on the fairness properties of the scheduling of nodes, we provide various results with respect to the possibility or impossibility of unison. When the scheduling is \emph{unfair} (only global progress is guaranteed), \emph{universal} FTSS unison (\emph{i.e.} unison that can operate on every graph of a particular class) is impossible. When the scheduling is \emph{weakly fair} (a processor that is continuously enabled is eventually activated), then it is impossible to solve universal FTSS unison by a protocol that satisfies either minimality or priority. The case of \emph{strongly fair} scheduling (a processor that is enabled infinitely often is eventually activated) is similar whenever the maximum degree of the graph is at least three. Our negative results still apply when the clock variable is unbounded, the local synchronization constraint is relaxed, and the scheduling is central (\emph{i.e.} a single processor is activated at any time).

On the positive side (Section~\ref{sec:positive}), we present a universal FTSS protocol for connected networks of maximum degree at most two (\emph{i.e.} rings and chains), which satisfies both minimality and priority properties. This protocol makes minimal system hypothesis with respect to the aforementioned impossibility results (maximum degree, fairness of the scheduling, etc.) and is optimal with respect to the containment radius that is achieved (\emph{no} correct processor is \emph{ever} prevented from incrementing its clock). This protocol assumes that the scheduling is central. Table~\ref{table1} provides a summary of the main results of the paper. Remaining open questions are discussed in Section~\ref{sec:conclusion}.

\begin{table}[!h]
\centering
	\begin{tabular}{|c||c|c|c|c|c|c|}
	\cline{2-7}
	\multicolumn{1}{c||}{}  & Unfair & \multicolumn{2}{c|}{Weakly fair} & \multicolumn{3}{c|}{Strongly fair}\tabularnewline
	\cline{3-7}
	 \multicolumn{1}{c||}{} &  & Minimal & Priority & \multicolumn{2}{c|}{$\Delta\geq3$} & $\Delta\leq2$ \tabularnewline
	\cline{5-6} 
	\multicolumn{1}{c||}{}  &  & & & Minimal & Priority & \tabularnewline
	\hline
	\hline
	 $f=1$ & Impossible & Impossible & Impossible & Impossible & Impossible & Possible \tabularnewline
	       & (Prop. \ref{prop:impUF}) & (Prop. \ref{prop:impWFMin}) & (Prop. \ref{prop:impWFPri}) & (Prop. \ref{prop:impSFMin}) & (Prop. \ref{prop:impSFPri}) & (Prop. \ref{prop:ftss})\tabularnewline
	\cline{1-7}
	  $f\geq2$ & \multicolumn{6}{c|}{Impossible (Prop. \ref{prop:impf2})}\tabularnewline
	\hline
	\end{tabular}
\caption{Summary of results}
\label{table1}
\end{table}

\section{Model, Problem and Specifications}\label{sec:Model}

We model the network as an undirected connected graph $G=(V,E)$ where $V$ is a set of processors and $E$ is a binary relation that denotes the ability for two processors to communicate ($(p,q)\in E$ if and only if $p$ and $q$ are \emph{neighbors}). We consider only \emph{anonymous} systems (\emph{i.e.} there exists no unique identifiers for each processor) but we assume that every processor $p$ can distinguish its neighbors and locally label them. Each processor $p$ maintains $N_{p}$, the set of its neighbors' local labels. In the following, $n$ denotes the number of processors, and $\Delta$ the maximal degree. If $p$ and $q$ are two processors of the network, we denote by $d(p,q)$ the length of the shortest path between $p$ and $q$ (\emph{i.e} the \emph{distance} from $p$ to $q$). In this paper, we assume that the network can be hit by \emph{crash faults}, \emph{i.e.} some processors can stop executing their actions permanently and without any warning to their neighborhood. Since the system is assumed to be fully asynchronous, no processor can detect if one of its neighbors is crashed or slow.	

We consider the classical local shared memory model of computation (see \cite{D00b}) where communications between neighbors are modeled by direct reading of variables instead of exchange of messages. In this model, the program of every processor consists of a set of shared variables (henceforth, referred to as \emph{variables}) and a finite set of \emph{rules}. A processor can write to its own variables only, and read its own variables and those of its neighbors. Each rule consists of: $<$\emph{label}$>$::$<$\emph{guard}$>\longrightarrow<$\emph{statement}$>$. The label of a rule is simply a name to refer the action in the text. The guard of a rule in the program of $p$ is a Boolean predicate involving variables of $p$ and its neighbors. The statement of a rule of $p$ updates one or more variables of $p$. A statement can be executed only if the corresponding guard is satisfied (\emph{i.e.} it evaluates to true). The processor rule is then \emph{enabled}, and processor $p$ is \emph{enabled} in $\gamma\in \Gamma$ if and only if at least one rule is enabled for $p$ in $\gamma$. The state of a processor is defined by the current value of its variables. The state of a system (\emph{a.k.a.} the \emph{configuration}) is the product of the states of all processors. We also refer to the state of a processor and its neighborhood as a \emph{local configuration}. We note $\Gamma$ the set of all configurations of the system. 

A \emph{step} $\gamma\rightarrow\gamma'$ is defined as an atomic execution of a non-empty subset of enabled rules in $\gamma$ that transitions the system from $\gamma$ to $\gamma'$. An execution of a protocol $\mathcal{P}$ is a maximal sequence of configurations $\epsilon=\gamma_{0}\gamma_{1}\ldots\gamma_{i}\gamma_{i+1}\ldots$ such that, $\forall i\geq0,\gamma_{i}\rightarrow\gamma_{i+1}$ is a step if $\gamma_{i+1}$ exists (else $\gamma_{i}$ is a \emph{terminal} configuration). \emph{Maximality} means that the sequence is either finite (and no action of $\mathcal{P}$ is enabled in the terminal configuration) or infinite. $\mathcal{E}$ is the set of all possible executions of $\mathcal{P}$. A processor $p$ is \emph{neutralized} in step $\gamma_{i}\rightarrow\gamma_{i+1}$ if $p$ is enabled in $\gamma_{i}$ and is \emph{not} enabled in $\gamma_{i+1}$, yet did not execute any rule in step $\gamma_{i}\rightarrow\gamma_{i+1}$.

A \emph{scheduler} (also called \emph{daemon}) is a predicate over the executions. Recall that, in any execution, each step $\gamma \longrightarrow \gamma'$ results from a \emph{non-empty} subset of enabled processors \emph{atomically executing} a rule. This subset is chosen by the scheduler. A scheduler is \emph{central} if it chooses \emph{exactly one} enabled processor in any particular step, it is \emph{distributed} if it chooses \emph{at least one} enabled processor, and \emph{locally central} if it chooses \emph{at least one} enabled processor yet ensures that no two neighboring processors are chosen concurrently. A scheduler is \emph{synchronous} if it chooses \emph{every} enabled processor in every step. A scheduler is \emph{asynchronous} if it is either central, distributed or locally central. A scheduler may also have some \emph{fairness} properties. A scheduler is \emph{strongly fair} (the strongest fairness assumption for asynchronous schedulers) if every processor that is enabled \emph{infinitely often} is eventually chosen to execute a rule. A scheduler is \emph{weakly fair} if every \emph{continuously} enabled processor is eventually chosen to execute a rule. Finally, the \emph{unfair} scheduler has the weakest fairness assumption: it only guarantees that at least one enabled processor is eventually chosen to execute a rule. As the strongly fair scheduler is the strongest fairness assumption, any problem that cannot be solved under this assumption cannot be solved for all weaker fairness assumptions. In contrast, any algorithm performing under the unfair scheduler also works for all stronger fairness assumptions. 

\paragraph{Fault-containment and Stabilization.} In a particular execution $\epsilon$, we distinguish the set of processors $V^*$ that never crash in $\epsilon$ (\emph{i.e.} the set of \emph{correct} processors). By extension, for any part $C\subset V$, the set of correct processors in $C$ is denoted by $C^{*}$. As crashed processors cannot be distinguished from slow ones by their neighbors, we assume that variables of crashed processors are always readable. 

Let $\mathcal{P}$ be a problem to solve. A \emph{specification} of $\mathcal{P}$ is a predicate that is satisfied by every algorithm solving the problem. We recall definitions about stabilization and fault-tolerance.

\begin{definition} [self-stabilization \cite{D74j}] \label{def:self}
Let $\mathcal{P}$ be a problem, and $\mathcal{\mathcal{S}_{P}}$ a specification of $\mathcal{P}$. An algorithm $\mathcal{A}$ is self-stabilizing for $\mathcal{S_{P}}$ if and only if for every configuration  $\gamma_{0}\in\Gamma$, for every execution $\epsilon=\gamma_{0}\gamma_{1}\ldots$, there exists a finite prefix $\gamma_{0}\gamma_{1}\ldots\gamma_{l}$ of $\epsilon$ such that all executions starting from $\gamma_{l}$ satisfy $\mathcal{S_{P}}$.
\end{definition}

\begin{definition} [$(f,r)-$containment \cite{NA02c}]
Let $\mathcal{P}$ be a problem, and $\mathcal{\mathcal{S}_{P}}$ a specification of $\mathcal{P}$. A configuration $\gamma\in\Gamma$ is $(f,r)-$contained for specification $\mathcal{\mathcal{S}_{P}}$ if and only if, given at most $f$ crashed processors, every execution starting from $\gamma$, always satisfies $\mathcal{\mathcal{S}_{P}}$ on the sub-graph induced by processors that are at distance $r$ or more from any crashed processor.
\end{definition}

\begin{definition} [fault-tolerant self-stabilization (FTSS) \cite{AH93c,GP93c}] \label{def:ftss}
Let $\mathcal{P}$ be a problem, $ $ and $\mathcal{\mathcal{S}_{P}}$ a specification of $\mathcal{P}$. An algorithm $\mathcal{A}$ is fault-tolerant and self-stabilizing  with radius $r$ for $f$ crashed processors (and denoted by $(f,r)-FTSS$) for specification $\mathcal{\mathcal{S}_{P}}$ if and only if, given at most $f$ crashed processors, for every configuration  $\gamma_{0}\in\Gamma$, for every execution $\epsilon=\gamma_{0}\gamma_{1}\ldots$, there exists a finite prefix $\gamma_{0}\gamma_{1}\ldots\gamma_{l}$ of $\epsilon$ such that $\gamma_{l}$ is $(f,r)-$contained for specification $\mathcal{\mathcal{S}_{P}}$.
\end{definition}

\paragraph{Unison.} In the following, $c_{p}$ is the variable of processor $p$ that represents its clock value. Values are taken in the set of natural integers (that is, the number of states is unbounded, and a total order can be defined on clock values). Note that we do not consider the case of bounded clocks in this paper. We now define two notions related to local clock synchronization: the first one restricts the safety property to correct processors, while the second one considers all processors. We call \emph{drift} between two processors $p$ and $q$  the absolute value of the difference between their clock values. In this paper, we deal with unison that is a weak clock synchronization: we must ensure that clocks are eventually "close" from each other. More precisely, two processors $p$ and $q$ are \emph{in unison} if the drift between them is no more than $1$. We say that a configuration of the system is \emph{weakly synchronized} if any correct processor is in unison with its correct neighbors. More formally, 

\begin{definition}[weakly synchronized configuration]
Let $\gamma \in \Gamma$. We say that $\gamma$ is weakly synchronized, denoted by $\gamma \in \Gamma_{1}^{*}$, if and only if :	$\forall p \in V^{*}\,\forall q \in N_{p}^{*}\,\,|c_{p}-c_{q}|\leq 1$.
\end{definition}
	
We say that a configuration of the system is \emph{uniformly weakly synchronized} if any processor is in unison with all its neighbors (even with crashed ones). More formally,
 
\begin{definition}[uniformly weakly synchronized configuration]
Let $\gamma \in \Gamma$. We say that $\gamma$ is uniformly weakly synchronized, denoted by $\gamma \in \Gamma_{1}$, if and only if : $\forall p \in V,\forall q \in N_{p},\,|c_{p}-c_{q}|\leq 1$.
\end{definition}
	
Figure \ref{fig:synchro} gives some examples of weakly synchronized configurations.

\input{figure1}

We now specify the two variants of our problem (depending whether safety property is extended to crashed processors or not). Intuitively, asynchronous unison (respectively uniform asynchronous unison) ensures that the system is eventually (and remains forever) in a weakly (respectively uniformly weakly) synchronized configuration (safety property) and that clocks of correct processors are infinitely often incremented by $1$ (liveness condition). More formally,

\begin{definition}[asynchronous unison]
Let $\gamma_{0} \in \Gamma$. An execution $\epsilon=\gamma_{0}\gamma_{1}...$ is a legitimate execution for asynchronous unison, denoted by \textbf{AU}, if and only if:\\
\indent \textbf{Safety:} $\forall i\in\mathbb{N},\gamma_{i}\in\Gamma_{1}^{*}$.\\
\indent \textbf{Liveness:} Each processor $p\in V^{*}$ increments its clock (by $1$) infinitely often in $\epsilon$.
\end{definition}
	
\begin{definition}[uniform asynchronous unison]
Let $\gamma_{0} \in \Gamma$. An execution $\epsilon=\gamma_{0}\gamma_{1}...$ is a legitimate execution for uniform asynchronous unison, denoted by \textbf{UAU}, if and only if:\\
\indent \textbf{Safety:} $\forall i\in\mathbb{N},\gamma_{i}\in\Gamma_{1}$.\\
\indent \textbf{Liveness:} Each processor $p\in V^{*}$ increments its clock (by $1$) infinitely often in $\epsilon$.
\end{definition}

Note that an algorithm that complies to specification of \textbf{UAU} also complies to that of \textbf{AU} (the converse is not true) since $\Gamma_{1}\subseteq\Gamma_{1}^{*}$ (if no processor is crashed, we have: $\Gamma_{1}=\Gamma_{1}^{*}$, but if at least one processor is crashed, we have: $\Gamma_{1}\subsetneq \Gamma_{1}^{*}$). Note also that these two specifications do not forbid \emph{decrementing} clocks. Our specification generalizes the classical unison specification~\cite{CFG92c} as any solution to the former is also a solution of ours. Unison protocols that are useful in a distributed setting are those that do not know the underlying communication graph. We refer to \emph{universal} protocols to denote the fact that a protocol that can perform on every communication graph that matches a particular predicate (\emph{e.g.} every graph of degree less than two). To disprove universality of a protocol, it is thus sufficient to exhibit a particular communication graph in its acceptance predicate such that at least one possible execution does not satisfy the specification.

We now present two key properties satisfied by all known self-stabilizing unison protocols. Those properties are used in the impossibility results presented in Section~\ref{sec:negative}. We called these properties respectively \emph{minimality} and \emph{priority}.

Minimality means that nodes maintain no extra variables but the digital clock value. This implies that the code of a minimal unison can only refer to clocks or to predefined constants. We now state the formal definition of this property.

\begin{definition}[minimality]
A unison is \emph{minimal} if and only if every processor only maintains a clock variable.
\end{definition}

Priority means that if, for a given processor, incrementing the clock value does not break the local safety predicate with its neighbors, then its clock value is actually incremented in a finite number of activations, even if no neighbor modifies its clock value. This property implies that, if a processor can increment its clock without breaking unison with its neighbors, then it does so in finite time whether its neighbors are crashed or not. This property is similar to obstruction-freedom in the sense that the protocol only has very weak constraints about progress. We formally state this property in the following definition.

\begin{definition}[priority]
A unison is \emph{priority} if and only if it satisfies the following property: if there exists a processor $p$ such that $\forall q\in N_{p},(c_{q}=c_{p}$ or $c_{q}=c_{p}+1)$ in a configuration $\gamma_{i}$, then there exists a fragment of execution $\epsilon=\gamma_{i}...\gamma_{i+k}$ such that:\\
\indent - only $p$ is chosen by the scheduler during $\epsilon$.\\
\indent - $c_{p}$ is not modified during $\gamma_{i+j}\longrightarrow\gamma_{i+j+1}$, for $j\in\{0,...,k-2\}$.\\
\indent - $c_{p}$ is incremented during $\gamma_{i+k-1}\longrightarrow\gamma_{i+k}$.
\end{definition}
	
For example, protocols proposed by \cite{BPV04c,BPV05c,CFG92c,GH90j} fall in the category of minimal and priority unison using these definitions. Another example is the protocol of \cite{PT97j} that is priority but not minimal. To our knowledge, any existing unison protocol satisfies either minimality or priority.

\section{Impossibility Results}\label{sec:negative}

In this section we present a broad class of impossibility results related to the FTSS unison. First, we show a preliminary result that states that a processor cannot modify its clock value if it has two neighbors $q$ and $q'$ with $c_{q}=c_{p}-1$ and $c_{q'}=c_{p}+1$ (Lemma \ref{lem:blocage}). This property is further used in the sequel of this section. Proposition \ref{prop:impf2} proves that there exists no $(f,r)-$FTSS algorithm for any $r$ value if $f\geq2$. Furthermore, in  Proposition \ref{prop:impUF}, we prove that there exists no $(1,r)-$FTSS algorithm for \textbf{AU} under an unfair daemon for any $r$ value. Then we study the minimal and priority asynchronous unison and  prove there exists no $(1,r)-$FTSS algorithm for minimal or priority \textbf{AU} under a weakly fair daemon for any $r$ value (Lemma \ref{lem:impWFMin}, Propositions \ref{prop:impWFMin} and \ref{prop:impWFPri}). Finally, we prove there exists no $(1,r)-$FTSS algorithm for minimal or priority \textbf{AU} under a strongly fair daemon for any $r$ value if the network has a maximal degree of at least 3 (Lemma \ref{lem:impSFMin}, Propositions \ref{prop:impSFMin} and \ref{prop:impSFPri}). In the following we assume, for the sake of generality, the most constrained scheduler (the central one).

\subsection{Preliminaries}

First, we introduce a preliminary result that shows that in any execution of a universal $(f,r)-$ftss algorithm for \textbf{AU} (under an asynchronous daemon) a processor cannot modify its clock value if it has two neighbors $q$ and $q'$ such that: $c_{q}=c_{p}-1$ and $c_{q'}=c_{p}+1$.  

\begin{lemma}\label{lem:blocage}
Let $\mathcal{A}$ be a universal $(f,r)-$ftss algorithm for \textbf{AU} (under an asynchronous daemon). Let $\gamma$ be a configuration where a processor $p$ (such that $c_{p}\geq 1$) has two neighbors $q$ and $q'$ such that: $c_{q}=c_{p}-1$ and $c_{q'}=c_{p}+1$. If $p$ executes an action of $\mathcal{A}$ during the step $\gamma\longrightarrow\gamma'$, then this action does not modify the value of $c_{p}$.   If $\mathcal{A}$ is also minimal, then the processor $p$ is not enabled for $\mathcal{A}$ in $\gamma$.
\end{lemma}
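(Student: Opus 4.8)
The plan is to argue by contradiction, exploiting the fact that crashed processors are indistinguishable from slow ones and that the scheduler is adversarial. Suppose $p$ has neighbors $q, q'$ with $H_q = H_p - 1$ and $H_{q'} = H_p + 1$ in a configuration $\gamma$, and suppose there is a step $\gamma \longrightarrow \gamma'$ in which $p$ executes a rule of $\mathcal{A}$ that changes $H_p$ to some new value $H_p'$. The key observation is that the guard of this rule, and hence its enabledness and its effect, depend only on the \emph{local configuration} of $p$: the values $H_p$, $H_q$, $H_{q'}$, and the clock values of the other neighbors of $p$ (of which, in the general setting, there may be more — but we can choose the worst case, and the argument should be robust to this). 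Since the daemon is at least as weak as the central one and the setting is asynchronous, we may build an execution in which the other processors of the network are scheduled so as to never interfere with the relevant portion of $p$'s neighborhood until we have forced a safety violation.

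First I would construct two "bad witness" configurations. Consider a configuration $\gamma_1$ that is identical to $\gamma$ on $p$ and its neighborhood but in which $q$ is crashed (and frozen at $H_q = H_p - 1$), and a configuration $\gamma_2$ identical to $\gamma$ on $p$'s neighborhood but in which $q'$ is crashed (frozen at $H_{q'} = H_p + 1$). In both cases $p$ is a correct processor, so by the FTSS property eventually every execution must become $(f,r)$-contained and, in particular, $p$ must keep incrementing its clock and maintain $|H_p - H_x| \le 1$ with every correct neighbor $x$. In $\gamma_1$, run the central daemon so that only $p$ is ever activated (using the priority-free general guarantee that $\mathcal{A}$ must still make $p$ progress — actually we only need that the \emph{same} step $\gamma \to \gamma'$ is applicable, since $p$'s local configuration is the same). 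After $p$ moves $H_p \to H_p'$, the pair $(p, q)$ has $|H_p' - (H_p - 1)|$; for safety with respect to the crashed neighbor $q$ — wait, $q$ is crashed, so under \textbf{AU} only $\Gamma_1^*$ is required and the constraint with $q$ is dropped. So instead the contradiction must come from the correct neighbor side.

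Here is the cleaner version: keep \emph{both} $q$ and $q'$ correct but slow (never activated), and activate only $p$ repeatedly. If $p$'s rule changes $H_p$ to $H_p' \ne H_p$, then either $H_p' \ge H_p + 1$, in which case $|H_p' - H_q| = |H_p' - (H_p-1)| \ge 2$ and the configuration $\gamma'$ is not in $\Gamma_1^*$ (both $p$ and $q$ are correct), violating the \textbf{Safety} clause of \textbf{AU} that must hold from the legitimate prefix onward — but we must first get into the legitimate regime. So I would actually start from an \emph{already legitimate} execution: since $\mathcal{A}$ is FTSS, from any configuration it reaches an $(f,r)$-contained configuration; place $p$, $q$, $q'$ all outside the containment radius of any crash (or take $f$ crashes far away, or $0$ crashes), reach a configuration $\delta$ where \textbf{AU} holds for the suffix, and then — since $H_q = H_p - 1$, $H_{q'} = H_p + 1$ is a perfectly legal local pattern in $\Gamma_1$ — arrange by the self-stabilization guarantee and the daemon's freedom that such a $\delta$ with this exact local pattern at $p$ is reached (any unison execution passes through configurations where a node lags one neighbor by $1$ and leads another by $1$). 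From $\delta$, activating only $p$: if $p$'s move changes $H_p$ up, safety with $q$ breaks; if it changes $H_p$ down, safety with $q'$ breaks; either way $\gamma'' \notin \Gamma_1^* \supseteq$ required set, contradicting that the suffix of a legitimate execution stays in $\Gamma_1^*$. Symmetrically for the case $H_p \ge 1$ needed so that $H_q = H_p - 1$ is a valid clock value.

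The main obstacle I anticipate is the "reachability of the witness configuration" step: one must be sure that a legitimate execution of $\mathcal{A}$ genuinely visits a configuration in which $p$ simultaneously lags $q$ by exactly $1$ and leads $q'$ by exactly $1$, \emph{and} in which $p$ is enabled and about to execute a clock-changing rule. The honest way to handle this is to take the hypothesized $\gamma$ and the hypothesized step $\gamma \to \gamma'$ as \emph{given} (the lemma quantifies over such $\gamma$), and instead worry about whether $\gamma$ lies on a legitimate suffix — which it need not. So the real engine of the proof is: from $\gamma$, by the FTSS + self-stabilization property, there is a finite prefix after which \textbf{AU} holds; but the \emph{local} situation at $p$ (its clock and its neighbors' clocks) can be preserved by freezing $q$, $q'$, and all of $p$'s neighbors while the rest of the network stabilizes (asynchrony lets us delay them arbitrarily), and then activating $p$ once forces $\gamma \to \gamma'$, which is now inside the legitimate suffix yet violates $\Gamma_1^*$ — contradiction. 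Making the "freeze the neighborhood while the rest stabilizes" step rigorous (ensuring the rest of the network can stabilize without ever needing to activate $p$'s neighbors, and that $p$'s relevant move survives) is the delicate point, and it is exactly where the asynchronous, adversarial-daemon hypothesis does the work.
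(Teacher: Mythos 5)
Your core argument is exactly the paper's: since only $p$ moves in the step $\gamma\longrightarrow\gamma'$, increasing $H_{p}$ gives $|H_{p}'-H_{q}|\geq 2$ and decreasing it gives $|H_{p}'-H_{q'}|\geq 2$, so $\gamma'\notin\Gamma_{1}^{*}$ with $q$ and $q'$ correct, contradicting the safety of \textbf{AU} --- and you rightly discard the detour of crashing a neighbor, since safety places no constraint between a correct processor and a crashed one. The ``legitimate regime'' difficulty you then labor over is a real subtlety, but the paper does not engage with it at all: it simply instantiates the witness as a crash-free configuration in $\Gamma_{1}$ and treats the one-step exit from $\Gamma_{1}$ as an immediate violation of safety, without any of the freeze-the-neighborhood-while-the-rest-stabilizes machinery you sketch.
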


\begin{proof}
Let $\mathcal{A}$ be a universal $(f,r)-$ftss algorithm for \textbf{AU} (under an asynchronous daemon). Let $G$ be a network and $\gamma$ be a configuration of $G$ such that no processor is crashed, $\gamma\in\Gamma_{1}$ and there exists a processor $p$ (such that $c_{p}\geq 1$) that has two neighbors $q$ and $q'$ such that: $c_{q}=c_{p}-1$ and $c_{q'}=c_{p}+1$.

Assume $p$ executes an action of $\mathcal{A}$ during the step $\gamma\longrightarrow\gamma'$ (and only $p$) such that this action modifies the value of $c_{p}$. Note that $c_{q}$ and $c_{q'}$ are identical in $\gamma$ and $\gamma'$. Let $\alpha$ be the value of $c_{p}$ in $\gamma$ and $\alpha'$ be the value of $c_{p}$ in $\gamma'$. Values of $\alpha$ and $\alpha'$ satisfy one of the two following relations:

\begin{description}
\item[Case 1:] $\alpha<\alpha'$.\\
This implies that $|\alpha'-c_{q}|=|\alpha'-\alpha|+|\alpha-c_{q}|>1$ (since $|\alpha'-\alpha|\geq 1$ by hypothesis and $|\alpha-c_{q}|=1$).
\item[Case 2:] $\alpha'<\alpha$.\\
This implies that $|\alpha'-c_{q'}|=|\alpha'-\alpha|+|\alpha-c_{q'}|>1$ (since $|\alpha'-\alpha|\geq 1$ by hypothesis and $|\alpha-c_{q'}|=1$).
\end{description} 

In the two above cases, $\gamma'\notin\Gamma_{1}$, hence the safety property of $\mathcal{A}$ is not satisfied.

If $\mathcal{A}$ is also minimal, then the previous result implies that  $p$ is not enabled for $\mathcal{A}$ in $\gamma$.
\end{proof}

\subsection{Impossibility Result due to the Number of Crashed Processors}

\begin{proposition}\label{prop:impf2}
For any natural number $r$, there exists no universal $(f,r)-$ftss algorithm for \textbf{AU} under an asynchronous daemon if $f\geq 2$.
\end{proposition}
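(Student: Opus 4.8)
The plan is to build a single network and a pair of executions that are indistinguishable to a correct processor, forcing a violation of either safety or liveness. I would take the simplest graph that admits two crash faults usefully: a chain $q - p - q'$ on three processors, where $q$ and $q'$ are the (at most $f\geq 2$) crashed processors and $p$ is the lone correct processor. Since crashed processors' variables remain readable and frozen, $q$ and $q'$ act as static obstacles with whatever clock values we assign in the initial configuration. By self-stabilization the algorithm must, from \emph{every} initial configuration, eventually reach an $(f,r)$-contained configuration; note that for $r \geq 1$ the containment requirement is vacuous on this graph (every processor is within distance $1$ of a crash), so the real leverage is that the \emph{self-stabilization} prefix must still reach a configuration from which \textbf{AU}'s liveness holds for $p$ — i.e. $p$ must increment $H_p$ infinitely often — while \textbf{AU}'s safety forces $\gamma_i \in \Gamma_1^*$ for all $i$; but here $V^* = \{p\}$ and $N_p^* = \emptyset$, so safety is \emph{also} vacuous. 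This means I must be more careful: with two adjacent crashes flanking $p$, the specification imposes essentially nothing, so impossibility cannot come from this graph alone.

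So instead I would use a graph where the crashed processors are \emph{not} adjacent to each other and there is a correct processor that is adjacent to both a correct processor and is squeezed. The cleanest choice: a chain $q - p - p' - q'$ (or a ring) where $q, q'$ are crashed and $p, p'$ are correct. Now $p$ has a correct neighbor $p'$ and $p'$ has a correct neighbor $p$, so safety is non-trivial: $|H_p - H_{p'}| \leq 1$ must hold throughout. Fix the crashed values at, say, $H_q = 0$ (constant) and $H_{q'} = 0$ (constant), and start $p, p'$ both at value $1$. By Lemma \ref{lem:blocage}, if at some configuration $p$ has neighbors with clock values $H_p - 1$ and $H_p + 1$, then $p$ cannot change $H_p$; I would use this to pin down the dynamics. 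The strategy is: liveness forces $p$ and $p'$ to keep incrementing, so their common value climbs arbitrarily high, but then $H_p - H_q$ grows without bound, which is harmless (since $q$ is crashed, $q \notin N_p^*$) — \emph{unless} I also arrange a correct processor adjacent to a crashed one whose value is forced to track it. This suggests the right construction has a correct processor $p$ adjacent to a \emph{correct} processor $p'$ that is adjacent to a crashed processor $q'$ with a fixed low value, while $p$ is also adjacent to a crashed $q$ with a fixed high value; then $p'$ is caught between $p$ (climbing) and $q'$ (fixed), and Lemma \ref{lem:blocage} plus the safety constraint between the two correct processors $p$ and $p'$ yields the contradiction: $p'$ can never catch up to $p$, so eventually $|H_p - H_{p'}| > 1$, violating safety, yet if $p'$ does not fall behind it cannot satisfy its own progress.

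Concretely, I would take the chain $q - p - p' - q'$, set $H_q = M$ for a large constant $M$ and $H_{q'} = 0$, both crashed and frozen, and start $p$ at $M-1$ or so and $p'$ at $1$. By liveness $p$ must increment infinitely often; by safety $|H_p - H_{p'}| \le 1$ always, so $p'$ must also climb arbitrarily high; but once $H_{p'}$ reaches $2$, processor $p'$ has neighbor $q'$ with $H_{q'} = 0 = H_{p'} - 2$, and I would show (invoking Lemma \ref{lem:blocage} at the moment $H_{p'} = H_{q'} + 1$ with $p$ already one ahead, so $p'$ is flanked by $H_{p'}-1$ and $H_{p'}+1$) that $p'$ gets blocked from incrementing past a bounded value, whence $H_p$ is forced to stop too (else safety breaks), contradicting liveness for $p$. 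The main obstacle I anticipate is the bookkeeping around the asynchronous central daemon and the fact that the specifications permit decrementation: I must rule out $p'$ "escaping" by decrementing, and ensure the adversarial schedule genuinely forces the squeeze rather than merely permitting it — this likely requires a careful case analysis on which of $p, p'$ the daemon activates, combined with a monovariant argument (e.g. $\min(H_p, H_{p'})$ is bounded below by the frozen value $H_{q'}$ via Lemma \ref{lem:blocage}, yet must grow unboundedly by liveness and safety), and possibly re-choosing $M$ large enough that the initial prefix cannot "pre-solve" the obstruction.
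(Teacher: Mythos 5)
There are two genuine gaps in your plan. First, your final construction (the four-processor chain $q\,\mbox{--}\,p\,\mbox{--}\,p'\,\mbox{--}\,q'$ with $q,q'$ crashed) does not scale with $r$: the statement quantifies over every containment radius $r$, and in your chain every correct processor is at distance $1$ from some crashed processor, so for $r\geq 2$ the $(f,r)$-containment requirement is vacuous and no contradiction can be extracted. Any proof must exhibit a correct processor at distance at least $r$ from \emph{every} crashed processor that is provably prevented from satisfying the specification. Second, even for $r\leq 1$ your ``squeeze'' only bounds $H_{p'}$ (and hence $H_p$) above; since the specification explicitly permits decrements, a processor whose clock is confined to a bounded interval can still increment infinitely often by oscillating, so bounding the clocks does not contradict liveness. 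You would additionally have to show that some far-from-crash processor is eventually \emph{frozen}, and your configuration does not force that: once $H_p$ drops back, $p'$ is no longer flanked and the pair can oscillate.

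The paper's proof repairs both defects at once with a simpler construction: take a chain $p_0,\ldots,p_{2(r+1)}$, crash the two endpoints, and initialize $H_{p_i}=i$ (a monotone staircase). Then every interior processor is flanked by neighbors at $H_{p_i}-1$ and $H_{p_i}+1$, so by Lemma~\ref{lem:blocage} none of them can ever modify its clock; since nothing moves, they stay flanked forever, and the middle processor $p_{r+1}$ --- which is at distance $r+1$ from both crashes and therefore must satisfy the specification --- never increments. This yields the contradiction with liveness directly, with no scheduler case analysis and no monovariant. Your instinct to use Lemma~\ref{lem:blocage} is right; the missing idea is to make the blocking configuration self-sustaining (every non-exempt processor flanked simultaneously) and to stretch the chain so that its center lies outside the containment radius of both faults.
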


\begin{proof}
Let $r$ be a natural number. Let $\mathcal{A}$ be a universal $(2,r)-$ftss algorithm for \textbf{AU} (under an asynchronous daemon). Consider a network represented by the following graph: $G=(V,E)$ with $V=\{p_{0},\ldots,p_{2(r+1)}\}$ and $E=\left\{\{p_{i},p_{i+1}\}|i\in\{0,\ldots,2r+1\}\right\}$.	Let $\gamma$ be the following configuration of the network: $p_{0}$ and $p_{2(r+1)}$ are crashed and $\forall i\in\{0,\ldots,2(r+1)\},c_{p_{i}}=i$ 		(all the other variables may have any value).

By Lemma \ref{lem:blocage}, no processor between $p_{2}$ and $p_{2r+1}$ can change its clock value in every execution starting from $\gamma$. This contradicts the definition of $\mathcal{A}$. Indeed, $p_{r+1}$ must eventually satisfy the specification of \textbf{AU} since the closest crashed processor is at $r$ hops away. In particular, any execution starting from $\gamma$ must contain a suffix where the clock of $p_{r+1}$ is infinitely often incremented. This contradiction shows us the result.
\end{proof}

\subsection{Impossibility Result due to Unfair Daemon}

\begin{proposition}\label{prop:impUF}
For any natural number $r$, there exists no universal $(1,r)-$ftss algorithm for \textbf{AU} under an unfair daemon.
\end{proposition}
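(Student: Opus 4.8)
The plan is to reduce the claim to the construction of a single bad execution, and then to build such an execution on a chain, using Lemma~\ref{lem:blocage} to freeze the part of the network near the crash while exploiting the weakness of the unfair daemon to keep the rest of the network occupied without ever being forced to ``unfreeze'' it. Concretely, suppose for contradiction that $\mathcal{A}$ is a $(1,r)$-ftss algorithm for \textbf{AU} under an unfair daemon. It suffices to exhibit a network $G$, one crashed processor, an initial configuration $\gamma_{0}$, and a maximal execution $\epsilon=\gamma_{0}\gamma_{1}\ldots$ of $\mathcal{A}$ in which some correct processor $p$ at distance at least $r$ from the crashed processor increments its clock only finitely many times: for every $l$, the suffix $\gamma_{l}\gamma_{l+1}\ldots$ is then an execution starting from $\gamma_{l}$ in which $p$ increments only finitely often, hence it violates the liveness clause of \textbf{AU} on the subgraph induced by the processors at distance $\geq r$ from the crash, so that, by the definition of $(f,r)$-containment, no $\gamma_{l}$ is $(1,r)$-contained --- contradicting Definition~\ref{def:ftss}. (If $\epsilon$ is finite, i.e.\ ends in a terminal configuration, the conclusion is immediate.)

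For the construction I would take $G$ to be the chain $p_{0}-p_{1}-\cdots-p_{r+1}$, declare $p_{0}$ crashed with $H_{p_{0}}=0$, and let $\gamma_{0}$ be the staircase configuration $H_{p_{i}}=i$ for $0\leq i\leq r+1$ (auxiliary variables, if any, fixed arbitrarily). Then $\gamma_{0}\in\Gamma_{1}$, the processor $p_{r}$ is correct and lies at distance exactly $r$ from the crash, and the far subgraph is the single edge $\{p_{r},p_{r+1}\}$. Whenever the local staircase $H_{p_{i-1}}=i-1$, $H_{p_{i}}=i$, $H_{p_{i+1}}=i+1$ holds, Lemma~\ref{lem:blocage} forbids $p_{i}$ (for $1\leq i\leq r$) to change $H_{p_{i}}$, and $p_{0}$, being crashed, never acts; so as long as the whole staircase is preserved, $p_{r}$'s clock is frozen and $p_{r}$ never increments, which by the reduction is all that is needed.

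It remains to build $\epsilon$ so that the staircase is preserved, and this is where the main obstacle lies. I would schedule the daemon lazily: while the staircase holds, select some blocked interior processor $p_{i}$ ($1\leq i\leq r$) whenever one is enabled (this changes no clock and preserves the staircase); if none of $p_{1},\ldots,p_{r},p_{r+1}$ is enabled, the configuration is terminal and $\epsilon$ stops with $p_{r}$ never having incremented. The only way the staircase can be forced to break is when $p_{r+1}$ becomes the \emph{unique} enabled processor and must therefore be selected: then $p_{r+1}$ cannot increment $H_{p_{r+1}}$, since that would give $|H_{p_{r+1}}-H_{p_{r}}|\geq 2$ with both endpoints inside the far subgraph, so it either leaves $H_{p_{r+1}}$ unchanged (the staircase survives) or it decrements it, and in that case one lets the daemon re-select $p_{r+1}$ until its clock returns to $r+1$, re-establishing the staircase around $p_{r}$ before $p_{r}$ is ever selected. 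The crux is exactly this last point: one must prove that, against an unfair daemon, every disturbance $p_{r+1}$ creates can be confined to $p_{r+1}$ and its clock value and eventually undone, so that $p_{r}$ is selected, if at all, only while blocked by Lemma~\ref{lem:blocage}. I would establish it by carrying an explicit invariant on $H_{p_{0}},\ldots,H_{p_{r+1}}$ along $\epsilon$ --- lengthening the chain beyond $p_{r+1}$ if it helps to give the disturbance more room away from $p_{r}$ --- together with a short case analysis on the single clock move available to $p_{r+1}$. Once this is in place, $p_{r}$ never increments in $\epsilon$, and the reduction yields the contradiction.
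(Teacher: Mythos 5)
Your opening reduction (a single maximal execution in which some correct processor at distance at least $r$ from the crash increments only finitely often refutes Definition~\ref{def:ftss}) is fine, but the construction does not go through, and the difficulty you yourself flag as ``the crux'' is not a technicality you can discharge with ``an explicit invariant'' and ``a short case analysis'' --- it is the whole problem, and it is left unsolved. Once $p_{r+1}$ is forced to move and changes its clock, say to $r$, two things happen that your schedule cannot prevent. First, nothing guarantees that $p_{r+1}$ is ever again enabled for a move restoring $H_{p_{r+1}}=r+1$: the algorithm may simply disable $p_{r+1}$ after the decrement, so ``letting the daemon re-select $p_{r+1}$ until its clock returns to $r+1$'' is not an operation you control. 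Second, in the resulting configuration $p_{r}$ is no longer sandwiched between clocks $r-1$ and $r+1$, so Lemma~\ref{lem:blocage} no longer blocks it, and $p_{r}$ may now be the only enabled \emph{selectable} processor (the crashed $p_{0}$ may have a true guard but cannot be chosen); even an unfair daemon must then activate $p_{r}$, which may increment, and the algorithm may arrange for this to recur forever. Lengthening the chain only delays the cascade. Tellingly, the property you need --- that the perturbation created at the free end can be confined and undone before it reaches $p_{r}$ --- is exactly what the paper has to \emph{assume} (via minimality or priority) to make the analogous staircase constructions work under a weakly fair daemon (Propositions~\ref{prop:impWFMin} and~\ref{prop:impWFPri}); Proposition~\ref{prop:impUF} is claimed for arbitrary algorithms, so an argument that gets this confinement for free should make you suspicious.

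The paper's proof abandons the staircase and Lemma~\ref{lem:blocage} entirely and exploits the one thing the unfair daemon really gives you. Step 1: no configuration can have a unique enabled processor $p$. Otherwise, start an execution from that very configuration with $p$ crashed (one crash is permitted, and guards depend only on the variables, so no other processor is enabled then or ever after): the configuration is frozen forever and liveness fails for correct processors at distance at least $r$ from $p$, which exist if the diameter exceeds $2r+2$. Step 2: consequently, in a crash-free execution the unfair daemon can always choose an enabled processor other than $p$, starving $p$ forever; with no crash, $p$ itself must increment infinitely often, a contradiction. If you want to salvage your write-up, replace the whole chain construction by this crash-versus-slowness indistinguishability argument.
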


\begin{proof}
Let $r$ be a natural number. Assume that there exists a universal $(1,r)-$ftss algorithm $\mathcal{A}$ for \textbf{AU} under an unfair daemon. Consider a network $G$, of diameter greater than $2r+2$ (note that in this case, at least one processor must eventually satisfy the specification of the \textbf{AU} problem). Let $p$ be a processor of $G$. Since the daemon is unfair, it can choose to never activate $p$ in an execution $\epsilon$ unless this processor becomes the only enabled processor of $G$ in a configuration of $\epsilon$ by definition.
			
For the sake of contradiction, assume that there exists a configuration $\gamma$ such that no processor is crashed and where $p$ is the only enabled processor of the network. Denote by $\gamma'$ the same configuration when $p$ is crashed. Note that the set of enabled processors is identical in $\gamma$ and $\gamma'$ by construction. As we assumed that only $p$ is enabled in $\gamma$, this implies that no correct processor is enabled in $\gamma'$. Hence, the system is deadlocked in $\gamma'$ and the specification of \textbf{AU} is not satisfied since no clock of correct processor can be updated. This contradiction implies that, for any configuration where no processor is crashed, at least two processors are enabled. 

Since there exists no configuration where $p$ is the unique enabled processor (in every execution starting from an arbitrary configuration), the unfair daemon can starve $p$ infinitely (if no crash occurs). This contradicts the liveness property of $\mathcal{A}$ since $p$ cannot update its clock in this execution.
\end{proof}

\subsection{Impossibility Results due to Weakly Fair Daemon}

In this section we prove there exists no universal $(1,r)-$ftss algorithm for minimal or priority \textbf{AU} under a weakly fair daemon for any $r$ value. 

The first impossibility result  uses the following property: if there exists a universal algorithm $\mathcal{A}$ that is $(1,r)-$ftss for minimal \textbf{AU} under a weakly fair daemon for a natural number $r$, then an arbitrary processor $p$ is not enabled for $\mathcal{A}$ if it has only one neighbor $p'$ and if $c_{p}=c_{p'}$ (proved in Lemma \ref{lem:impWFMin} formally stated below). Then, we show that $\mathcal{A}$ starves the network reduced to a two-correct-processor chain where all clock values are identical (see Proposition \ref{prop:impWFMin}).

\begin{lemma}\label{lem:impWFMin}
If there exists a universal algorithm $\mathcal{A}$ that is $(1,r)-$ftss for minimal \textbf{AU} under a weakly fair daemon for a natural number $r$, then an arbitrary processor $p$ is not enabled for $\mathcal{A}$ if it has only one neighbor $p'$ and if $c_{p}=c_{p'}$.
\end{lemma}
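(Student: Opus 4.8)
The statement to prove is Lemma~\ref{lem:impWFMin}: for a minimal \textbf{AU} algorithm $\mathcal{A}$ that is $(1,r)$-ftss under a weakly fair daemon, a processor $p$ with a single neighbor $p'$ and with $H_p = H_{p'}$ is not enabled. The plan is to argue by contradiction: suppose $p$ \emph{is} enabled in some configuration $\gamma$ with $H_p = H_{p'}$ on a suitable network, and use minimality to force $p$ to change its clock value when activated, then exhibit a weakly fair execution (and a crash placement respecting $f=1$) in which the resulting step violates safety on a processor that is more than $r$ hops from the crash. The key tool is that the scheduler is weakly fair, which is strong enough to let us keep $p$ continuously enabled and therefore eventually activate it, yet weak enough that it need not activate $p'$ in the meantime.

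\textbf{Key steps.} First I would fix a network large enough that some processor is at distance $> r$ from any prescribed crash location, e.g.\ a sufficiently long chain, and I would identify a "pendant" vertex $p$ with unique neighbor $p'$ — on a chain, an endpoint. Second, assuming $p$ is enabled in a configuration $\gamma$ with $H_p = H_{p'}$, I would invoke the minimality remark from the excerpt: any rule execution by $p$ modifies $H_p$, so after activating only $p$ in the step $\gamma \to \gamma'$ we get $H_p' \neq H_p = H_{p'}$, i.e.\ $|H_p^{\gamma'} - H_{p'}| \geq 1$ on the edge $\{p,p'\}$; this is not yet a contradiction since $|\cdot| \le 1$ is still allowed, so a single step is not enough and I must instead iterate. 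Third — the crucial combinatorial move — I would arrange that $p$ stays enabled after this step (or re-becomes enabled) while $p'$ is never selected, so that the weakly fair daemon must activate $p$ again, pushing $H_p$ further from $H_{p'}$; after at most two such activations in a consistent direction the edge $\{p,p'\}$ violates weak synchronization, contradicting safety. To make "$p$ stays enabled" rigorous I would again lean on an indistinguishability / crash argument: embed $p$ and $p'$ in a configuration where, from $p$'s local view, nothing forces it to stop, and place the single crash far away (distance $> r$) so that $p$, $p'$ are correct and the safety predicate $\Gamma_1^\ast$ genuinely applies to the edge $\{p,p'\}$.

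\textbf{Main obstacle.} The delicate part is guaranteeing that a weakly fair daemon is actually \emph{forced} to activate $p$ the (second) time — i.e.\ showing $p$ is \emph{continuously} enabled over the relevant suffix, not merely enabled infinitely often. Minimality only tells us what happens \emph{when} $p$ acts; it does not by itself keep $p$ enabled. The natural fix is to choose the surrounding configuration and the crash placement so that $p$'s neighborhood is "frozen" from $p$'s perspective (the neighbor $p'$ looks crashed/slow, and by asynchrony $p$ cannot tell), hence whatever rule was enabled at $p$ remains enabled until $p$ itself executes; combined with weak fairness this yields the forced activation. I would also need to handle the direction of the clock change: the specifications explicitly permit decrements, so $p$ might alternately increment and decrement; I would counter this by noting that after the first activation $H_p$ and $H_{p'}$ already differ, and a second activation in \emph{either} direction either restores equality (so we repeat and appeal to the liveness/progress structure) or overshoots to $|H_p - H_{p'}| \geq 2$, and a careful case analysis — of the same flavour as the Case~1/Case~2 split in the proof of Lemma~\ref{lem:blocage} — closes every branch. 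Once the lemma is in hand, Proposition~\ref{prop:impWFMin} follows by taking the two-correct-processor chain with equal clocks, where by this lemma no processor is ever enabled, so the network is starved, contradicting liveness.
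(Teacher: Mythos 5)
Your central move—activate $p$ repeatedly so that $H_p$ drifts away from $H_{p'}$ until the edge $\{p,p'\}$ violates weak synchronization—cannot work, because the contradiction you are aiming for is a \emph{safety} violation, and a hypothetically correct algorithm $\mathcal{A}$ simply never takes a step that breaks its own safety. After the first activation, $\mathcal{A}$ is entirely free to leave $p$ disabled, or to have $p$'s next move send $H_p$ back toward $H_{p'}$; nothing in minimality or weak fairness forces a second step ``in a consistent direction.'' You acknowledge this obstacle, but your proposed fix (``a careful case analysis closes every branch'') does not: the branch in which $p$ oscillates between $v$ and $v\pm 1$, or stops, is perfectly consistent with safety, so no branch ever ``overshoots to $|H_p - H_{p'}|\geq 2$.'' Moreover, in the oscillating branch you would need to extract a \emph{starvation} contradiction instead, and on the bare edge $\{p,p'\}$ you cannot: you do not know whether $p'$ is continuously enabled, so weak fairness may well force the daemon to activate $p'$, and the execution you want to build is not legal.

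The paper's proof is structured precisely to turn the oscillation into a legal weakly fair starvation. It embeds $p=p_{r+2}$, $p'=p_{r+1}$ at the end of a chain $p_0,\dots,p_{r+2}$ with $p_0$ crashed and the interior values arranged in arithmetic progression, so that Lemma~\ref{lem:blocageMin} freezes $p_1,\dots,p_r$ permanently. It then uses minimality a second time, in an essential way you do not exploit: the two mirror configurations $\gamma_1$ (values $0,\dots,r+1,r+1$) and $\gamma_2$ (values $2r+2,\dots,r+1,r+1$) are indistinguishable to $p_{r+2}$, which pins down that if $p_{r+2}$ is enabled at equality then its only non-starving behaviour is to oscillate between clock values $r$ and $r+2$. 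Finally, in $\gamma_3$ the processor $p_{r+1}$ is disabled (again by Lemma~\ref{lem:blocageMin}) exactly when $H_{p_{r+2}}=r+2$, so $p_{r+1}$ is never \emph{continuously} enabled; this is the one fact that makes the starving execution compatible with a weakly fair daemon, and it contradicts liveness at a processor at distance $r+1$ from the crash. These three ingredients—the frozen wall anchored by the crash, the $\gamma_1$/$\gamma_2$ indistinguishability, and the intermittent disabling of $p_{r+1}$—are all absent from your outline, and without them the argument does not close.
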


\begin{proof}
Let $r$ be a natural number. Let $\mathcal{A}$ be a universal $(1,r)-$ftss algorithm for the minimal \textbf{AU} under a weakly fair daemon.

Let $G$ be the network reduced to a chain of length $r+2$. Assume processors in $G$ labeled as follows: $p_{0},p_{1},\ldots,p_{r+2}$. Consider the following configurations of $G$ (see Figure \ref{fig:Figure1}):

\begin{itemize}
\item $\gamma_{1}$ defined by $\forall i\in \{0,\ldots,r+1\},c_{p_{i}}=i$ and $c_{p_{r+2}}=r+1$ and $p_{0}$ crashed.
\item $\gamma_{2}$ defined by $\forall i\in \{0,\ldots,r+1\},c_{p_{i}}=2r+2-i$ and $c_{p_{r+2}}=r+1$ and $p_{0}$ crashed.
\item $\gamma_{3}$ defined by $\forall i\in \{0,\ldots,r+2\},c_{p_{i}}=i$ and $p_{0}$ crashed.
\end{itemize}

\input{figure2}

By Lemma \ref{lem:blocage}, processors from $p_{1}$ to $p_{r}$ are not enabled in such configurations (and remain not enabled until one of the processors within $p_{0} \ldots p_{r+1}$ executes a rule).

Note that for the processor $p_{r+2}$, the configurations $\gamma_{1}$ and $\gamma_{2}$ are indistinguishable (otherwise the unison would not be minimal). We are going to prove the result by contradiction. Assume $p_{r+2}$ is enabled in $\gamma_{1}$ and $\gamma_{2}$. The safety property of $\mathcal{A}$ implies that the enabled rule for $p_{r+2}$ modifies its clock either to $r+2$ or to $r$. In the following we discuss these cases separately:

\begin{description}
\item[Case 1:] The enabled rule for $p_{r+2}$ modifies its clock into $r+2$. \\
Assume without loss of generality that $p_{r+2}$ is the only activated processor. Hence its clock takes the value $r+2$. The following cases are possible in the obtained configuration:

\begin{description}
\item[Case 1.1:] $p_{r+2}$ is not enabled.\\
If an execution started from $\gamma_{1}$, then no processor is enabled, which contradicts the liveness property of \textbf{AU}.
\item[Case 1.2 :] $p_{r+2}$ is enabled and the enabled rule modifies its clock into $r+1$.\\
Let $\epsilon$ be an execution starting from $\gamma_{1}$ where only $p_{r+2}$ is activated. Consequently, the clock of the processor $p_{r+2}$ takes infinitely the following sequence of values: $r+1,r+2$. In this execution, $p_{r+2}$ executes infinitely often while processors from $p_{0}$ to $p_{r}$ are never enabled. Note that $p_{r+1}$ is not enabled when $c_{p_{r+2}}=r+2$, hence this processor is never infinitely enabled. In conclusion, this execution is allowed by the weakly fair scheduler. Note that this execution starves $p_{r+1}$, which contradicts the liveness property of $\mathcal{A}$. 
\item[Case 1.3 :] $p_{r+2}$ is enabled and the enabled rule modifies its clock into $r$.\\
The execution of this rule leads to case 2.
\end{description}

\item[Case 2 :] The enabled rule for $p_{r+2}$ modifies its clock into $r$. \\
Assume without loss of generality that $p_{r+2}$ is the only activated processor and after its execution the new configuration satisfies one of the the following cases:

\begin{description}
\item[Case 2.1 :] $p_{r+2}$ is not enabled.\\
If an execution started from $\gamma_{2}$, then no processor is enabled, which contradicts the liveness property (the network is starved).
\item[Case 2.2 :] $p_{r+2}$ is enabled and the enabled rule modifies its clock into $r+1$.\\
Let $\epsilon$ be an execution starting from $\gamma_{2}$ that contains only actions of $p_{r+2}$ (its clock takes infinitely the following value sequence : $r+1,r$). In this execution, $p_{r+2}$ executes a rule infinitely often (by construction) and processors from $p_{0}$ to $p_{r}$ are never enabled. Note that $p_{r+1}$ is not enabled when $c_{p_{r+2}}=r$, so this processor is never infinitely enabled. In conclusion, this execution satisfies the weakly fair scheduling. 

Note that this execution starves $p_{r+1}$, which contradicts the liveness property of $\mathcal{A}$. 
\item[Case 2.3 :] $p_{r+2}$ is enabled and the enabled rule modifies its clock into $r+2$.\\
The execution of these rule leads to case 1.

\end{description}
\end{description}

Overall, the only two possible cases (cases 1.3 and 2.3) are the following: 

\begin{enumerate} 
\item $p_{r+2}$ is enabled for modifying its clock value into $r$ when $c_{p_{r+2}}=r+2$ and $c_{p_{r+1}}=r+1$.
\item  $p_{r+2}$ is enabled for modifying its clock value into $r+2$  when $c_{p_{r+2}}=r$ and $c_{p_{r+1}}=r+1$. 
\end{enumerate}

Let $\epsilon$ be an execution starting from $\gamma_{3}$ that contains only actions of $p_{r+2}$ (its clock takes infinitely the following sequence of values: $r+2,r$). In this execution, $p_{r+2}$ executes a rule infinitely often (by construction) and processors in $p_{0} \ldots p_{r}$ are never enabled. Note that $p_{r+1}$ is not enabled when $c_{p_{r+2}}=r+2$, so this processor is never infinitely enabled. In conclusion, this execution satisfies the weakly fair scheduling. 

This execution starves $p_{r+1}$, which contradicts the liveness property of $\mathcal{A}$ and proves the result.
\end{proof}

\begin{proposition}\label{prop:impWFMin}
For any natural number $r$, there exists no universal $(1,r)-$ftss algorithm for \emph{minimal} \textbf{AU} under a weakly fair daemon.
\end{proposition}

\begin{proof}
Let $r$ be a natural integer. Assume there exists a universal $(1,r)-$ftss algorithm $\mathcal{A}$ for the minimal \textbf{AU} under a weakly fair daemon. By Lemma \ref{lem:impWFMin}, an arbitrary processor $p$ is not enabled for $\mathcal{A}$ if it has only one neighbor $p'$ and if $c_{p}=c_{p'}$.

Let $G$ be a network reduced to a chain of 2 processors $p$ and $p'$. Let $\gamma$ be a configuration of $G$ where $c_{p}=c_{p'}$ with no crashed processor. Notice that no processor is enabled in $\gamma$ that contradicts the liveness property of $\mathcal{A}$ and proves the result.
\end{proof}

The second main result of this section is that there exists no universal $(1,r)-$ftss algorithm for priority \textbf{AU} under a weakly fair daemon for any natural number $r$ (see Proposition \ref{prop:impWFPri}).

We prove this result by contradiction. We construct an execution starting from the configuration $\gamma_{0}^{0}$ shown in Figure \ref{fig:Figure2} allowed by a weakly fair scheduler. We prove that this execution starves $p_{r+1}$ that contradicts the liveness property of the algorithm. 

\begin{proposition}\label{prop:impWFPri}
For any natural number $r$, there exists no universal $(1,r)-$ftss algorithm for \emph{priority} \textbf{AU} under a weakly fair daemon.
\end{proposition}

\begin{proof}
Let $r$ be a natural number. Assume that there exists a universal $(1,r)-$ftss algorithm $\mathcal{A}$ for priority \textbf{AU} under a weakly fair daemon. Let $G$ be the network reduced to a chain of length $r+2$. Assume that processors in $G$ are labeled as follows: $p_{0},p_{1},\ldots,p_{r+2}$. Let $\gamma^{0}_{0}$ be a configuration such that $p_{0}$ is crashed and $\forall i\in\{0,\ldots,r+2\},c_{p_{i}}=i$ (See Figure \ref{fig:Figure2}). Note that all the other variables may have any value.

		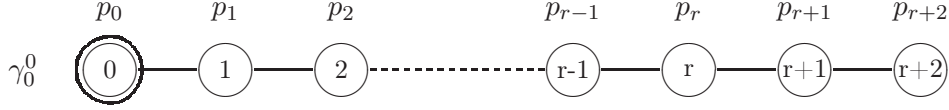
\begin{figure}
		\noindent \begin{centering}
			\ifx\JPicScale\undefined\def\JPicScale{0.77}\fi
			\unitlength \JPicScale mm
			\begin{picture}(165,15)(0,0)
			\linethickness{0.3mm}
			\put(40,5){\circle{10}}
			\linethickness{0.3mm}
			\put(60,5){\circle{10}}
			\linethickness{0.3mm}
			\put(160,5){\circle{10}}
			\linethickness{0.3mm}
			\put(140,5){\circle{10}}
			\linethickness{0.3mm}
			\put(120,5){\circle{10}}
			\linethickness{0.3mm}
			\put(100,5){\circle{10}}
			\linethickness{0.3mm}
			\put(25,5){\line(1,0){10}}
			\linethickness{0.3mm}
			\put(45,5){\line(1,0){10}}
			\linethickness{0.3mm}
			\put(105,5){\line(1,0){10}}
			\linethickness{0.3mm}
			\put(125,5){\line(1,0){10}}
			\linethickness{0.3mm}
			\put(145,5){\line(1,0){10}}
			\linethickness{0.3mm}
			\put(20,5){\circle{10}}
			\linethickness{0.3mm}
			\multiput(65,5)(1.94,0){16}{\line(1,0){0.97}}
			\linethickness{0.3mm}
			\put(25.6,4.75){\line(0,1){0.5}}
			\multiput(25.56,4.25)(0.04,0.5){1}{\line(0,1){0.5}}
			\multiput(25.48,3.75)(0.09,0.49){1}{\line(0,1){0.49}}
			\multiput(25.35,3.27)(0.13,0.49){1}{\line(0,1){0.49}}
			\multiput(25.18,2.8)(0.17,0.47){1}{\line(0,1){0.47}}
			\multiput(24.97,2.34)(0.1,0.23){2}{\line(0,1){0.23}}
			\multiput(24.72,1.91)(0.12,0.22){2}{\line(0,1){0.22}}
			\multiput(24.44,1.49)(0.14,0.21){2}{\line(0,1){0.21}}
			\multiput(24.12,1.11)(0.11,0.13){3}{\line(0,1){0.13}}
			\multiput(23.77,0.75)(0.12,0.12){3}{\line(0,1){0.12}}
			\multiput(23.39,0.43)(0.13,0.11){3}{\line(1,0){0.13}}
			\multiput(22.98,0.14)(0.2,0.14){2}{\line(1,0){0.2}}
			\multiput(22.55,-0.11)(0.21,0.13){2}{\line(1,0){0.21}}
			\multiput(22.1,-0.32)(0.22,0.11){2}{\line(1,0){0.22}}
			\multiput(21.63,-0.49)(0.47,0.17){1}{\line(1,0){0.47}}
			\multiput(21.16,-0.62)(0.48,0.13){1}{\line(1,0){0.48}}
			\multiput(20.67,-0.71)(0.49,0.09){1}{\line(1,0){0.49}}
			\multiput(20.17,-0.75)(0.49,0.04){1}{\line(1,0){0.49}}
			\put(19.68,-0.75){\line(1,0){0.5}}
			\multiput(19.18,-0.71)(0.49,-0.04){1}{\line(1,0){0.49}}
			\multiput(18.69,-0.62)(0.49,-0.09){1}{\line(1,0){0.49}}
			\multiput(18.22,-0.49)(0.48,-0.13){1}{\line(1,0){0.48}}
			\multiput(17.75,-0.32)(0.47,-0.17){1}{\line(1,0){0.47}}
			\multiput(17.3,-0.11)(0.22,-0.11){2}{\line(1,0){0.22}}
			\multiput(16.87,0.14)(0.21,-0.13){2}{\line(1,0){0.21}}
			\multiput(16.46,0.43)(0.2,-0.14){2}{\line(1,0){0.2}}
			\multiput(16.08,0.75)(0.13,-0.11){3}{\line(1,0){0.13}}
			\multiput(15.73,1.11)(0.12,-0.12){3}{\line(0,-1){0.12}}
			\multiput(15.41,1.49)(0.11,-0.13){3}{\line(0,-1){0.13}}
			\multiput(15.13,1.91)(0.14,-0.21){2}{\line(0,-1){0.21}}
			\multiput(14.88,2.34)(0.12,-0.22){2}{\line(0,-1){0.22}}
			\multiput(14.67,2.8)(0.1,-0.23){2}{\line(0,-1){0.23}}
			\multiput(14.5,3.27)(0.17,-0.47){1}{\line(0,-1){0.47}}
			\multiput(14.37,3.75)(0.13,-0.49){1}{\line(0,-1){0.49}}
			\multiput(14.29,4.25)(0.09,-0.49){1}{\line(0,-1){0.49}}
			\multiput(14.25,4.75)(0.04,-0.5){1}{\line(0,-1){0.5}}
			\put(14.25,4.75){\line(0,1){0.5}}
			\multiput(14.25,5.25)(0.04,0.5){1}{\line(0,1){0.5}}
			\multiput(14.29,5.75)(0.09,0.49){1}{\line(0,1){0.49}}
			\multiput(14.37,6.25)(0.13,0.49){1}{\line(0,1){0.49}}
			\multiput(14.5,6.73)(0.17,0.47){1}{\line(0,1){0.47}}
			\multiput(14.67,7.2)(0.1,0.23){2}{\line(0,1){0.23}}
			\multiput(14.88,7.66)(0.12,0.22){2}{\line(0,1){0.22}}
			\multiput(15.13,8.09)(0.14,0.21){2}{\line(0,1){0.21}}
			\multiput(15.41,8.51)(0.11,0.13){3}{\line(0,1){0.13}}
			\multiput(15.73,8.89)(0.12,0.12){3}{\line(0,1){0.12}}
			\multiput(16.08,9.25)(0.13,0.11){3}{\line(1,0){0.13}}
			\multiput(16.46,9.57)(0.2,0.14){2}{\line(1,0){0.2}}
			\multiput(16.87,9.86)(0.21,0.13){2}{\line(1,0){0.21}}
			\multiput(17.3,10.11)(0.22,0.11){2}{\line(1,0){0.22}}
			\multiput(17.75,10.32)(0.47,0.17){1}{\line(1,0){0.47}}
			\multiput(18.22,10.49)(0.48,0.13){1}{\line(1,0){0.48}}
			\multiput(18.69,10.62)(0.49,0.09){1}{\line(1,0){0.49}}
			\multiput(19.18,10.71)(0.49,0.04){1}{\line(1,0){0.49}}
			\put(19.68,10.75){\line(1,0){0.5}}
			\multiput(20.17,10.75)(0.49,-0.04){1}{\line(1,0){0.49}}
			\multiput(20.67,10.71)(0.49,-0.09){1}{\line(1,0){0.49}}
			\multiput(21.16,10.62)(0.48,-0.13){1}{\line(1,0){0.48}}
			\multiput(21.63,10.49)(0.47,-0.17){1}{\line(1,0){0.47}}
			\multiput(22.1,10.32)(0.22,-0.11){2}{\line(1,0){0.22}}
			\multiput(22.55,10.11)(0.21,-0.13){2}{\line(1,0){0.21}}
			\multiput(22.98,9.86)(0.2,-0.14){2}{\line(1,0){0.2}}
			\multiput(23.39,9.57)(0.13,-0.11){3}{\line(1,0){0.13}}
			\multiput(23.77,9.25)(0.12,-0.12){3}{\line(0,-1){0.12}}
			\multiput(24.12,8.89)(0.11,-0.13){3}{\line(0,-1){0.13}}
			\multiput(24.44,8.51)(0.14,-0.21){2}{\line(0,-1){0.21}}
			\multiput(24.72,8.09)(0.12,-0.22){2}{\line(0,-1){0.22}}
			\multiput(24.97,7.66)(0.1,-0.23){2}{\line(0,-1){0.23}}
			\multiput(25.18,7.2)(0.17,-0.47){1}{\line(0,-1){0.47}}
			\multiput(25.35,6.73)(0.13,-0.49){1}{\line(0,-1){0.49}}
			\multiput(25.48,6.25)(0.09,-0.49){1}{\line(0,-1){0.49}}
			\multiput(25.56,5.75)(0.04,-0.5){1}{\line(0,-1){0.5}}
			\put(5,5){\makebox(0,0)[cc]{$\gamma_{0}^{0}$}}
			\put(20,5){\makebox(0,0)[cc]{\small{0}}}
			\put(40,5){\makebox(0,0)[cc]{\small{1}}}
			\put(60,5){\makebox(0,0)[cc]{\small{2}}}
			\put(140,5){\makebox(0,0)[cc]{\small{r+1}}}
			\put(120,5){\makebox(0,0)[cc]{\small{r}}}
			\put(100,5){\makebox(0,0)[cc]{\small{r-1}}}
			\put(20,15){\makebox(0,0)[cc]{$p_{0}$}}
			\put(40,15){\makebox(0,0)[cc]{$p_{1}$}}
			\put(60,15){\makebox(0,0)[cc]{$p_{2}$}}
			\put(100,15){\makebox(0,0)[cc]{$p_{r-1}$}}
			\put(120,15){\makebox(0,0)[cc]{$p_{r}$}}
			\put(140,15){\makebox(0,0)[cc]{$p_{r+1}$}}
			\put(160,15){\makebox(0,0)[cc]{$p_{r+2}$}}
			\put(160,5){\makebox(0,0)[cc]{\small{r+2}}}
			\end{picture}
			\par\end{centering}\caption{\label{fig:Figure2}Initial configuration used in the proof of Proposition \ref{prop:impWFPri} 
																	(the numbers represent clock values and the double circles represent  crashed processor).}
		\end{figure}

We construct a fragment of execution $\epsilon_{0}'=\gamma^{0}_{0}\gamma^{0}_{1}\gamma^{0}_{2}\ldots\gamma^{0}_{r+1}$ starting from $\gamma^{0}_{0}$ such that $\forall i\in\{0,1,\ldots,r\}$, the step $\gamma^{0}_{i}\rightarrow\gamma^{0}_{i+1}$ contains only an action of $p_{i+1}$ if $p_{i+1}$ is enabled. By Lemma \ref{lem:blocage}, this fragment does not modify the clock value of any processor in $\{p_{0} \ldots p_{r+1}\}$.

We also construct a fragment of execution, $\epsilon_{0}''$, starting from $\gamma^{0}_{r+1}$ using the following cases:

\begin{description}
\item[Case 1:] $p_{r+2}$ is not enabled in $\gamma^{0}_{r+1}$.\\
Let $\epsilon_{0}''$ be $\epsilon$ (empty word).
\item[Case 2:] $p_{r+2}$ is enabled in $\gamma^{0}_{r+1}$.\\
We distinguish now the following sub-cases:
\begin{description}
\item[Case 2.1:] There exists a rule of $p_{r+2}$ enabled in $\gamma^{0}_{r+1}$ that does not modify the clock value of $p_{r+2}$.\\
Let $\epsilon_{0}''$ be $\gamma^{0}_{r+1}\gamma^{0}_{r+2}$ where step $\gamma^{0}_{r+1}\rightarrow\gamma^{0}_{r+2}$ contains only the execution of this rule by $p_{r+2}$.
\item[Case 2.2:] Any enabled rule of $p_{r+2}$ in $\gamma^{0}_{r+1}$ modifies its clock value.\\
Note that the safety property of $\mathcal{A}$ implies that the clock of $p_{r+2}$ takes the value $r$ or $r+1$. Let us study the following cases.
\begin{description}
\item[Case 2.2.1:]  There exists a rule of $p_{r+2}$ enabled in $\gamma^{0}_{r+1}$ that modifies its clock value into $r+1$.\\
Since $\mathcal{A}$ is a priority unison, there exists by definition a fragment of execution $\epsilon_{0}''=\gamma^{0}_{r+1}\gamma^{0}_{r+2}\ldots\gamma^{0}_{r+k}$  that contains only actions of $p_{r+2}$ such that (i) $p_{r+2}$ executes one of the rules that modifies its clock value into $r+1$ in the step $\gamma^{0}_{r+1}\rightarrow\gamma^{0}_{r+2}$ (ii) in the steps from $\gamma^{0}_{r+2}$ to $\gamma^{0}_{r+k-1}$ the clock value of $p_{r+2}$ is not modified while (iii) in the step	$\gamma^{0}_{r+k-1}\rightarrow\gamma^{0}_{r+k}$ the clock value of $p_{r+2}$ is incremented.
\item[Case 2.2.2:]  Any enabled rule of $p_{r+2}$ in $\gamma^{0}_{r+1}$ modifies its clock value into $r$.\\
Since $\mathcal{A}$ is a priority unison, there exists by definition a fragment of execution $\epsilon_{a}=\gamma^{0}_{r+1}\gamma^{0}_{r+2}\ldots\gamma^{0}_{r+k}$ that contains only actions of $p_{r+2}$ such that (i) $p_{r+2}$ executes one of the rules that modifies its clock value into $r$ in the step $\gamma^{0}_{r+1}\rightarrow\gamma^{0}_{r+2}$ (ii) in the steps from $\gamma^{0}_{r+2}$ to $\gamma^{0}_{r+k-1}$ the clock value of $p_{r+2}$ is not modified and (iii) in the step	$\gamma^{0}_{r+k-1}\rightarrow\gamma^{0}_{r+k}$ the clock of $p_{r+2}$ takes the value $r+1$. 	
	
Since $\mathcal{A}$ is a priority unison, there exists by definition a fragment of execution $\epsilon_{b}=\gamma^{0}_{r+k}\gamma^{0}_{r+k+1}\ldots\gamma^{0}_{r+j}$ that contains only actions of $p_{r+2}$ such that (i) in the steps from $\gamma^{0}_{r+k}$ to $\gamma^{0}_{r+j-1}$ the clock value of $p_{r+2}$ is not modified and (ii) in the step	$\gamma^{0}_{r+j-1}\rightarrow\gamma^{0}_{r+j}$ the clock value of $p_{r+2}$ is incremented.						

Let $\epsilon_{0}''$ be $\epsilon_{a}\epsilon_{b}$.
\end{description}
\end{description}
\end{description}

In all cases, we construct a fragment of execution $\epsilon_{0}=\epsilon_{0}'\epsilon_{0}''$ such that its last configuration (let us denote it by $\gamma_{0}^{1}$) satisfies: the value of any clock is identical to the one in $\gamma^{0}_{0}$ (the others variables may have changed). Then, we can reiterate the reasoning and obtain a fragment of execution $\epsilon_{1},\epsilon_{2}\ldots$ (respectively starting from $\gamma_{0}^{1},\gamma_{0}^{2},\ldots$) that satisfies the same property.

We finally obtain an execution $\epsilon=\epsilon_{0}\epsilon_{1}\ldots$ that satisfies:

\begin{itemize}
\item No processor is infinitely enabled without executing a rule (since all enabled processors in $\gamma^{i}_{0}$ execute a rule or are neutralized during $\epsilon_{i}$). Consequently $\epsilon$ is an execution that satisfies the  weakly fair scheduling.
\item The clock of processor $p_{r+1}$ never changes (whereas $d(p_{0},p_{r+1})=r+1$).
\end{itemize}

This execution contradicts the liveness property of $\mathcal{A}$ that is a $(1,r)-$ftss algorithm for priority \textbf{AU} under a weakly fair daemon by hypothesis. 
\end{proof}

\subsection{Impossibility Results due to Strongly Fair Daemon}

In this section we prove that there exists no universal $(1,r)-$ftss algorithm for minimal or priority \textbf{AU} under a strongly fair daemon if the degree of the network is at least 3. 

In order to prove the first impossibility result, we use the following property: if a processor $p$ has only one neighbor $q$ such that $c_{q}=r+1$	and if $|c_{p}-c_{q}|\leq 1$, then $p$ is enabled in any universal $(1,r)-$ftss algorithm for minimal \textbf{AU} (see Lemma \ref{lem:impSFMin}). Then we construct a strongly fair infinite execution that starves a processor such that the closest crashed processor is at more than $r$ hops away. This execution contradicts the liveness property of the \textbf{AU} problem (see Proposition \ref{prop:impSFMin}).

\begin{lemma}\label{lem:impSFMin}
Let $\mathcal{A}$ be a universal $(1,r)-$ftss algorithm for minimal \textbf{AU}. If a processor $p$ has only one neighbor $q$ such that $c_{q}=r+1$ and if $|c_{p}-c_{q}|\leq 1$, then $p$ is enabled in $\mathcal{A}$. 
\end{lemma}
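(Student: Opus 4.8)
The plan is to reduce, via the \emph{minimality} of $\mathcal{A}$, to a configuration-existence statement. Since in a minimal unison the only variable of a processor is its clock, whether a degree-one processor $p$ with unique neighbor $q$ is enabled depends solely on the ordered pair $(H_p,H_q)$; hence it suffices to exhibit, for each of the three values $H_p\in\{r,r+1,r+2\}$ allowed by $H_q=r+1$ and $|H_p-H_q|\le 1$, \emph{one} network together with \emph{one} reachable configuration in which a degree-one processor carries that clock value next to a clock-$(r+1)$ neighbor and is forced to be enabled. The basic gadget is a chain one endpoint of which is the (single) crashed processor, with clock values assigned so that Lemma~\ref{lem:blocageMin} applies to every intermediate processor (each being ``sandwiched'' between a neighbor of clock $H-1$ and one of clock $H+1$, hence not enabled): if the far endpoint were not enabled either, the configuration would be \emph{terminal}, so by Definition~\ref{def:ftss} that very configuration would itself have to be $(1,r)$-contained, contradicting the \textbf{Liveness} clause of \textbf{AU} for that endpoint, which I place at distance $r+2\ge r$ from the crash.

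Concretely, for $H_p=r+2$ I use the chain $p_0,\dots,p_{r+2}$ with $p_0$ crashed and $H_{p_i}=i$: each of $p_1,\dots,p_{r+1}$ is sandwiched (in particular $p_{r+1}$ between $p_r$ of clock $r$ and $p_{r+2}$ of clock $r+2$), so only $p_{r+2}$ can be enabled, and it must be; here $(H_p,H_q)=(r+2,r+1)$. For $H_p=r$ I use the same chain with the descending labelling $H_{p_i}=2r+2-i$, so $H_{p_{r+1}}=r+1$ and $H_{p_{r+2}}=r$; now $p_{r+1}$ is sandwiched between $p_r$ of clock $r+2$ and $p_{r+2}$ of clock $r$, the processors $p_1,\dots,p_r$ are sandwiched as before, and the terminal-configuration argument again forces $p_{r+2}$ to be enabled, giving $(H_p,H_q)=(r,r+1)$. (Both labellings keep the intermediate clocks $\ge 1$, as Lemma~\ref{lem:blocageMin} requires, for every $r\ge 0$.)

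The remaining value $H_p=r+1$ is the crux, and the above trick does not apply: on a chain the neighbor $q$ of $p$ would then have clock $r+1$ equal to $H_p$, so $q$ cannot be sandwiched. Instead I would argue by contradiction starting from the configuration $\gamma_1$ of Figure~\ref{fig:Figure1} ($p_0$ crashed, $H_{p_i}=i$ for $i\le r+1$, $H_{p_{r+2}}=r+1$): assume $p_{r+2}$ is not enabled there, and follow the forced evolution of the system. In $\gamma_1$ the only processor that can move is $p_{r+1}$, and \textbf{Safety} together with minimality leave it exactly one possible action, namely decrement to $r$; this creates a ``defect'' which, again by forced single moves (and re-sandwiching each processor the defect leaves behind, so that it is re-blocked by Lemma~\ref{lem:blocageMin}), either travels all the way to the crashed $p_0$ and sticks --- a terminal configuration, contradiction --- or reflects and returns, producing an (eventually) cyclic execution. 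One then checks that this execution can be realised in a \emph{weakly} (hence strongly) fair way, and that in it some correct processor at distance $\ge r$ from $p_0$ --- either $p_{r+2}$ itself, in the branch where $p_{r+2}$ stays unenabled, or the intermediate processor $p_r$ at distance exactly $r$, in the branch where $p_{r+2}$ does get enabled and $p_{r+1},p_{r+2}$ then oscillate while $p_r$ is left untouched --- increments its clock only finitely often, contradicting \textbf{Liveness}.

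I expect the bookkeeping of this last case to be the real difficulty: one must enumerate the branches according to which of the ``frontier'' processors the daemon is allowed (or, by strong fairness, obliged) to activate, verify in each branch that the constructed execution is genuinely fair --- which comes down to checking that each perturbed processor is re-sandwiched, hence re-blocked by Lemma~\ref{lem:blocageMin}, as soon as the defect passes it, so that no processor is enabled infinitely often without being chosen --- and, where needed, invoke the already-established case $H_p=r$ (an endpoint of clock $r$ next to a clock-$(r+1)$ neighbor is enabled) to close the cycle. Everything else is routine.
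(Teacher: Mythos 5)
Your treatment of the cases $H_p=r$ and $H_p=r+2$ is correct and coincides with the paper's: the ascending and descending chains with $p_0$ crashed block every intermediate processor by Lemma~\ref{lem:blocageMin}, so the far endpoint must be enabled on pain of a terminal configuration, and minimality lets you transport this conclusion to an arbitrary network. The gap is in the case $H_p=r+1$, which you yourself single out as ``the crux'' and for which you only sketch a defect-propagation argument starting from $\gamma_1$ of Figure~\ref{fig:Figure1}. That sketch is not a proof: after the forced decrement of $p_{r+1}$ to $r$, the processor $p_r$ is no longer sandwiched, the endpoint $p_{r+2}$ now faces a neighbour of clock $r$ (a local configuration not covered by any of your three cases, so nothing is ``forced'' about its next move), and you would have to enumerate all branches, verify strong fairness of each constructed infinite execution, and locate a starved processor at distance at least $r$ from the crash --- essentially redoing the work of Proposition~\ref{prop:impSFMin} inside the lemma that is supposed to feed it. None of this is carried out, and you acknowledge as much.

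The case $H_p=r+1$ actually has a one-line proof that needs no crashed processor and no fairness analysis, and it is the one the paper uses (configuration $\gamma_2$ of Figure~\ref{fig:Figure3}): take the network reduced to the two-processor chain $q,p$ with $H_q=H_p=r+1$ and nobody crashed. By minimality the local configuration of a degree-one processor is exactly the pair formed by its own clock and its neighbour's clock, so $p$ and $q$ have identical local configurations $(r+1,r+1)$ and are enabled or not simultaneously. If $p$ were not enabled, no processor would be, the configuration would be terminal, and both correct processors would be starved, contradicting the liveness of \textbf{AU}. This is the same two-processor instance as in Proposition~\ref{prop:impWFMin}, used here in the contrapositive direction; spotting it removes the only genuinely hard case from your plan.
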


\begin{proof}
Assume that there exists a universal algorithm $\mathcal{A}$ that is $(1,r)-$ftss for minimal \textbf{AU}. Let $G$ be a network that executes $\mathcal{A}$ and that contains at least one processor $p$ that has only one neighbor $q$. Assume that $c_{q}=r+1$ and $|c_{p}-c_{q}|\leq 1$. Then, we have:

\begin{enumerate}
\item If $c_{p}=r$, then $p$ is enabled for at least one rule of $\mathcal{A}$. Otherwise, all processors are starved in the network reduced to the chain $p_{0},\ldots,p_{r},q,p$ in the configuration $\gamma_{1}$ defined by $\forall i\in\{0,\ldots,r\},c_{p_{i}}=2r+2-i$, $c_{q}=r+1$, $c_{p}=r$ where $p_{0}$ is crashed (see Figure \ref{fig:Figure3}) since no correct processor is enabled (by Lemma \ref{lem:blocage}).
\item If $c_{p}=r+1$, then $p$ is enabled for at least one rule of $\mathcal{A}$. Otherwise, all processors are starved in the network reduced to the	chain $q,p$ in the configuration $\gamma_{2}$ defined by $c_{q}=c_{p}=r+1$ and where no processor is crashed (see Figure \ref{fig:Figure3}). Indeed, the symmetry of the configuration implies that $q$ is enabled if and only if $p$ is enabled.
\item If $c_{p}=r+2$, then $p$ is enabled for at least one rule of $\mathcal{A}$. Otherwise, all processors are starved in the network reduced to the	chain $p_{0},\ldots,p_{r},q,p$ in the configuration $\gamma_{3}$ defined by $\forall i\in\{0,\ldots,r\},c_{p_{i}}i$, $c_{q}=r+1$, $c_{p}=r+2$ and $p_{0}$ crashed (see Figure \ref{fig:Figure3}) since no correct processor is enabled (by Lemma \ref{lem:blocage}).
\end{enumerate}

\input{figure4}

\end{proof}

\begin{proposition}\label{prop:impSFMin}
For any natural number $r$, there exists no universal $(1,r)-$ftss algorithm for \emph{minimal} \textbf{AU} under a strongly fair daemon if the system has a maximal degree of at least 3.
\end{proposition}

\begin{proof}
Let $r$ be a natural number. Assume that there exists a universal $(1,r)-$ftss algorithm $\mathcal{A}$ for the minimal \textbf{AU} under a strongly fair daemon in a network with a degree of at least 3. Let $G$ be the network defined by: $V=\{p_{0},\ldots,p_{r+1},q,q'\}$ and $E=\{\{p_{i},p_{i+1}\},i\in\{0,\ldots,r\}\}\cup\{\{p_{r+1},q\},\{p_{r+1},q'\}\}$.

As $\mathcal{A}$ is deterministic and the system anonymous, $q$ and $q'$ must behave identically if they have the same clock value (in this case, their local configurations are identical). If $c_{p_{r+1}}=r+1$ and $|c_{p_{r+1}}-c_{q}|\leq 1$, there exists three local configurations for $q$: (1) $c_{q}=r$, (2) $c_{q}=r+1$ or (3) $c_{q}=r+2$ (the same property holds for $q'$).

By Lemma \ref{lem:impSFMin}, processor $q$ (respectively $q'$) is enabled in any configuration where $c_{p_{r+1}}=r+1$ and $|c_{p_{r+1}}-c_{q}|\leq 1$ (respectively $|c_{p_{r+1}}-c_{q'}|\leq 1$). Moreover, in this case, the enabled rule for $q$ (respectively $q'$) modifies its clock into a value in $\{r,r+1,r+2\}-\{c_{q}\}$ (respectively $\{r,r+1,r+2\}-\{c_{q'}\}$) by the safety property of $\mathcal{A}$.

For each of the three possible local configurations for $q$ or $q'$ (studied in the proof of Lemma \ref{lem:impSFMin}), $\mathcal{A}$ can only allow $2$ moves. Hence, there exists $8$ possible moves for $\mathcal{A}$. Let us denote each of these possibilities by a triplet $(a,b,c)$ where $a$, $b$ and $c$ are the clock value of $q$ after the allowed move when $c_{q}=r$, $c_{q}=r+1$, and $c_{q}=r+2$ respectively. Note that, due to the determinism of $\mathcal{A}$, moves allowed for $q'$ and $q$ are identical. There exists the following cases:

\input{figure5}

\begin{description}
\item[Case 1:] $(r+1,r,r)$\\
Let $\gamma_{1}$ be the configuration of $G$ defined by: $\forall i\in\{0,\ldots,r+1\},c_{p_{i}}=2r+2-i$, $c_{q}=r+1$ and $c_{q'}=r$ and $p_{0}$ crashed (see Figure \ref{fig:Figure4}). Note that only $q$ and $q'$ are enabled (by Lemma \ref{lem:blocage}). Assume $q$ executes. Hence, its clock takes the value $r$. By Lemma \ref{lem:blocage}, only $q$ and $q'$ are enabled. Assume now that $q'$ executes. Its clock takes the value $r+1$. This configuration is identical to $\gamma_{1}$ (since processors are anonymous), we can repeat the above reasoning in order to obtain an infinite execution where processors $p_{1}, \ldots, p_{r+1}$ are never enabled (see Figure \ref{fig:Figure5} for an illustration when $r=1$).

\input{figure6}

\item[Case 2:] $(r+1,r+2,r)$\\
Let $\gamma_{2}$ be the configuration of $G$ defined by: $\forall i\in\{0,\ldots,r+1\},c_{p_{i}}i$, $c_{q}=r$ and $c_{q'}=r+2$ and $p_{0}$ crashed (see Figure \ref{fig:Figure4}). Note that only $q$ and $q'$ are enabled (by Lemma \ref{lem:blocage}). Assume $q$ executes. Its clock takes the value $r+1$. By Lemma \ref{lem:blocage}, only $q$ and $q'$ are enabled. Assume $q$ executes its rule again.  Its clock takes the value $r+2$. By Lemma \ref{lem:blocage}, only $q$ and $q'$ are enabled. Assume now that $q'$ executes its rule. Its clock takes the value $r$. This configuration is identical to $\gamma_{2}$ (since processors are anonymous). We can repeat the reasoning in order to obtain an infinite execution where processors in $p_{1},\ldots,p_{r+1}$ are never enabled.

\item[Case 3:] $(r+1,r,r+1)$\\
Similar to the reasoning of case 1.

\item[Case 4:] $(r+1,r+2,r+1)$\\
Let $\gamma_{3}$ be the configuration of $G$ defined by: $\forall i\in\{0,\ldots,r+1\},c_{p_{i}}=i$, $c_{q}=r+2$ and $c_{q'}=r+1$ and where $p_{0}$ is crashed (see Figure \ref{fig:Figure4}). Note that only $q$ and $q'$ are enabled (by Lemma \ref{lem:blocage}). Assume $q'$ executes its rule. Its clock takes the value $r+2$. By Lemma \ref{lem:blocage}, only $q$ and $q'$ are enabled. Assume now that $q$ executes its rule. Its clock takes the value $r+1$. This configuration is identical to $\gamma_{3}$ (since processors are anonymous). We can repeat the reasoning in order to obtain an infinite execution where processors in $p_{1}, \ldots, p_{r+1}$ are never enabled.

\item[Case 5:] $(r+2,r,r)$\\
Let $\gamma_{2}$ be the configuration of $G$ as defined in the case 2 above. Note that only $q$ and $q'$ are enabled (by Lemma \ref{lem:blocage}). Assume $q$ executes its rule. Its clock takes the value $r+2$. By Lemma \ref{lem:blocage}, only $q$ and $q'$ are enabled. Assume now that $q'$ executes its rule. Its clock takes the value $r$. This configuration is identical to $\gamma_{2}$ (since processors are anonymous). We can repeat the reasoning in order to obtain an infinite execution where processors $p_{1}, \ldots, p_{r+1}$ are never enabled.

\item[Case 6:] $(r+2,r+2,r)$\\
The reasoning is similar to the case 5.

\item[Case 7:] $(r+2,r,r+1)$\\
Let $\gamma_{2}$ be the configuration of $G$ as defined in the case 2 above. Note that only $q$ and $q'$ are enabled (by Lemma \ref{lem:blocage}). Assume $q$ executes its rule. Its clock takes the value  $r+2$. By Lemma \ref{lem:blocage}, only $q$ and $q'$ are enabled. Assume $q'$ executes its rule. Its clock takes the value $r+1$. By Lemma \ref{lem:blocage}, only $q$ and $q'$ are enabled. Assume $q'$ executes again its rule. Its clock takes the value $r$. This configuration is identical to $\gamma_{2}$ (since processors are anonymous). We can repeat the above scenario in order to obtain an infinite execution where processors $p_{1},\ldots,p_{r+1}$ are never enabled.

\item[Case 8:] $(r+2,r+2,r+1)$\\
The proof is similar to the case 4.

\end{description}

Overall, we can construct an infinite execution where processor $p_{0}$ is crashed, processors from $p_{1}$ to $p_{r+1}$ are never enabled and processors $q$ and $q'$ execute a rule infinitely often. This execution satisfies the strongly fair scheduling. Notice that in this execution  $p_{r+1}$ is never enabled, hence it is starved. This contradicts the liveness property of $\mathcal{A}$ and proves the result.
\end{proof}

The second main result of this section is that there exists no universal $(1,r)-$ftss algorithm for priority \textbf{AU} under a strongly fair daemon for any natural number $r$ if the degree of the graph modeling the network is at least 3. (see Proposition \ref{prop:impSFPri}).

We prove this result by contradiction. We construct an execution starting from the configuration $\gamma_{0}^{0}$ of Figure \ref{fig:Figure6} satisfying the  strongly fair scheduling that starves $p_{r+1}$, which contradicts the liveness of the algorithm. 

\begin{proposition}\label{prop:impSFPri}
For any natural number $r$, there exists no universal $(1,r)-$ftss algorithm for \emph{priority} \textbf{AU} under a strongly fair daemon if the system has a maximal degree of at least 3.
\end{proposition}

\begin{proof}
Let $r$ be a natural number. Assume that there exists a universal $(1,r)-$ftss algorithm $\mathcal{A}$  for priority \textbf{AU} under a strongly fair daemon even if the graph modeling the network has a degree of at least 3. Let $G$ be the network defined by: $V=\{p_{0},\ldots,p_{r+1},q,q'\}$ and $E=\{\{p_{i},p_{i+1}\},i\in\{0,\ldots,r\}\}\cup\{\{p_{r+1},q\},\{p_{r+1},q'\}\}$. Note that $G$ has a degree equal to $3$.

		\begin{figure}
			\noindent \begin{centering}
			\ifx\JPicScale\undefined\def\JPicScale{0.77}\fi
			\unitlength \JPicScale mm
			\begin{picture}(170,30)(0,0)
			\linethickness{0.3mm}
			\put(20,15){\circle{10}}
			\linethickness{0.3mm}
			\put(40,15){\circle{10}}
			\linethickness{0.3mm}
			\put(60,15){\circle{10}}
			\linethickness{0.3mm}
			\put(160,25){\circle{10}}
			\linethickness{0.3mm}
			\put(140,15){\circle{10}}
			\linethickness{0.3mm}
			\put(120,15){\circle{10}}
			\linethickness{0.3mm}
			\put(100,15){\circle{10}}
			\linethickness{0.3mm}
			\put(25,15){\line(1,0){10}}
			\linethickness{0.3mm}
			\put(45,15){\line(1,0){10}}
			\linethickness{0.3mm}
			\put(105,15){\line(1,0){10}}
			\linethickness{0.3mm}
			\put(125,15){\line(1,0){10}}
			\linethickness{0.3mm}
			\multiput(145,15)(0.12,-0.12){83}{\line(1,0){0.12}}
			\linethickness{0.3mm}
			\multiput(65,15)(1.94,0){16}{\line(1,0){0.97}}
			\put(20,25){\makebox(0,0)[cc]{$p_{0}$}}
			\put(40,25){\makebox(0,0)[cc]{$p_{1}$}}
			\put(60,25){\makebox(0,0)[cc]{$p_{2}$}}
			\put(100,25){\makebox(0,0)[cc]{$p_{r-1}$}}
			\put(120,25){\makebox(0,0)[cc]{$p_{r}$}}
			\linethickness{0.3mm}
			\put(25.75,14.75){\line(0,1){0.5}}
			\multiput(25.71,14.25)(0.04,0.5){1}{\line(0,1){0.5}}
			\multiput(25.63,13.75)(0.09,0.49){1}{\line(0,1){0.49}}
			\multiput(25.5,13.27)(0.13,0.49){1}{\line(0,1){0.49}}
			\multiput(25.33,12.8)(0.17,0.47){1}{\line(0,1){0.47}}
			\multiput(25.12,12.34)(0.1,0.23){2}{\line(0,1){0.23}}
			\multiput(24.87,11.91)(0.12,0.22){2}{\line(0,1){0.22}}
			\multiput(24.59,11.49)(0.14,0.21){2}{\line(0,1){0.21}}
			\multiput(24.27,11.11)(0.11,0.13){3}{\line(0,1){0.13}}
			\multiput(23.92,10.75)(0.12,0.12){3}{\line(0,1){0.12}}
			\multiput(23.54,10.43)(0.13,0.11){3}{\line(1,0){0.13}}
			\multiput(23.13,10.14)(0.2,0.14){2}{\line(1,0){0.2}}
			\multiput(22.7,9.89)(0.21,0.13){2}{\line(1,0){0.21}}
			\multiput(22.25,9.68)(0.22,0.11){2}{\line(1,0){0.22}}
			\multiput(21.78,9.51)(0.47,0.17){1}{\line(1,0){0.47}}
			\multiput(21.31,9.38)(0.48,0.13){1}{\line(1,0){0.48}}
			\multiput(20.82,9.29)(0.49,0.09){1}{\line(1,0){0.49}}
			\multiput(20.32,9.25)(0.49,0.04){1}{\line(1,0){0.49}}
			\put(19.83,9.25){\line(1,0){0.5}}
			\multiput(19.33,9.29)(0.49,-0.04){1}{\line(1,0){0.49}}
			\multiput(18.84,9.38)(0.49,-0.09){1}{\line(1,0){0.49}}
			\multiput(18.37,9.51)(0.48,-0.13){1}{\line(1,0){0.48}}
			\multiput(17.9,9.68)(0.47,-0.17){1}{\line(1,0){0.47}}
			\multiput(17.45,9.89)(0.22,-0.11){2}{\line(1,0){0.22}}
			\multiput(17.02,10.14)(0.21,-0.13){2}{\line(1,0){0.21}}
			\multiput(16.61,10.43)(0.2,-0.14){2}{\line(1,0){0.2}}
			\multiput(16.23,10.75)(0.13,-0.11){3}{\line(1,0){0.13}}
			\multiput(15.88,11.11)(0.12,-0.12){3}{\line(0,-1){0.12}}
			\multiput(15.56,11.49)(0.11,-0.13){3}{\line(0,-1){0.13}}
			\multiput(15.28,11.91)(0.14,-0.21){2}{\line(0,-1){0.21}}
			\multiput(15.03,12.34)(0.12,-0.22){2}{\line(0,-1){0.22}}
			\multiput(14.82,12.8)(0.1,-0.23){2}{\line(0,-1){0.23}}
			\multiput(14.65,13.27)(0.17,-0.47){1}{\line(0,-1){0.47}}
			\multiput(14.52,13.75)(0.13,-0.49){1}{\line(0,-1){0.49}}
			\multiput(14.44,14.25)(0.09,-0.49){1}{\line(0,-1){0.49}}
			\multiput(14.4,14.75)(0.04,-0.5){1}{\line(0,-1){0.5}}
			\put(14.4,14.75){\line(0,1){0.5}}
			\multiput(14.4,15.25)(0.04,0.5){1}{\line(0,1){0.5}}
			\multiput(14.44,15.75)(0.09,0.49){1}{\line(0,1){0.49}}
			\multiput(14.52,16.25)(0.13,0.49){1}{\line(0,1){0.49}}
			\multiput(14.65,16.73)(0.17,0.47){1}{\line(0,1){0.47}}
			\multiput(14.82,17.2)(0.1,0.23){2}{\line(0,1){0.23}}
			\multiput(15.03,17.66)(0.12,0.22){2}{\line(0,1){0.22}}
			\multiput(15.28,18.09)(0.14,0.21){2}{\line(0,1){0.21}}
			\multiput(15.56,18.51)(0.11,0.13){3}{\line(0,1){0.13}}
			\multiput(15.88,18.89)(0.12,0.12){3}{\line(0,1){0.12}}
			\multiput(16.23,19.25)(0.13,0.11){3}{\line(1,0){0.13}}
			\multiput(16.61,19.57)(0.2,0.14){2}{\line(1,0){0.2}}
			\multiput(17.02,19.86)(0.21,0.13){2}{\line(1,0){0.21}}
			\multiput(17.45,20.11)(0.22,0.11){2}{\line(1,0){0.22}}
			\multiput(17.9,20.32)(0.47,0.17){1}{\line(1,0){0.47}}
			\multiput(18.37,20.49)(0.48,0.13){1}{\line(1,0){0.48}}
			\multiput(18.84,20.62)(0.49,0.09){1}{\line(1,0){0.49}}
			\multiput(19.33,20.71)(0.49,0.04){1}{\line(1,0){0.49}}
			\put(19.83,20.75){\line(1,0){0.5}}
			\multiput(20.32,20.75)(0.49,-0.04){1}{\line(1,0){0.49}}
			\multiput(20.82,20.71)(0.49,-0.09){1}{\line(1,0){0.49}}
			\multiput(21.31,20.62)(0.48,-0.13){1}{\line(1,0){0.48}}
			\multiput(21.78,20.49)(0.47,-0.17){1}{\line(1,0){0.47}}
			\multiput(22.25,20.32)(0.22,-0.11){2}{\line(1,0){0.22}}
			\multiput(22.7,20.11)(0.21,-0.13){2}{\line(1,0){0.21}}
			\multiput(23.13,19.86)(0.2,-0.14){2}{\line(1,0){0.2}}
			\multiput(23.54,19.57)(0.13,-0.11){3}{\line(1,0){0.13}}
			\multiput(23.92,19.25)(0.12,-0.12){3}{\line(0,-1){0.12}}
			\multiput(24.27,18.89)(0.11,-0.13){3}{\line(0,-1){0.13}}
			\multiput(24.59,18.51)(0.14,-0.21){2}{\line(0,-1){0.21}}
			\multiput(24.87,18.09)(0.12,-0.22){2}{\line(0,-1){0.22}}
			\multiput(25.12,17.66)(0.1,-0.23){2}{\line(0,-1){0.23}}
			\multiput(25.33,17.2)(0.17,-0.47){1}{\line(0,-1){0.47}}
			\multiput(25.5,16.73)(0.13,-0.49){1}{\line(0,-1){0.49}}
			\multiput(25.63,16.25)(0.09,-0.49){1}{\line(0,-1){0.49}}
			\multiput(25.71,15.75)(0.04,-0.5){1}{\line(0,-1){0.5}}
			\put(140,25){\makebox(0,0)[cc]{$p_{r+1}$}}
			\linethickness{0.3mm}
			\put(160,5){\circle{10}}
			\linethickness{0.3mm}
			\multiput(145,15)(0.12,0.12){83}{\line(1,0){0.12}}
			\put(170,25){\makebox(0,0)[cc]{$q$}}
			\put(170,5){\makebox(0,0)[cc]{$q'$}}
			\put(20,15){\makebox(0,0)[cc]{\small{0}}}
			\put(40,15){\makebox(0,0)[cc]{\small{1}}}
			\put(60,15){\makebox(0,0)[cc]{\small{2}}}
			\put(100,15){\makebox(0,0)[cc]{\small{r-1}}}
			\put(120,15){\makebox(0,0)[cc]{\small{r}}}
			\put(140,15){\makebox(0,0)[cc]{\small{r+1}}}
			\put(160,25){\makebox(0,0)[cc]{\small{r+2}}}
			\put(5,15){\makebox(0,0)[cc]{$\gamma_{0}^{0}$}}
			\put(160,5){\makebox(0,0)[cc]{\small{r+2}}}
			\end{picture}
			\par\end{centering}\caption{\label{fig:Figure6}The initial configuration for the proof of Proposition \ref{prop:impSFPri} (the numbers represent clock values and the double circles represent  crashed processors).}
\end{figure}
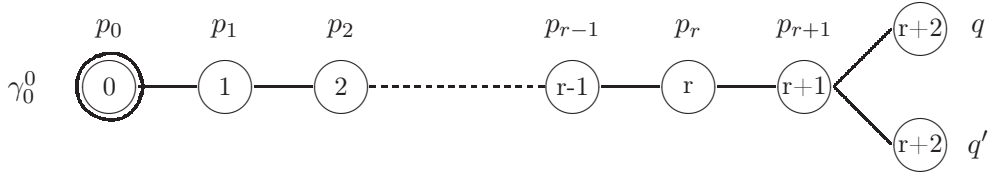

Let $\gamma_{0}^{0}$ be the following configuration: $\forall i\in \{0,\ldots,r+1\}, c_{p_{i}}=i$, $c_{q}=c_{q'}=r+2$ and $p_{0}$  crashed (see Figure \ref{fig:Figure6}). Note that, for any execution $\epsilon$ starting from $\gamma_{0}^{0}$, one of the processors $q$ and $q'$ must be enabled to modify its clock in a finite time	(otherwise the network would be starved following Lemma \ref{lem:blocage}). This implies the existence of a fragment of execution $\epsilon_{a}^{0}=\gamma_{0}^{0}\gamma_{1}^{0}\ldots\gamma_{k}^{0}$ with the following properties:

\begin{enumerate}
\item $k\geq 1$ if there exists $i\in \{0,\ldots,r+1\}$ such that $p_{i}$ is enabled in $\gamma_{0}^{0}$, $k=0$ otherwise;
\item $\epsilon_{a}^{0}$ contains no modification of clock values;
\item $\gamma_{k}^{0}$ is the first configuration where $q$ or $q'$ is enabled to modify its clock value.
\end{enumerate}

Assume now that the scheduling of $\epsilon_{a}^{0}$ satisfies the following property: at each step, the daemon chooses the processor that was last activated among enabled processors. Note that this scenario is compatible with a strongly fair scheduling.

Let us study the following cases:

\begin{description}
\item[Case 1:] $q$ is enabled in $\gamma_{k}^{0}$ for a modification of its clock value. The safety property of $\mathcal{A}$ implies that the value of $c_{q}$ should be modified  either to $r$ or to $r+1$.

\begin{description}
\item[Case 1.1:] The value of $c_{q}$ is modified to $r$.\\
Since $\mathcal{A}$ is a priority unison, there exists by definition a fragment of execution $\epsilon_{b1}^{0}=\gamma_{k}^{0}\gamma_{k+1}^{0}\ldots\gamma_{k+r}^{0}$ that contains only actions of $q$ such that (i) in the steps from $\gamma_{k}^{0}$ to $\gamma_{k+r-1}^{0}$ the clock value of $q$ is not modified and (ii) in the step	$\gamma^{0}_{k+r-1}\rightarrow\gamma^{0}_{k+r}$ the clock value of $q$ is incremented.

Since $\mathcal{A}$ is a priority unison, there exists by definition a fragment of execution $\epsilon_{b2}^{0}=\gamma_{k+r}^{0}\gamma_{k+r+1}^{0}\ldots\gamma_{k+j}^{0}$ that contains only executions of a rule by $q$ such that (i) in the steps from $\gamma_{k+r}^{0}$ to $\gamma_{k+j-1}^{0}$ the clock value of $q$ is not modified and (ii) in the step	$\gamma^{0}_{k+j-1}\rightarrow\gamma^{0}_{k+j}$ the clock value of $q$ is incremented.

Let  $\epsilon_{b}^{0}$ be $\epsilon_{b1}^{0}\epsilon_{b2}^{0}$.

\item[Case 1.2:] The value of $c_{q}$ is modified to $r+1$.\\
Since $\mathcal{A}$ is a priority unison, there exists by definition a fragment of execution $\epsilon_{b}^{0}=\gamma_{k}^{0}\gamma_{k+1}^{0}\ldots\gamma_{k+r}^{0}$ that contains only actions of $q$ such that (i) in the steps from $\gamma_{k}^{0}$ to $\gamma_{k+r-1}^{0}$ the clock value of $q$ is not modified and (ii) in the step	$\gamma^{0}_{k+r-1}\rightarrow\gamma^{0}_{k+r}$ the clock value of $q$ increments.
\end{description}

If $q'$ is enabled in the last configuration of $\epsilon_{b}^{0}$ \footnote{In this case, $q'$ was already enabled in the last configuration of $\epsilon_{a}^{0}$}, we can construct $\epsilon_{c}^{0}$ similarly to $\epsilon_{b}^{0}$ using processor $q'$. Otherwise, let $\epsilon_{c}^{0}$ be $\epsilon$ (the empty word).

\item[Case 2:] $q'$ is enabled in $\gamma_{k}^{0}$ for a modification of its clock value.\\
We can construct $\epsilon_{b}^{0}$ and $\epsilon_{c}^{0}$ similar to the case 1 by reversing the roles of $q$ and $q'$.
\end{description}

Let us define $\epsilon^{0}=\epsilon_{a}^{0}\epsilon_{b}^{0}\epsilon_{c}^{0}$. Notice that the clock values are identical in the first and the last configuration of $\epsilon^{0}$. This implies that we can infinitely repeat the previous reasoning in order to obtain an infinite execution $\epsilon=\epsilon^{0}\epsilon^{1}\ldots$ that satisfies:

\begin{itemize}
\item No correct processor is infinitely often enabled without executing a rule (since $q$ and $q'$ execute a rule infinitely often and others processors are chosen in function of their last execution of a rule, which implies that an infinitely often enabled processor executes a rule in a finite time). This execution satisfies a strongly fair scheduling.
\item The clock value of $p_{r+1}$ is never modified (whereas $d(p_{0},p_{r+1})=r+1$).
\end{itemize}

This execution contradicts the liveness property of $\mathcal{A}$, which implies the result.
\end{proof}

\section{A Universal Protocol for Chains and Rings}\label{sec:positive} 

In the following we consider the only remaining possibility results (see Table~\ref{table1}) that are related to asynchronous unison on chains and rings (\emph{i.e.} networks with a degree inferior to 3). In this section, we propose an $(1,0)-$FTSS algorithm for \textbf{AU} under a locally central strongly fair daemon. The proposed algorithm is both minimal and priority.

The main difference between our protocol and the many self-stabilizing unison algorithms existing in the literature~\cite{DH07cb,D97j,DW97j,PT97j} is that our correction rules use averaging rather than maximizing or minimizing, in order to not favor the clock value of a particular neighbor. Indeed, using a maximum or a minimum strategy could make the chosen neighbor prevent stabilization if it is crashed. The averaging idea was previously studied in \cite{LR04c} in a non-stabilizing fault-free setting. \cite{LM85j} uses also average to perform clock synchronization in a non-stabilizing Byzantine-tolerant system. The main difference with our approach is that authors of \cite{LM85j} reject values that are too far from others (in order to avoid values proposed by Byzantine neighbors). In our case, we cannot reject any value due to the arbitrary initial clock values and the small number of available values (as our protocol operates on chains or rings, each processor has at most two neighbors).

\begin{algorithm}
\caption{($\mathcal{UFTSS}$): universal $(1,0)$-FTSS \textbf{AU} for chains and rings.}\label{algo:uftss}
\begin{small}		
\textbf{Data:}\\
- $N_{p}$: set of neighbors of $p$.\\
\textbf{Variable:}\\
- $c_{p}$: natural integer representing the clock of the processor.\\
\textbf{Macros:}\\
- For $A\subseteq \mathbb{N}$ and $a\in \mathbb{N}$, $next(A,a)=\begin{cases}
a+1\;if\;a+1\in A\\
min\{A\}\;otherwise\end{cases}$.\\
- For $q\in N_{p}$, $poss(q)=\begin{cases}
\{c_{q}-1,c_{q},c_{q}+1\}\;if\;c_{q}\neq 0\\
\{c_{q},c_{q}+1\}\;otherwise\end{cases}$.\\
- $Inter(N_{p})=\underset{q\in N_{p}}{\bigcap}poss(q)$.\\
\textbf{Rules:}\\
\textbf{/{*}} Normal rule \textbf{{*}/}\\
$\boldsymbol{(N)} :: |Inter(N_{p})|\geq 2\longrightarrow c_{p}:=next\left(Inter(N_{p}),c_{p}\right)$\\
\textbf{/{*}} Correction rules \textbf{{*}/}\\
$\boldsymbol{(C_{1})} :: \left(|Inter(N_{p})|=0\right)\wedge\left(c_{p}\neq \left\lceil \frac{\underset{q\in N_{p}}{\sum}c_{q}}{|N_{p}|}\right\rceil\right)\wedge \left(c_{p}\neq \left\lfloor \frac{\underset{q\in N_{p}}{\sum}c_{q}}{|N_{p}|}\right\rfloor\right)\longrightarrow c_{p}:=\left\lfloor \frac{\underset{q\in N_{p}}{\sum}c_{q}}{|N_{p}|}\right\rfloor$\\
$\boldsymbol{(C_{2})} :: (Inter(N_{p})=\{h\})\wedge(c_{p}\neq h)\longrightarrow c_{p}:=h$
\end{small}		
\end{algorithm}

\subsection{Our Algorithm}

The main idea of our algorithm follows. Each processor checks if it is ``locally synchronized'', \emph{i.e.} if the drift between its clock value and the clock values of its neighbors does not exceed 1. If a processor $p$ is ``locally synchronized'', it modifies its clock value in a finite time in order to preserve this property. Otherwise, $p$ corrects its clock value in finite time. 

More precisely, each processor $p$ has only one variable: its clock denoted by $c_{p}$. At each step, every processor $p$ computes a set of \emph{possible clock values},	\emph{i.e.} the set of clock values that have a drift of at most 1 with respect to all neighbors of $p$ (note that computing this set relies only on the clock values of $p$'s neighbors, but not on the one of $p$). This set is denoted by $Inter(N_{p})$.

Then, the following cases may appear:\\
\indent - $|Inter(N_{p})|=0$, then $p$ has two neighbors and the drift between their clock values is strictly greater than 2. In this case, $p$ is enabled to take the average value between these two clock values if its clock does not have yet this value.\\
\indent - $|Inter(N_{p})|=1$, then $p$ has two neighbors and the drift between their clock values is exactly 2. In this case, $p$ is enabled	to take the average value between these two clock values if its clock does not have yet this value.\\
\indent - $|Inter(N_{p})|\geq 2$, then $p$ has one neighbor or the drift between the clock values of its two neighbors is strictly less than 2. In this case, $p$ is enabled to modify its clock value as follows: if $c_{p}+1\in Inter(N_{p})$, then $c_{p}$ is modified to $c_{p}+1$, otherwise $c_{p}$ is modified to $min\{Inter(N_{p})\}$.

The reader can find some examples of execution of our algorithm in Figures \ref{fig:Exemple1} to \ref{fig:Exemple4}.

The detailed description of our solution is proposed in Algorithm \ref{algo:uftss}.

\input{figure8_11}

\subsection{Correction Proof Road Map}

In this section, we present the key ideas in order to prove the correctness of our algorithm.

First, we introduce some useful notations:

\begin{notation}
Let $p$ be a processor. If $q$ denotes one of its neighbors, we denote the other neighbor by $\bar{q}$ (if this neighbor exists). 
\end{notation}

\begin{notation}
We denote the value of $c_{p}$ for a processor $p$ in a configuration $\gamma_{i}$ by $\left(c_{p}\right)^{\gamma_{i}}$.

We denote the value of $Inter(N_{p})$ for a processor $p$ in a configuration $\gamma_{i}$ by $\left(Inter(N_{p})\right)^{\gamma_{i}}$.
\end{notation}

In order to prove that $\mathcal{UFTSS}$ is a $(1,0)$-ftss algorithm for \textbf{AU} under a locally central strongly fair daemon on a chain and on a ring (see Proposition \ref{prop:ftss}), we prove in the sequel the following properties:

\begin{enumerate}
\item $\mathcal{UFTSS}$ is a self-stabilizing algorithm for \textbf{AU} under a locally central strongly fair daemon on a chain (Proposition \ref{prop:SSChaine}).
\item $\mathcal{UFTSS}$ is a self-stabilizing algorithm for \textbf{AU} under a locally central strongly fair daemon on a chain even if one processor is crashed in the initial configuration (Proposition \ref{prop:SSCrashChaine}).
\item $\mathcal{UFTSS}$ is a self-stabilizing algorithm for \textbf{AU} under a locally central strongly fair daemon on a ring (Proposition \ref{prop:SSAnneau}).
\item $\mathcal{UFTSS}$ is a self-stabilizing algorithm for \textbf{AU} under a locally central strongly fair daemon on a ring even if one processor is crashed in the initial configuration (Proposition \ref{prop:SSCrashAnneau}).
\end{enumerate}

The proof of each of these 4 propositions is deduced from 3 lemmas as follows:

\begin{enumerate}
\item Firstly, we prove that $\mathcal{UFTSS}$ satisfies the closure of the safety of \textbf{UAU} under the considered hypothesis (\emph{i.e.} if there exists a configuration $\gamma$ such that $\gamma\in\Gamma_{1}$, then every configuration $\gamma'$ reachable from $\gamma$ satisfies: $\gamma'\in\Gamma_{1}$, see respectively Lemma \ref{lem:clotureChaine}, \ref{lem:clotureCrashChaine}, \ref{lem:clotureAnneau}, and \ref{lem:clotureCrashAnneau}).

The idea of the proof is as follows: we first prove that only the normal rule is enabled in such a configuration and then, we show that this rule ensures the closure of the safety property.
\item Secondly, we prove that $\mathcal{UFTSS}$ satisfies liveness of \textbf{UAU} under the considered hypothesis in every execution starting from a legitimate configuration (\emph{i.e.} every (correct) processor increments infinitely often its clock, see respectively Lemma \ref{lem:vivaciteChaine}, \ref{lem:vivaciteCrashChaine}, \ref{lem:vivaciteAnneau}, and \ref{lem:vivaciteCrashAnneau}).

This proof is done in the following way: we first show that every (correct) processor executes infinitely often the normal rule in every execution starting from a configuration $\gamma\in\Gamma_{1}$ and then, we show that if a processor executes infinitely often the normal rule, it increments its clock in a finite time.
\item Finally, we prove that $\mathcal{UFTSS}$ converges to a legitimate configuration of \textbf{UAU} under the considered hypothesis in every execution (\emph{i.e.} there exists a configuration $\gamma\in\Gamma_{1}$ in every execution, see respectively Lemma \ref{lem:convergenceChaine}, \ref{lem:convergenceCrashChaine}, \ref{lem:convergenceAnneau}, and \ref{lem:convergenceCrashAnneau}).

In order to complete this proof we study a potential function.
\end{enumerate}

\subsection{Proof on a Chain}

In this section, we assume that our algorithm is executed on a chain under a strongly fair locally central daemon. In the following we prove that $\mathcal{UFTSS}$ is a FTSS \textbf{UAU} (that implies that it is a FTSS \textbf{AU}) under these assumptions. The proof contains two major steps:

\begin{itemize}
\item First, we prove that our algorithm is self-stabilizing.
\item Second, we  prove that our algorithm is self-stabilizing even if the initial configuration contains a crashed processor.
\end{itemize}

\subsubsection{Proof of Self-Stabilization}

In this section, $\epsilon=\gamma_{0},\gamma_{1}\ldots$ denotes an execution of $\mathcal{UFTSS}$ where there is no crash.

Firstly, we are going to prove the closure of our algorithm.

\begin{lemma}\label{lem:clotureChaine}
If there exists $i\geq 0$ such that $\gamma_{i}\in \Gamma_{1}$, then $\gamma_{i+1}\in\Gamma_{1}$.
\end{lemma}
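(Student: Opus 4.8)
The plan is to follow the roadmap sketched earlier: first show that in any configuration of $\Gamma_{1}$ the only rule that can be enabled at any processor is the normal rule $(N)$, and then show that any step in which some processors execute $(N)$ (under the locally central daemon) leaves the configuration in $\Gamma_{1}$.

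First I would argue that, if $\gamma_{i}\in\Gamma_{1}$, then no processor is enabled for $(C_{1})$ or $(C_{2})$. On a chain every processor $p$ has degree one or two. If $p$ has a single neighbor $q$, then $Inter(N_{p})=poss(q)$ has at least two elements, so the guards of $(C_{1})$ (requiring $|Inter(N_{p})|=0$) and $(C_{2})$ (requiring $|Inter(N_{p})|=1$) both fail. If $p$ has two neighbors $q$ and $\bar q$, then $\gamma_{i}\in\Gamma_{1}$ gives $|H_{q}-H_{\bar q}|\le|H_{q}-H_{p}|+|H_{p}-H_{\bar q}|\le 2$, so $Inter(N_{p})=poss(q)\cap poss(\bar q)$ is non-empty and $(C_{1})$ is disabled. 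A short inspection of the possible clock values, including the boundary case where a neighbor's clock equals $0$, shows that $|H_{q}-H_{\bar q}|\le 1$ forces $|Inter(N_{p})|\ge 2$, disabling $(C_{2})$ as well; and if $|H_{q}-H_{\bar q}|=2$ then $H_{p}$ is forced to be the value strictly between $H_{q}$ and $H_{\bar q}$, so $Inter(N_{p})=\{H_{p}\}$ and the guard $H_{p}\ne h$ of $(C_{2})$ fails. In all cases only $(N)$ can fire.

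Next I would check closure under a step $\gamma_{i}\rightarrow\gamma_{i+1}$ in which a set $S$ of processors execute $(N)$; since the daemon is locally central, $S$ contains no two neighbors. Consider an edge $\{p,q\}$ of the chain. If neither endpoint lies in $S$, the two clocks are unchanged and $|H_{p}-H_{q}|\le 1$ still holds. If exactly one endpoint, say $p$, lies in $S$, its new clock value belongs to $Inter(N_{p})\subseteq poss(q)\subseteq\{H_{q}-1,H_{q},H_{q}+1\}$, and $H_{q}$ is unchanged because $q\notin S$; hence $|H_{p}-H_{q}|\le 1$ in $\gamma_{i+1}$. The case where both endpoints lie in $S$ does not occur. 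Therefore every edge still satisfies the drift-one constraint, i.e. $\gamma_{i+1}\in\Gamma_{1}$.

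The delicate part is the case analysis in the first step: the boundary behaviour of $poss$ when a neighbor's clock is $0$, and the observation that whenever $|Inter(N_{p})|=1$ on a chain its unique element is exactly $p$'s current clock value, so $(C_{2})$ remains disabled. The second step is then routine, and it is precisely there that the locally central hypothesis is needed, since two neighbors executing $(N)$ simultaneously could otherwise violate the constraint.
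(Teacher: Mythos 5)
Your proposal is correct and follows essentially the same two-step structure as the paper's proof: first rule out $(C_{1})$ and $(C_{2})$ in any configuration of $\Gamma_{1}$, then use $Inter(N_{p})\subseteq poss(q)$ together with the locally central daemon to show the drift-one constraint is preserved across the step. The only cosmetic difference is that you discharge the $(C_{2})$ case by direct enumeration of $|H_{q}-H_{\bar q}|$ (including the $H_{q}=0$ boundary), whereas the paper argues by contradiction that $Inter(N_{p})=\{h\}$ with $H_{p}\neq h$ would already violate $\gamma_{i}\in\Gamma_{1}$.
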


\begin{proof}
Assume that there exists $i\geq 0$ such that $\gamma_{i}\in \Gamma_{1}$. This implies that $\forall p\in V,$ $\left(Inter(N_{p})\right)^{\gamma_{i}}\neq\emptyset$ and then the rule $\boldsymbol{(C_{1})}$ is not enabled in $\gamma_{i}$. Assume rule $\boldsymbol{(C_{2})}$ is enabled in $\gamma_{i}$. This implies that $\left(Inter(N_{p})\right)^{\gamma_{i}}=\{h\}$ and that $\left(c_{p}\right)^{\gamma_{i}}\neq h$. Then, we have $\gamma_{i}\notin \Gamma_{1}$ (since if $\left(c_{p}\right)^{\gamma_{i}}\neq h$, then the following holds: $\exists q\in N_{p},|\left(c_{p}\right)^{\gamma_{i}}-\left(c_{q}\right)^{\gamma_{i}}|\geq 2$). This contradiction allows us to conclude that the enabled processors in $\gamma_{i}$ are only enabled for rule $\boldsymbol{(N)}$.

Let $p$ be a processor that executes a rule during the step $\gamma_{i}\rightarrow\gamma_{i+1}$. Since the daemon is locally central, neighbors of $p$ do not execute a rule during this step (their clock values remain identical). Assume the following holds: $\exists q\in N_{p}, |\left(c_{p}\right)^{\gamma_{i+1}}-\left(c_{q}\right)^{\gamma_{i+1}}|\geq 2$. By construction of rule $\boldsymbol{(N)}$, $\left(c_{p}\right)^{\gamma_{i+1}}\in\left(Inter(N_{p})\right)^{\gamma_{i}}$. By construction, $\left(Inter(N_{p})\right)^{\gamma_{i}}\subseteq \{\left(c_{q}\right)^{\gamma_{i}}-1,\left(c_{q}\right)^{\gamma_{i}},\left(c_{q}\right)^{\gamma_{i}}+1\}$. It follows that $\forall q\in N_{p},|\left(c_{p}\right)^{\gamma_{i+1}}-\left(c_{q}\right)^{\gamma_{i+1}}|< 2$ for each processor $p$ that executes a rule (since $\forall q\in N_{p},\left(c_{q}\right)^{\gamma_{i}}=\left(c_{q}\right)^{\gamma_{i+1}}$). Overall, $\gamma_{i+1}\in \Gamma_{1}$.
\end{proof}

Secondly, we prove the liveness of our algorithm.

\begin{lemma}\label{lem:prelemVivaciteChaine}
$\forall\gamma_{0}\in\Gamma_{1},\forall p\in V,$ $p$ executes the rule $\boldsymbol{(N)}$ in a finite time in any execution starting from $\gamma_{0}$.
\end{lemma}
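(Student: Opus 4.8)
The plan is to combine the closure result (Lemma \ref{lem:clotureChaine}) with a contradiction argument that exploits strong fairness. First I would record the consequences of starting in $\Gamma_{1}$: by Lemma \ref{lem:clotureChaine} every configuration of the execution $\epsilon=\gamma_{0}\gamma_{1}\ldots$ lies in $\Gamma_{1}$ and the only rule that can ever be enabled is $\boldsymbol{(N)}$; moreover, since $\mathcal{UFTSS}$ is minimal, every execution of $\boldsymbol{(N)}$ strictly changes the clock value of the executing processor. I would also observe that in any $\gamma\in\Gamma_{1}$ and for every $p$ we have $H_{p}\in Inter(N_{p})$ (the drift bound is exactly $H_{p}\in poss(q)$ for each $q\in N_{p}$), that $Inter(N_{p})$ is always a contiguous set of integers (an intersection of intervals of length at most $3$), and that a chain endpoint always satisfies $|Inter(N_{p})|\geq 2$, hence is enabled for $\boldsymbol{(N)}$ by a clock-changing move. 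In particular no $\Gamma_{1}$ configuration is terminal, so $\epsilon$ is infinite.

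Next I would argue by contradiction: suppose some processor never executes $\boldsymbol{(N)}$, hence (as $\boldsymbol{(N)}$ is the only candidate) never executes any rule. Let $D$ be the set of processors that execute only finitely many rules in $\epsilon$. Then $D\neq\emptyset$, and $D\neq V$ since an endpoint is enabled in every configuration and strong fairness activates an infinitely-often-enabled processor infinitely often. So $D$ has a boundary processor $r\in D$ with a neighbor $s\notin D$. Fix $\gamma_{J}$ beyond which no processor of $D$ is ever activated; from $\gamma_{J}$ on, $H_{r}$ is frozen at some value $c$, while $s$ executes $\boldsymbol{(N)}$ --- hence changes its clock --- infinitely often, and the $\Gamma_{1}$ invariant keeps $H_{s}$ inside $\{c-1,c,c+1\}$ throughout.

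The heart of the proof, and the step I expect to be the most delicate, is showing that $s$ must take the value $c$ infinitely often. Here I would use that $Inter(N_{s})\subseteq poss(r)$ is contiguous and contains $H_{s}$, so $next(Inter(N_{s}),\cdot)$ can never move $H_{s}$ by $2$: a clock-changing move from $H_{s}=c-1$ is forced to $c$, and a move from $H_{s}=c+1$ lands on $c-1$ or $c$. Since $s$ changes its clock infinitely often while staying in $\{c-1,c,c+1\}$, it cannot avoid $c$ forever, so $H_{s}=c$ holds infinitely often (the case $c=0$ is identical with $\{0,1\}$ replacing $\{c-1,c,c+1\}$). Finally, whenever $H_{s}=c$, the $\Gamma_{1}$ invariant applied to $r$ and its other neighbor $r'$ (which exists, else $r$ would be an endpoint and not in $D$) gives $|H_{r'}-c|\leq 1$, whence $Inter(N_{r})=poss(s)\cap poss(r')$ is a contiguous interval of size at least $2$, i.e. $r$ is enabled for $\boldsymbol{(N)}$. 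Thus $r$ is enabled infinitely often, and strong fairness forces $r$ to execute infinitely often, contradicting $r\in D$. Hence $D=\emptyset$: every processor executes $\boldsymbol{(N)}$ infinitely often, in particular within finite time, which proves the lemma.
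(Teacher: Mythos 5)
Your proof is correct, but it is organized quite differently from the paper's. The paper argues by induction on the distance from $p$ to the nearest end of the chain: the base case is your observation that an endpoint always satisfies $|Inter(N_{p})|\geq 2$, and the inductive step supposes $p$ is never enabled, deduces that $|Inter(N_{p})|=1$ in every configuration (so the two neighbors of $p$ sit at $H_{p}-1$ and $H_{p}+1$ in one of two symmetric patterns), and reaches a contradiction either with the inductive hypothesis applied to the neighbor closer to the end of the chain or with the definition of the macro $next$ at the first activation of the other neighbor. You instead take the set $D$ of finitely-activated processors, pick a boundary pair $r\in D$, $s\notin D$, freeze $H_{r}$ at a value $c$, and show via the transition structure of $next$ on $\{c-1,c,c+1\}$ that $H_{s}$ must return to $c$ infinitely often, at which moments $r$ is enabled, so strong fairness contradicts $r\in D$. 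Your route dispenses with the induction and directly delivers the stronger statement that every processor executes $\boldsymbol{(N)}$ infinitely often, which the paper only asserts as a corollary after the lemma; in exchange, your key step leans on a somewhat more detailed case analysis of $next$ and on the contiguity of $Inter(N_{p})$, both of which you correctly identify and verify (including the boundary case $c=0$). The underlying ingredients --- closure of $\Gamma_{1}$, the fact that an $\boldsymbol{(N)}$ move always changes the clock, and strong fairness --- are the same in both proofs, so the difference is one of decomposition rather than of substance.
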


\begin{proof}
Let $\gamma\in\Gamma_{1}$. Following Lemma \ref{lem:clotureChaine}, the only  enabled rule is $\boldsymbol{(N)}$. We prove this property by induction. To this end, we define the following property (where $p$ denotes a processor):\\
$\boldsymbol{(P_{d})}$ : If $d$ is the distance between $p$ and the closest end of the chain, then $p$ executes the rule $\boldsymbol{(N)}$ in a finite time	in any execution starting from $\gamma_{0}$.

\begin{description}
\item[Initialization ($d=0$):] For all $\gamma'$, configurations contained in an execution starting from $\gamma_{0}$, $p$ is enabled for rule $\boldsymbol{(N)}$ since $\left(Inter(N_{p})\right)^{\gamma'}\supseteq\{\left(c_{q}\right)^{\gamma'},\left(c_{q}\right)^{\gamma'}+1\}$ where $q$ denotes the only neighbor of $p$.	Since the daemon is strongly fair, $p$ executes a rule in a finite time.

\item[Induction ($d>0$):] Assume $\boldsymbol{(P_{d-1})}$ is true. Denote by $q$ the neighbor of $p$ that is on the half-chain starting with $p$ of length $d$. Assume for the sake of contradiction that $p$ is never enabled for rule $\boldsymbol{(N)}$ in an execution $\epsilon$ starting from $\gamma_{0}\in\Gamma_{1}$. This implies that, for each configuration $\gamma'$ that is contained in $\epsilon$, we have $|\left(Inter(N_{p})\right)^{\gamma'}|=1$ (since if $|\left(Inter(N_{p})\right)^{\gamma'}|=0$, then $\gamma'\notin\Gamma_{1}$). Let us study the following cases (remind that, if $q$ denotes a neighbor of $p$, $\bar{q}$ denotes the second neighbor of $p$ as stated in Notation 1):

\begin{description}
\item[Case 1:] $\bar{q}$ never executes a rule in $\epsilon$ (this implies that $c_{\bar{q}}$ is a constant in $\epsilon$).\\
It follows that: $\forall\gamma'\in\epsilon, \left(c_{q}\right)^{\gamma'}=\left(c_{\bar{q}}\right)^{\gamma'}+2$ or	$\left(c_{q}\right)^{\gamma'}=\left(c_{\bar{q}}\right)^{\gamma'}-2$.

As $q$ executes infinitely often rule $\boldsymbol{(N)}$, its clock moves at each activation from a value to the other. Hence, we have $\left(c_{q}\right)^{\gamma'}=\left(c_{\bar{q}}\right)^{\gamma'}-2$ in a finite time. Then, the next activation of $q$ moves its clock value to $\left(c_{\bar{q}}\right)^{\gamma'}+2$, which is contradictory with the construction of macro $next$ (it can only increment the clock value by 1 or decrement it).

\item[Case 2:] $\bar{q}$ executes a rule in a finite time in $\epsilon$.\\
Let $\gamma\rightarrow\gamma'$ be the first step when $\bar{q}$ executes the rule $\boldsymbol{(N)}$. It is known that, for any $\gamma\in\Gamma_{1}$:
\[|\left(Inter(N_{p})\right)^{\gamma}|=1\Rightarrow\begin{cases}
\left(c_{\bar{q}}\right)^{\gamma}=(\left(c_{p}\right)^{\gamma}-1)\wedge\left(c_{q}\right)^{\gamma}=(\left(c_{p}\right)^{\gamma}+1)\:\boldsymbol{(A)}\\
or\\
\left(c_{\bar{q}}\right)^{\gamma}=(\left(c_{p}\right)^{\gamma}+1)\wedge\left(c_{q}\right)^{\gamma}=(\left(c_{p}\right)^{\gamma}-1)\:\boldsymbol{(B)}\\
\end{cases}\]

Let us study the following cases:
\begin{description}
\item[Case 2.1:] $\boldsymbol{(A)}$ is true in $\gamma$ and $\boldsymbol{(B)}$ is true in $\gamma'$.	The clock move of $\bar{q}$ is in contradiction with the construction of macro $next$.

\item[Case 2.2:] $\boldsymbol{(B)}$ is true in $\gamma$ and $\boldsymbol{(A)}$ is true in $\gamma'$.	The clock move of $q$ is in contradiction with the construction of macro $next$.
\end{description}

This proves that case 2 is contradictory.
\end{description}

Since the two cases are contradictory, we can conclude that $p$ is enabled for rule $\boldsymbol{(N)}$ in a finite time in every execution starting from a configuration $\gamma\in\Gamma_{1}$. Since the daemon is strongly fair, we can say that $p$ executes rule $\boldsymbol{(N)}$ in a finite time in every execution starting from $\gamma_{0}$. Consequently $\boldsymbol{(P_{d})}$ is true.
\end{description}
\end{proof}

The above property implies that  $\forall\gamma_{0}\in\Gamma_{1},\forall p\in V,$ $p$ executes the rule $\boldsymbol{(N)}$ infinitely often in every execution starting from $\gamma_{0}$.

\begin{lemma}\label{lem:vivaciteChaine}
If $\gamma\in\Gamma_{1}$, then any processor increments its clock in a finite time in any execution starting from $\gamma$.
\end{lemma}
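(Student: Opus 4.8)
The plan is to combine the two preceding lemmas with a simple monotonicity observation about rule $\boldsymbol{(N)}$ and the fact that clock values are natural integers, hence bounded below by $0$.

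First I would fix $\gamma\in\Gamma_{1}$ and an arbitrary execution $\epsilon=\gamma_{0}\gamma_{1}\ldots$ with $\gamma_{0}=\gamma$. By Lemma \ref{lem:clotureChaine} and an immediate induction, $\gamma_{i}\in\Gamma_{1}$ for all $i$; in particular, as established in the proof of Lemma \ref{lem:clotureChaine}, the only rule ever enabled, hence ever executed, along $\epsilon$ is $\boldsymbol{(N)}$. The key step is the following claim about a single activation of $\boldsymbol{(N)}$: if $p$ executes $\boldsymbol{(N)}$ in a step $\gamma_{i}\rightarrow\gamma_{i+1}$ (with $\gamma_{i}\in\Gamma_{1}$), then either $\left(H_{p}\right)^{\gamma_{i+1}}=\left(H_{p}\right)^{\gamma_{i}}+1$ or $\left(H_{p}\right)^{\gamma_{i+1}}<\left(H_{p}\right)^{\gamma_{i}}$; that is, an activation of $\boldsymbol{(N)}$ either increments $H_{p}$ or strictly decreases it, and never leaves it unchanged. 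To see this, note that on a chain $|N_{p}|\leq 2$, so $\left(Inter(N_{p})\right)^{\gamma_{i}}$ is the intersection of at most two sets of two or three consecutive integers, hence is itself a nonempty (because $\gamma_{i}\in\Gamma_{1}$) set of consecutive integers; moreover $\gamma_{i}\in\Gamma_{1}$ gives $\left(H_{p}\right)^{\gamma_{i}}\in poss(q)$ for every neighbor $q$, so $\left(H_{p}\right)^{\gamma_{i}}\in\left(Inter(N_{p})\right)^{\gamma_{i}}$, and the guard of $\boldsymbol{(N)}$ forces $|\left(Inter(N_{p})\right)^{\gamma_{i}}|\geq 2$. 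If $\left(H_{p}\right)^{\gamma_{i}}$ is not the maximum of $\left(Inter(N_{p})\right)^{\gamma_{i}}$, then $\left(H_{p}\right)^{\gamma_{i}}+1\in\left(Inter(N_{p})\right)^{\gamma_{i}}$ and $next$ increments; otherwise $\left(H_{p}\right)^{\gamma_{i}}$ is the maximum of a set with at least two consecutive elements, so $\min\left(Inter(N_{p})\right)^{\gamma_{i}}<\left(H_{p}\right)^{\gamma_{i}}$ and $next$ sends $H_{p}$ to this strictly smaller value, which proves the claim.

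Then I would argue by contradiction. Suppose some processor $p$ never increments its clock in $\epsilon$. Between two consecutive activations of $p$ the value $H_{p}$ is unchanged, since only $p$ writes $H_{p}$. By Lemma \ref{lem:prelemVivaciteChaine}, and more precisely by the consequence stated just after it (every correct processor executes $\boldsymbol{(N)}$ infinitely often in every execution starting from a configuration of $\Gamma_{1}$), $p$ executes $\boldsymbol{(N)}$ infinitely often in $\epsilon$; by the claim, at each such activation $H_{p}$ strictly decreases, since by assumption it never increments. Hence the sequence $\left(\left(H_{p}\right)^{\gamma_{i}}\right)_{i}$ is nonincreasing and strictly decreases infinitely often, so it is unbounded below, contradicting the fact that clock values are natural integers and thus $\left(H_{p}\right)^{\gamma_{i}}\geq 0$ for all $i$. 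Therefore $p$ increments its clock at some finite step of $\epsilon$. Since this holds for every processor and every execution starting from a $\Gamma_{1}$ configuration, and since by closure (Lemma \ref{lem:clotureChaine}) every suffix of $\epsilon$ is again such an execution, each processor in fact increments its clock infinitely often in $\epsilon$, which is exactly what the liveness part of \textbf{UAU} requires.

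I expect the only delicate point to be the verification of the claim, namely checking that in a $\Gamma_{1}$ configuration on a chain $\left(Inter(N_{p})\right)$ is indeed a set of consecutive integers containing $H_{p}$, so that the $next$ macro behaves as a ``step up, or else wrap down to the minimum'' operator; once this is in place, the contradiction via unboundedness below and the appeals to Lemmas \ref{lem:clotureChaine} and \ref{lem:prelemVivaciteChaine} are routine.
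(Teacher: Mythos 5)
Your proof is correct and follows essentially the same route as the paper's: both arguments reduce to showing that an activation of $\boldsymbol{(N)}$ in a $\Gamma_{1}$ configuration either increments $H_{p}$ or strictly decreases it, then invoke Lemma \ref{lem:prelemVivaciteChaine} for infinitely many activations and the lower bound $H_{p}\geq 0$ to force an eventual increment. Your observation that $Inter(N_{p})$ is an interval of consecutive integers containing $H_{p}$ is just a slightly cleaner packaging of the case analysis the paper performs on $\min\{Inter(N_{p})\}$.
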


\begin{proof}
Assume for the sake of contradiction that there exists a processor $p$ and an execution $\epsilon$ starting from $\gamma_{0}\in\Gamma_{1}$ such that $p$ never increments its clock in $\epsilon$.

Let $\alpha=\left(c_{p}\right)^{\gamma_{0}}$. By Lemma \ref{lem:prelemVivaciteChaine}, $p$ executes infinitely often $\boldsymbol{(N)}$. But, it never increments its clock, which implies that $next(\left(Inter(N_{p})\right)^{\gamma},\left(c_{p}\right)^{\gamma})=min\{\left(Inter(N_{p}\right)^{\gamma})\}$ at each execution of a rule by $p$ (in a configuration $\gamma$). Since $\forall\gamma\in\Gamma_{1},\forall q\in N_{p},|\left(c_{p}\right)^{\gamma}-\left(c_{q}\right)^{\gamma}|<2$ and $\forall q\in N_{p},\left(Inter(N_{p})\right)^{\gamma}\subseteq \{\left(c_{q}\right)^{\gamma}-1,\left(c_{q}\right)^{\gamma},\left(c_{q}\right)^{\gamma}+1\}$, we have:	$min\{\left(Inter(N_{p})\right)^{\gamma}\}\leq \left(c_{p}\right)^{\gamma}$.

Assume that there exists $\gamma\in\Gamma_{1}$ such that $min\{\left(Inter(N_{p})\right)^{\gamma}\}=\left(c_{p}\right)^{\gamma}$. This implies that there exists $q\in N_{p}$ such that $\left(c_{q}\right)^{\gamma}=\left(c_{p}\right)^{\gamma}+1$.

Remind that, if $q$ denotes a neighbor of $p$, $\bar{q}$ denotes the second neighbor of $p$ as stated in Notation 1. If $\bar{q}$ does not exist or if $\left(c_{\bar{q}}\right)^{\gamma}\in\{\left(c_{p}\right)^{\gamma},\left(c_{p}\right)^{\gamma}+1\}$, then  $\left(c_{p}\right)^{\gamma}+1\in\left(Inter(N_{p})\right)^{\gamma}$. This contradicts $next(\left(Inter(N_{p})\right)^{\gamma},\left(c_{p}\right)^{\gamma})=min\{\left(Inter(N_{p}\right)^{\gamma})\}$. We deduce that $\bar{q}$ exists and that $\left(c_{\bar{q}}\right)^{\gamma}=\left(c_{p}\right)^{\gamma}-1$. This implies that $\boldsymbol{(N)}$ is not enabled for $p$.

We can deduce that, if rule $\boldsymbol{(N)}$ is executed by a processor $p$ in a configuration $\gamma$, then  $min\{\left(Inter(N_{p})\right)^{\gamma}\}<\left(c_{p}\right)^{\gamma}$. We can now state that, in at most $\alpha$ executions of  $p$,  $c_{p}=0$. The next execution of $p$ increments its clock value, which contradicts the assumption on $p$ and the construction of $\epsilon$. Then, we obtain the result. 
\end{proof}

In the following we prove the convergence of our algorithm.

Let $\gamma\in\Gamma$, we define the following notations:
\[\begin{array}{c}\forall e=\{p,q\}\in E, \omega(e,\gamma)=|\left(c_{p}\right)^{\gamma}-\left(c_{q}\right)^{\gamma}|\\
\forall p\in V, \varpi(p,\gamma)=\underset{e\in E/p\in e}{max}\{\omega(e,\gamma)\}\\
\forall i\in\mathbb{N},p(i,\gamma)=|\{e\in E/\omega(e,\gamma)=i\}|\end{array}\]

Consider the following potential function:
\[P : \begin{cases} \Gamma\longrightarrow\mathbb{N}^{\infty}\\
\gamma\longmapsto\left(\ldots,0,0,p(k,\gamma),p(k-1,\gamma),\ldots,p(2,\gamma)\right)\:with\:k=\underset{e\in E}{max}\{\omega(e,\gamma)\}\end{cases}\]

To compare values of $P$, we define the following total order. If $\gamma$ and $\gamma'$ are two configurations such that $P(\gamma)=(\ldots,0,p_i,p_{i-1},\ldots,p_2)$ and $P(\gamma')=(\ldots,0,q_j,q_{j-1},\ldots,q_2)$, then 
\[P(\gamma)>P(\gamma')\Leftrightarrow\begin{cases}
i>j\\
\text{or}\\
(i=j)\wedge(\exists t\in\{2,\ldots,i\}, (\forall k\in\{t+1,\ldots,i\}, p_k=q_k)\wedge(p_k>q_k))
\end{cases}\]

The following properties are satisfied:
\[\begin{array}{c}\forall \gamma\in\Gamma, P(\gamma)\geq(\ldots0,0)\\
\forall \gamma\in\Gamma, \gamma\in\Gamma_{1}\Leftrightarrow P(\gamma)=(\ldots,0,0)\\
\forall \gamma\in\Gamma, \gamma\in\Gamma\setminus\Gamma_{1}\Leftrightarrow P(\gamma)>(\ldots,0,0)\end{array}\]

\begin{lemma}\label{lem:prelem1ConvergenceChaine}
If $\gamma\in\Gamma\setminus\Gamma_{1}$, then every step $\gamma\rightarrow\gamma'$, which contains the execution of a rule by a processor $p$ such that $\varpi(p)\geq 2$ satisfies $P(\gamma')<P(\gamma)$.
\end{lemma}

\begin{proof}
Let $\gamma\in\Gamma\setminus\Gamma_{1}$. Let $\gamma\rightarrow\gamma'$ be a step that contains the execution of a rule by a processor $p$ such that $\varpi(p)\geq 2$ and $\gamma\in\Gamma\setminus\Gamma_{1}$. Since the daemon is locally central, neighbors of $p$ do not modify their clocks during this step. Consider the following cases:

\begin{description}
\item[Case 1:] $p$'s degree equals $1$.\\
Let $q$ be its only neighbor and $j=\omega(\{p,q\},\gamma)=|\left(c_{p}\right)^{\gamma}-\left(c_{q}\right)^{\gamma}|$. $\left(Inter(N_{p})\right)^{\gamma}=\{\left(c_{q}\right)^{\gamma}-1,\left(c_{q}\right)^{\gamma},\left(c_{q}\right)^{\gamma}+1\}$. It follows that $p$ executed rule $\boldsymbol{(N)}$. So, we have $|\left(c_{p}\right)^{\gamma'}-\left(c_{q}\right)^{\gamma'}|\leq 1$. Then: $\varpi(\{p,q\},\gamma')\leq 1$ and :
\[\begin{array}{c}P(\gamma)=\left(\ldots,0,0,p(k,\gamma),p(k-1,\gamma),\ldots,p(j,\gamma),\ldots,p(2,\gamma)\right)\\
P(\gamma')=\left(\ldots,0,0,p(k,\gamma),p(k-1,\gamma),\ldots,p(j,\gamma)-1,\ldots,p(2,\gamma)\right)\end{array}\]
And then: $P(\gamma')<P(\gamma)$.

\item[Case 2:] $p$'s degree equals $2$.\\
Let $q$ be the neighbor of $p$ such that $\omega(\{p,q\},\gamma)=\varpi(p,\gamma) \geq 2$ and denote $j=\omega(\{p,\bar{q}\},\gamma)\leq \varpi(p,\gamma)$, $e=\{p,q\}$ and $\bar{e}=\{p,\bar{q}\}$. Consider the following cases:

\begin{description}
\item[Case 2.1:] $p$ executed the rule $\boldsymbol{(N)}$ during the step $\gamma\rightarrow\gamma'$.\\
By construction of $\left(Inter(N_{p})\right)^{\gamma}$, we have $\omega(e,\gamma')\leq 1$ and $\omega(\bar{e},\gamma')\leq 1$. Then:
\[\begin{array}{c}P(\gamma)=\left(\ldots,0,0,p(k,\gamma),p(k-1,\gamma),\ldots,p(\varpi(p,\gamma),\gamma),\ldots,p(j,\gamma),\ldots,p(2,\gamma)\right)\\
P(\gamma')=\left(\ldots,0,p(k,\gamma),\ldots,p(\varpi(p,\gamma),\gamma)-1,\ldots,p(j,\gamma)-1,\ldots,p(2,\gamma)\right)\end{array}\]
And then: $P(\gamma')<P(\gamma)$.

\item[Case 2.2:] $p$ executed the rule $\boldsymbol{(C_{2})}$ during the step $\gamma\rightarrow\gamma'$.\\
This case is similar to the case 2.1.

\item[Case 2.3:] $p$ executed the rule $\boldsymbol{(C_{1})}$ during the step $\gamma\rightarrow\gamma'$.\\
Let us study the following cases:

\begin{description}
\item[Case 2.3.1:] We have: $\left(c_{q}\right)^{\gamma}<\left(c_{\bar{q}}\right)^{\gamma}$.\\
By hypothesis, we know that $\omega(e,\gamma)\geq\omega(\bar{e},\gamma)$ and then: 
\[\left(c_{p}\right)^{\gamma}\geq\frac{\left(c_{q}\right)^{\gamma}+\left(c_{\bar{q}}\right)^{\gamma}}{2}\]

1) Assume that $\left(c_{p}\right)^{\gamma}>\left(c_{\bar{q}}\right)^{\gamma}+\frac{\left(c_{q}\right)^{\gamma}+\left(c_{\bar{q}}\right)^{\gamma}}{2}$.

We can say that: 
\[\begin{array}{c}\omega(e,\gamma)>\left(c_{\bar{q}}\right)^{\gamma}-\left(c_{q}\right)^{\gamma}+\frac{\left(c_{q}\right)^{\gamma}+\left(c_{\bar{q}}\right)^{\gamma}}{2}\\
\omega(e,\gamma')=\left\lfloor\frac{\left(c_{q}\right)^{\gamma}+\left(c_{\bar{q}}\right)^{\gamma}}{2}\right\rfloor\end{array}\]
Then: $\omega(e,\gamma')<\omega(e,\gamma)$. 

On the other hand,
\[\begin{array}{c}\omega(\bar{e},\gamma)>\frac{\left(c_{q}\right)^{\gamma}+\left(c_{\bar{q}}\right)^{\gamma}}{2}\\
\omega(\bar{e},\gamma')=\left(c_{\bar{q}}\right)^{\gamma}-\left\lfloor\frac{\left(c_{q}\right)^{\gamma}+\left(c_{\bar{q}}\right)^{\gamma}}{2}\right\rfloor\end{array}\]
Then: $\omega(\bar{e},\gamma')\leq\omega(\bar{e},\gamma)$. 

In conclusion, we have: $P(\gamma')<P(\gamma)$.

2) Assume that $\left(c_{p}\right)^{\gamma}\leq\left(c_{\bar{q}}\right)^{\gamma}+\frac{\left(c_{q}\right)^{\gamma}+\left(c_{\bar{q}}\right)^{\gamma}}{2}$.

We have then: 
\[\begin{array}{c}\omega(e,\gamma)>\frac{\left(c_{q}\right)^{\gamma}+\left(c_{\bar{q}}\right)^{\gamma}}{2}\\
\omega(e,\gamma')=\left\lfloor\frac{\left(c_{q}\right)^{\gamma}+\left(c_{\bar{q}}\right)^{\gamma}}{2}\right\rfloor\end{array}\]
Then: $\omega(e,\gamma')<\omega(e,\gamma)$. 

In contrast, we have that: $\omega(\bar{e},\gamma')\geq\omega(\bar{e},\gamma)$. But we can say that $\omega(\bar{e},\gamma')<\omega(e,\gamma)$ (obvious if 
$\left(c_{p}\right)^{\gamma}>\left(c_{\bar{q}}\right)^{\gamma}$, due to the fact that
$\left(c_{p}\right)^{\gamma}>\left\lceil\frac{\left(c_{q}\right)^{\gamma}+\left(c_{\bar{q}}\right)^{\gamma}}{2}\right\rceil$ in the contrary case).

In conclusion, we have: $P(\gamma')<P(\gamma)$.
\item[Case 2.3.2:] We have $\left(c_{q}\right)^{\gamma}>\left(c_{\bar{q}}\right)^{\gamma}$.\\
This case is similar to the case 2.3.1 when we permute $q$ and $\bar{q}$.

\end{description}
\end{description}
\end{description}

That proves the result.
\end{proof}
			
\begin{lemma}\label{lem:prelem2ConvergenceChaine}	
If $\gamma_{0}\in\Gamma\setminus\Gamma_{1}$, then every execution starting from $\gamma_{0}$ contains the execution of a rule by a processor $p$ such that $\varpi(p,\gamma_{0})\geq 2$.
\end{lemma}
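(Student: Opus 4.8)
The plan is to argue by contradiction, relying on the fact that on a chain a processor changes its clock only when it executes a rule. Assume $\gamma_{0}\in\Gamma\setminus\Gamma_{1}$ and that there is an execution $\epsilon=\gamma_{0}\gamma_{1}\ldots$ starting from $\gamma_{0}$ in which no processor $p$ with $\varpi(p,\gamma_{0})\geq 2$ ever executes a rule. Set $S=\{p\in V:\varpi(p,\gamma_{0})\geq 2\}$; every such processor keeps a constant clock throughout $\epsilon$ (call them \emph{frozen}), and $S\neq\emptyset$ since $\gamma_{0}\notin\Gamma_{1}$. Note that $p\in S$ if and only if $p$ is an endpoint of an edge $e$ with $\omega(e,\gamma_{0})\geq 2$, and both endpoints of such an edge are in $S$, hence frozen; therefore every edge that is ``bad'' in $\gamma_{0}$ remains bad in every $\gamma_{j}$, so $\gamma_{j}\notin\Gamma_{1}$ for all $j$. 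I also recall that on a chain every configuration has its two extremities enabled for rule $\boldsymbol{(N)}$ (their $Inter$ set has size $\geq 2$), so $\epsilon$ is infinite.

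The heart of the proof is to exhibit a processor of $S$ that is enabled infinitely often along $\epsilon$: by the strong fairness of the daemon it would then execute a rule, contradicting the assumption. Label the chain $p_{0},\ldots,p_{n-1}$ and pick the bad edge $\{p_{i},p_{i+1}\}$ with $i$ minimal; by minimality every edge incident to $p_{0},\ldots,p_{i-1}$ is good in $\gamma_{0}$, so these processors are not in $S$ and hence not frozen, whereas $p_{i},p_{i+1}\in S$ are frozen with $|\left(H_{p_{i}}\right)^{\gamma_{0}}-\left(H_{p_{i+1}}\right)^{\gamma_{0}}|\geq 2$. I would then propagate ``liveness'' from the exposed end towards $p_{i}$: show by induction on $k\in\{0,\ldots,i-1\}$ that $p_{k}$ executes a rule infinitely often in $\epsilon$. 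For $k=0$ this holds because $p_{0}$ is permanently enabled for $\boldsymbol{(N)}$. For the inductive step, suppose $p_{k-1}$ (non-frozen) executes infinitely often but $p_{k}$ does not; then $p_{k}$ has a constant clock value $v$ from some point on, $p_{k-1}$ keeps applying $\boldsymbol{(N)}$ and so, once it enters $poss(p_{k})=\{v-1,v,v+1\}$ (or $\{v,v+1\}$ if $v=0$), it cycles through that set; a short enumeration of the guards of $\boldsymbol{(N)}$, $\boldsymbol{(C_{1})}$, $\boldsymbol{(C_{2})}$ shows that at every configuration in which $\left(H_{p_{k-1}}\right)=v$ the processor $p_{k}$ is enabled — using, when $v=0$, that clock values are non-negative, and in the empty‑$Inter$ case elementary estimates on floors of averages for rule $\boldsymbol{(C_{1})}$. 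Hence $p_{k}$ is enabled infinitely often, so by strong fairness it executes infinitely often, a contradiction. The very same step applied to $k=i$ (where ``$p_{i}$ executes finitely often'' is automatic since $p_{i}$ is frozen) yields that $p_{i}$ is enabled infinitely often and therefore executes a rule — contradicting that $p_{i}\in S$.

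The delicate point I expect to be the main obstacle is that rule $\boldsymbol{(C_{1})}$ can move a processor far away from a frozen neighbour, so the clean assertion ``$p_{k-1}$ eventually stays in $poss(p_{k})$ and cycles there'' is not immediate. This is resolved by observing that a $\boldsymbol{(C_{1})}$-move of $p_{k-1}$ creates an edge of drift $\geq 2$ incident to $p_{k-1}$, which either makes $p_{k}$ directly enabled or lets the same propagation be re-run, so that in all cases $p_{k}$ cannot avoid being enabled infinitely often. Discharging this coupling, together with the boundary case $v=0$ and the routine case enumeration of which of $\boldsymbol{(N)}$, $\boldsymbol{(C_{1})}$, $\boldsymbol{(C_{2})}$ is enabled, is where the bulk of the work lies; the remainder is the bookkeeping of the induction.
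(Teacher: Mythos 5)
Your skeleton matches the paper's: argue by contradiction, locate the first edge of drift at least $2$, say $\{p_{i},p_{i+1}\}$, propagate liveness from the exposed end of the chain up to $p_{i}$, and conclude that $p_{i}$ is enabled infinitely often, contradicting strong fairness. Your observation (ii) is correct: if $H_{p_{k-1}}=H_{p_{k}}=v$ then $p_{k}$ is enabled whatever $H_{p_{k+1}}$ is, since a two-neighbour processor with clock $v$ is disabled only when $Inter(N_{p_k})=\{v\}$ (neighbours at $v-1$ and $v+1$) or when $Inter(N_{p_k})=\emptyset$ and $v$ is a rounding of the neighbours' average, and both cases force the neighbours' values away from $v$. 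The genuine gap is exactly the point you flag and defer: you never establish that $H_{p_{k-1}}$ takes the value $H_{p_{k}}$ infinitely often. Your induction hypothesis ("$p_{k-1}$ executes a rule infinitely often") controls neither which rule $p_{k-1}$ fires nor where its clock lands. The proposed repair fails concretely: a $\boldsymbol{(C_{1})}$ move of $p_{k-1}$ with $H_{p_{k-2}}=v+3$ and $H_{p_{k}}=v$ sets $H_{p_{k-1}}$ to $v+1$, which does not enable $p_{k}$ (take $H_{p_{k+1}}=v-1$: then $Inter(N_{p_k})=\{v\}=\{H_{p_k}\}$); the bad edge it creates is $\{p_{k-2},p_{k-1}\}$, whose endpoints are not frozen, so "re-running the propagation" towards it proves nothing, and the appeal is in any case circular since your induction is over chain positions, not over time.

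The missing ingredient is a closure invariant on the prefix, which your own setup already provides the hypotheses for: by minimality of $i$, every edge incident to $p_{0},\ldots,p_{i-1}$ has drift at most $1$ in $\gamma_{0}$, and the argument of Lemma \ref{lem:clotureChaine} is purely local, so these processors are only ever enabled for $\boldsymbol{(N)}$ and every edge of the prefix (including $\{p_{i-1},p_{i}\}$) keeps drift at most $1$ forever. Rule $\boldsymbol{(C_{1})}$ then never fires on the prefix, $H_{p_{i-1}}$ is confined to $poss(p_{i})$, and since $Inter$ is an interval and $next$ increments when possible, $H_{p_{i-1}}$ must hit $H_{p_{i}}$ (or a value enabling $p_{i}$) infinitely often. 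This is how the paper proceeds: it treats the frozen $p_{i}$ as a crashed endpoint of a locally synchronized sub-chain, reuses the machinery of Lemmas \ref{lem:clotureChaine} and \ref{lem:prelemVivaciteChaine}, and combines it with the structural constraint on a never-enabled frozen processor ($Inter(N_{p_{i}})=\emptyset$ and $H_{p_{i}}$ a rounding of its neighbours' average, which pins $H_{p_{i-1}}=H_{p_{i}}\pm 1$). Adding the local-closure observation would turn your sketch into a correct proof; as written, the inductive step does not go through.
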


\begin{proof}
Let $\gamma_{0}\in\Gamma\setminus\Gamma_{1}$. We prove the result by contradiction. Assume that there exists an execution $\epsilon=\gamma_{0}\gamma_{1}\ldots$ starting from $\gamma_{0}$, which contains no execution of a rule by processors $p$ satisfying $\varpi(p,\gamma_{0})\geq 2$.

In a first time, assume that one end of the chain (denote it by $p$) satisfies: $\varpi(p,\gamma_{0})\geq 2$. Denote $q$ the only neighbor of $p$. If $q$ is activated during $\epsilon$, we obtain a contradiction (since $\varpi(q,\gamma_{0})\geq\varpi(p,\gamma_{0})\geq 2$). If $q$ is not activated during $\epsilon$, we obtain that $\forall i\in\mathbb{N},\left(Inter(N_{p})\right)^{\gamma_{i}}=\{\left(c_{q}\right)^{\gamma_{0}}-1,\left(c_{q}\right)^{\gamma_{0}},\left(c_{q}\right)^{\gamma_{0}}+1\}$, $p$ is so always enabled for rule $\boldsymbol{(N)}$. Since the daemon is strongly fair, $p$ executes a rule in a finite time, which is contradictory. We can deduce that the two ends of the chain satisfy: $\varpi(p,\gamma_{0})<2$.

Under a strongly fair daemon, the only way for a processor to never execute a rule is to be never enabled from a given configuration. Here, we assume that all processors $p$ satisfying $\varpi(p,\gamma_{0})\geq 2$ never execute a rule, which implies that the network satisfies:
\[\exists k\in\mathbb{N},\forall j\geq k,\forall p\in V/\varpi(p,\gamma_{0})\geq 2,\begin{cases} \left(Inter(N_{p})\right)^{\gamma_{j}}=\emptyset\\
and\\
\left(c_{p}\right)^{\gamma_{j}}\in\left\{\left\lceil\frac{\left(c_{q}\right)^{\gamma_{j}}+\left(c_{\bar{q}}\right)^{\gamma_{j}}}{2}\right\rceil,
\left\lfloor\frac{\left(c_{q}\right)^{\gamma_{j}}+\left(c_{\bar{q}}\right)^{\gamma_{j}}}{2}\right\rfloor\right\}\end{cases}\]

Number processors of the chain from $p_{1}$ to $p_{n}$. Let $i$ be the smallest integer such that $\varpi(p_{i},\gamma_{k})\geq 2$ (remark that, by hypothesis, $p_{i+1}$ never execute a rule, which implies that its clock value never changes). All these constraints allows us to say:
\[\begin{cases}
\left(c_{p_{i-1}}\right)^{\gamma_{k}}=\left(c_{p_{i}}\right)^{\gamma_{k}}+1\wedge\left(c_{p_{i+1}}\right)^{\gamma_{k}}=\left(c_{p_{i}}\right)^{\gamma_{k}}-2\:\boldsymbol{(A)}\\
or\\		\left(c_{p_{i-1}}\right)^{\gamma_{k}}=\left(c_{p_{i}}\right)^{\gamma_{k}}-1\wedge\left(c_{p_{i+1}}\right)^{\gamma_{k}}=\left(c_{p_{i}}\right)^{\gamma_{k}}+2\:\boldsymbol{(B)}\end{cases}\]

By a reasoning similar to these of the proof of Lemma \ref{lem:vivaciteChaine}, we can prove that all processors between $p_{0}$ and $p_{i-1}$ executes infinitely often the rule $\boldsymbol{(N)}$ in every execution starting from $\gamma_{k}$ even if $p_{i}$ never executes a rule (this is the case by hypothesis). By a reasoning similar to the one of the proof of Lemma \ref{lem:vivaciteChaine}, we can state that $c_{p_{i-1}}$ not remains constant. The construction of $Inter(N_{p_{i-1}})$ implies that  $\left(Inter(N_{p_{i-1}})\right)^{\gamma_{j}}\subseteq\{\left(c_{p_{i}}\right)^{\gamma_{k}}-1,\left(c_{p_{i}}\right)^{\gamma_{k}},\left(c_{p_{i}}\right)^{\gamma_{k}}+1\}$ for each $j\geq k$ (since $c_{p_{i}}$ does not change by hypothesis).

If we are in case $\boldsymbol{(A)}$, we can deduce that $c_{p_{i-1}}$ takes infinitely often the value $\left(c_{p_{i}}\right)^{\gamma_{k}}-1$ or $\left(c_{p_{i}}\right)^{\gamma_{k}}$. We can see that $p_{i}$ is enabled by $\boldsymbol{(N)}$ and $\boldsymbol{(C_{1})}$ respectively. This contradicts the construction of $k$ (recall that $p_{i}$ is never enabled in $\epsilon$ from $\gamma_{k}$).

If we are in case $\boldsymbol{(B)}$, we can deduce that $c_{p_{i-1}}$  takes infinitely often the value $\left(c_{p_{i}}\right)^{\gamma_{k}}+1$ or $\left(c_{p_{i}}\right)^{\gamma_{k}}$. We can see that $p_{i}$ is enabled by $\boldsymbol{(N)}$ and $\boldsymbol{(C_{1})}$ respectively. This contradicts the construction of $k$ (recall that $p_{i}$ is never enabled in $\epsilon$ from $\gamma_{k}$).
				
This finishes the proof.
\end{proof}

\begin{lemma}\label{lem:convergenceChaine}
There exists $i\geq 0$ such that $\gamma_{i}\in\Gamma_{1}$.
\end{lemma}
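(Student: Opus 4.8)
The plan is to combine the two preceding lemmas with a classical potential-function argument. By Lemma~\ref{lem:prelem2ConvergenceChaine}, starting from any $\gamma_{0}\in\Gamma\setminus\Gamma_{1}$, every execution eventually contains a step in which some processor $p$ with $\varpi(p,\gamma_{0})\geq 2$ executes a rule; and by Lemma~\ref{lem:prelem1ConvergenceChaine}, such a step strictly decreases the potential $P$ in the lexicographic order. So the strategy is: assume for contradiction that $\gamma_{i}\notin\Gamma_{1}$ for all $i\geq 0$, and show that this forces an infinite strictly-decreasing sequence of values of $P$, which is impossible since $P$ takes values in $\mathbb{N}^{\infty}$ ordered lexicographically, a well-founded order (each $P(\gamma)$ is a tuple of nonnegative integers, and there is no infinite lexicographically descending chain because the relevant index range is bounded and the values below are naturals).

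The one subtlety I would address carefully is that Lemma~\ref{lem:prelem2ConvergenceChaine} is stated relative to a fixed initial configuration $\gamma_{0}$ (it speaks of processors $p$ with $\varpi(p,\gamma_{0})\geq 2$), so to iterate the argument I need to re-apply it along the execution. Concretely: if $\epsilon=\gamma_{0}\gamma_{1}\ldots$ with every $\gamma_{i}\in\Gamma\setminus\Gamma_{1}$, then by Lemma~\ref{lem:prelem2ConvergenceChaine} applied to $\gamma_{0}$ there is a first step $\gamma_{j_{1}}\rightarrow\gamma_{j_{1}+1}$ executed by a processor $p$ with $\varpi(p,\gamma_{0})\geq 2$. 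I must check that $\varpi(p,\gamma_{j_{1}})\geq 2$ still holds so that Lemma~\ref{lem:prelem1ConvergenceChaine} applies at that step: indeed, no rule was executed by a processor of weight $\geq 2$ before step $j_{1}$ (by "first"), and a processor whose incident edges all had weight $<2$ cannot have made an edge incident to $p$ jump to weight $\geq 2$ in a single step under rule $\boldsymbol{(N)}$ or $\boldsymbol{(C_{2})}$ — but this needs a short verification. Rather than belabor it, I would instead argue more cleanly: the quantity $k(\gamma)=\max_{e\in E}\omega(e,\gamma)$ is non-increasing along any execution (this follows from the construction of all three rules, which never move a clock outside the interval determined by its neighbors' clocks by more than what is already present), and in particular the set of "heavy" processors can only shrink; combined with Lemma~\ref{lem:prelem1ConvergenceChaine} and Lemma~\ref{lem:prelem2ConvergenceChaine} this gives the strict decrease of $P$ from $\gamma_{j_{1}}$ to $\gamma_{j_{1}+1}$.

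Then I would repeat: from $\gamma_{j_{1}+1}$ (still assumed in $\Gamma\setminus\Gamma_{1}$) apply Lemma~\ref{lem:prelem2ConvergenceChaine} again to obtain a later step decreasing $P$, and so on, producing an infinite sequence of indices $j_{1}<j_{2}<\cdots$ with $P(\gamma_{j_{m}+1})<P(\gamma_{j_{m}})\leq P(\gamma_{j_{m-1}+1})$ (the last inequality because $P$ is non-increasing along steps — which itself follows from Lemma~\ref{lem:prelem1ConvergenceChaine} for heavy-processor steps and from the fact that light-processor steps, governed only by rule $\boldsymbol{(N)}$, cannot increase any edge weight, hence cannot increase $P$). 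This yields an infinite strictly decreasing chain in the lexicographic order on $\mathbb{N}^{\infty}$, contradicting well-foundedness. Hence some $\gamma_{i}\in\Gamma_{1}$, proving the lemma.

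The main obstacle, as flagged above, is the bookkeeping needed to justify that $P$ is non-increasing along \emph{all} steps (not merely the heavy-processor steps covered by Lemma~\ref{lem:prelem1ConvergenceChaine}), so that the strict decreases accumulate rather than being cancelled by intervening increases. I expect this to reduce to observing that rule $\boldsymbol{(N)}$ always sets $H_{p}$ to a value in $Inter(N_{p})\subseteq\{H_{q}-1,H_{q},H_{q}+1\}$ for each neighbor $q$, so every edge incident to $p$ gets weight $\leq 1$ afterward and no other edge weight changes — so a $\boldsymbol{(N)}$-step can only decrease $P$ or leave it unchanged; and the correction rules are only enabled when some incident edge already has weight $\geq 2$, i.e. at a heavy processor, so they are covered by Lemma~\ref{lem:prelem1ConvergenceChaine}. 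Once this monotonicity is in hand the argument closes immediately.
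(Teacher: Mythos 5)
Your proposal takes the same route as the paper: the paper's entire proof of this lemma is the single sentence that the result ``follows directly'' from Lemmas~\ref{lem:prelem1ConvergenceChaine} and~\ref{lem:prelem2ConvergenceChaine}, and you are simply supplying the well-foundedness and monotonicity bookkeeping that this one-liner leaves implicit. Your two key supporting observations are sound: correction rules are only ever enabled at a processor $p$ with $\varpi(p,\gamma)\geq 2$ (so every correction step is covered by Lemma~\ref{lem:prelem1ConvergenceChaine}), and a $\boldsymbol{(N)}$-step leaves every incident edge with weight at most $1$, so $P$ is non-increasing along all steps. One auxiliary claim in your ``cleaner'' detour is false, however: the set of heavy processors can grow --- if $q$ executes $\boldsymbol{(C_{1})}$ with $H_{p}=H_{q}=0$ and $H_{\bar p}=4$, the edge $\{p,q\}$ jumps from weight $0$ to weight $2$, turning a light $p$ heavy. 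This does not hurt you, because the step that creates the new heavy processor is itself executed by a heavy processor and hence already strictly decreases $P$; and the subtlety you flag about $\varpi(p,\gamma_{j_{1}})\geq 2$ closes more directly than you suggest: any neighbor of $p$ across an edge of weight $\geq 2$ in $\gamma_{0}$ is itself heavy in $\gamma_{0}$, so by minimality of $j_{1}$ it cannot have acted earlier and destroyed $p$'s heaviness. With that one-line fix your argument is complete and is, in substance, the argument the paper intends.
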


\begin{proof}
The result follows directly from Lemmas \ref{lem:prelem1ConvergenceChaine} and \ref{lem:prelem2ConvergenceChaine}.
\end{proof}

Finally, we can conclude:

\begin{proposition}\label{prop:SSChaine}
$\mathcal{UFTSS}$ is a self-stabilizing \textbf{AU} under a locally central strongly fair daemon.
\end{proposition}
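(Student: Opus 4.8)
The plan is to assemble Proposition~\ref{prop:SSChaine} directly from the three lemmas just established, following the standard decomposition of self-stabilization into convergence, closure, and liveness; no new combinatorial work is needed, only a careful bookkeeping of which lemma delivers which half of the specification \textbf{AU}.

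First I would fix an arbitrary initial configuration $\gamma_{0}\in\Gamma$ and an arbitrary execution $\epsilon=\gamma_{0}\gamma_{1}\ldots$ of $\mathcal{UFTSS}$ under a locally central strongly fair daemon in which no processor crashes. By Lemma~\ref{lem:convergenceChaine} there exists an index $l\geq 0$ with $\gamma_{l}\in\Gamma_{1}$; this index provides the finite prefix $\gamma_{0}\gamma_{1}\ldots\gamma_{l}$ demanded by Definition~\ref{def:self}.

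Next I would show that every execution starting from $\gamma_{l}$ is a legitimate execution for \textbf{AU}. For the safety part, a straightforward induction on $i\geq l$ using Lemma~\ref{lem:clotureChaine} gives $\gamma_{i}\in\Gamma_{1}$ for all $i\geq l$; since no processor is crashed we have $V^{*}=V$, hence $\Gamma_{1}=\Gamma_{1}^{*}$ by the remark following the definition of $\Gamma_{1}$, so $\gamma_{i}\in\Gamma_{1}^{*}$ for all $i\geq l$, which is exactly the safety requirement of \textbf{AU}. For the liveness part I would iterate Lemma~\ref{lem:vivaciteChaine}: from any configuration in $\Gamma_{1}$ every processor increments its clock within finitely many steps, and because closure guarantees that every suffix of the execution again begins in a configuration of $\Gamma_{1}$, re-applying the lemma along the execution yields that each processor increments its clock infinitely often, which is the liveness requirement of \textbf{AU}.

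Combining the two observations, the suffix of $\epsilon$ from $\gamma_{l}$ satisfies \textbf{AU}, so $\mathcal{UFTSS}$ is self-stabilizing for \textbf{AU} under a locally central strongly fair daemon, which is the claim. There is no real obstacle remaining at this level — all the difficulty sits in Lemmas~\ref{lem:clotureChaine}, \ref{lem:vivaciteChaine} and \ref{lem:convergenceChaine}; the only point that deserves explicit mention is the passage from the ``within finite time'' form of Lemma~\ref{lem:vivaciteChaine} to the ``infinitely often'' statement, which is precisely where closure (Lemma~\ref{lem:clotureChaine}) is invoked to re-establish the hypothesis $\gamma\in\Gamma_{1}$ at every later point of the execution.
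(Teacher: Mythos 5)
Your proposal is correct and follows exactly the paper's route: the paper's own proof simply cites Lemmas~\ref{lem:clotureChaine}, \ref{lem:vivaciteChaine} and \ref{lem:convergenceChaine} to conclude self-stabilization for \textbf{UAU} and then deduces \textbf{AU}, which is precisely the convergence/closure/liveness assembly you spell out (your observation that $\Gamma_{1}=\Gamma_{1}^{*}$ in the crash-free case is the same step as the paper's passage from \textbf{UAU} to \textbf{AU}). Your explicit remark on iterating Lemma~\ref{lem:vivaciteChaine} via closure to get ``infinitely often'' is a detail the paper leaves implicit, but it is not a different approach.
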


\begin{proof}
Lemmas \ref{lem:clotureChaine}, \ref{lem:vivaciteChaine}, and \ref{lem:convergenceChaine} allows us to say that $\mathcal{UFTSS}$ is a self-stabilizing \textbf{UAU} under a locally central strongly fair daemon. Then, we can deduce the result.
\end{proof}

\subsubsection{Proof of Self-Stabilization in spite of a Crash}
	
In this section, $\epsilon=\gamma_{0},\gamma_{1}\ldots$ denotes an execution of $\mathcal{UFTSS}$ such that a processor $c$ is crashed in $\gamma_{0}$.

Firstly, we are going to prove the closure of our algorithm under these assumptions.

\begin{lemma}\label{lem:clotureCrashChaine}
If there exists $i\geq 0$ such that $\gamma_{i}\in \Gamma_{1}$, then $\gamma_{i+1}\in\Gamma_{1}$.
\end{lemma}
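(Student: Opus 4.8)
The plan is to follow exactly the structure of the proof of Lemma~\ref{lem:clotureChaine}, checking at each point that the crashed processor $c$ introduces no new difficulty. First I would argue that in the configuration $\gamma_{i}\in\Gamma_{1}$ the only rule that can be enabled at any processor is $\boldsymbol{(N)}$. Indeed, $\gamma_{i}\in\Gamma_{1}$ means $|\left(H_{p}\right)^{\gamma_{i}}-\left(H_{q}\right)^{\gamma_{i}}|\leq 1$ for \emph{every} edge $\{p,q\}$, crashed edges included; so for any processor $p$ having two neighbors their clock values differ by at most $2$, whence $\left(Inter(N_{p})\right)^{\gamma_{i}}\neq\emptyset$ and rule $\boldsymbol{(C_{1})}$ is disabled everywhere. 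If rule $\boldsymbol{(C_{2})}$ were enabled at some $p$, then $\left(Inter(N_{p})\right)^{\gamma_{i}}=\{h\}$ with $\left(H_{p}\right)^{\gamma_{i}}\neq h$; but a singleton $\left(Inter(N_{p})\right)^{\gamma_{i}}$ forces $p$ to have two neighbors whose clock values are $2$ apart, and a value at distance at most $1$ from both of them must equal the middle value $h$, contradicting $\left(H_{p}\right)^{\gamma_{i}}\neq h$. Hence every enabled processor in $\gamma_{i}$ is enabled only for $\boldsymbol{(N)}$.

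Next I would treat the step $\gamma_{i}\rightarrow\gamma_{i+1}$. Since the daemon is locally central, no two neighbors are activated in the same step; and since $c$ is crashed in $\gamma_{0}$ it never executes a rule in $\epsilon$, so $\left(H_{c}\right)$ is constant. Consequently, whenever a processor $p$ executes rule $\boldsymbol{(N)}$ during $\gamma_{i}\rightarrow\gamma_{i+1}$, every neighbor $q$ of $p$ keeps $\left(H_{q}\right)^{\gamma_{i+1}}=\left(H_{q}\right)^{\gamma_{i}}$, whether $q$ is correct or is $c$. By construction of $\boldsymbol{(N)}$ we have $\left(H_{p}\right)^{\gamma_{i+1}}\in\left(Inter(N_{p})\right)^{\gamma_{i}}\subseteq poss(q)^{\gamma_{i}}\subseteq\{\left(H_{q}\right)^{\gamma_{i}}-1,\left(H_{q}\right)^{\gamma_{i}},\left(H_{q}\right)^{\gamma_{i}}+1\}$ for each $q\in N_{p}$, so $|\left(H_{p}\right)^{\gamma_{i+1}}-\left(H_{q}\right)^{\gamma_{i+1}}|\leq 1$ for all $q\in N_{p}$ — in particular for the edge joining $p$ to $c$ when $p$ is a neighbor of $c$. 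Every edge not incident to a processor that moved retains its (admissible) drift. Therefore $\gamma_{i+1}\in\Gamma_{1}$.

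The only point requiring care — and where I expect the (mild) \emph{main obstacle} — is to check that the frozen processor $c$ cannot itself be a source of a safety violation. This is handled by the model assumption that the variables of crashed processors are always readable: $c$ behaves, for the purpose of this closure argument, exactly like a correct processor that the daemon simply never selects, the macro $poss$ evaluated by $c$'s neighbors reads $\left(H_{c}\right)^{\gamma_{i}}$ correctly, and whether or not $c$ has an enabled rule in $\gamma_{i}$ is irrelevant since $c$ never executes. With this observation the argument is literally the one of Lemma~\ref{lem:clotureChaine}, and the lemma follows.
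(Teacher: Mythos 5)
Your proof is correct and matches the paper's approach: the paper's own proof of Lemma~\ref{lem:clotureCrashChaine} simply states that the reasoning of Lemma~\ref{lem:clotureChaine} carries over unchanged because a crashed processor does not affect the argument, which is exactly what you establish in detail. Your explicit check that $c$ behaves like a correct processor that is never selected (its variables remaining readable and constant) is precisely the observation that justifies the paper's one-line proof.
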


\begin{proof}
We can repeat the reasoning of Lemma \ref{lem:clotureChaine} since the fact that a processor is crashed or not does not modify the proof.
\end{proof}

Secondly, we are going to prove the liveness of our algorithm under these assumptions.

\begin{lemma}\label{lem:vivaciteCrashChaine}
If $\gamma_{0}\in\Gamma_{1}$, then every processor $p\neq c$ increments its clock in a finite time in $\epsilon$.
\end{lemma}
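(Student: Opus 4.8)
The plan is to mirror the structure of Lemma \ref{lem:prelemVivaciteChaine} and Lemma \ref{lem:vivaciteChaine}, but now on a chain in which a single processor $c$ is permanently crashed in $\gamma_0 \in \Gamma_1$. First I would argue, using Lemma \ref{lem:clotureCrashChaine}, that from $\gamma_0$ only the rule $\boldsymbol{(N)}$ is ever enabled (the configuration stays in $\Gamma_1$ forever), so the whole dynamics is governed by the normal rule. Since $c$ is crashed, it never executes; this is harmless for the closure argument (as noted in Lemma \ref{lem:clotureCrashChaine}), but it changes the liveness argument because $c$ behaves like a fixed "virtual endpoint" of the chain: the two sub-chains hanging off $c$ (one on each side, possibly empty) must each make progress on their own. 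So I would split the chain at $c$ into (at most) two sub-chains, each having $c$ as one of its two ends, and treat each sub-chain separately.

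The key step is to re-prove the analogue of Lemma \ref{lem:prelemVivaciteChaine}: every correct processor $p \neq c$ executes rule $\boldsymbol{(N)}$ infinitely often. I would do this by induction on $d$, the distance from $p$ to the nearest \emph{correct} end of its sub-chain, where the relevant "ends" are the genuine endpoints of the original chain together with $c$ itself. The base case $d=0$ for a genuine endpoint is exactly as in Lemma \ref{lem:prelemVivaciteChaine}: such a processor always has $|Inter(N_p)| \geq 2$ and the strongly fair daemon eventually selects it. For a processor $p$ adjacent to $c$, the crash does not help: $c$'s clock is frozen, so $p$'s constraint from $c$ is a single fixed value in $\{(H_c)^{\gamma_0}-1,(H_c)^{\gamma_0},(H_c)^{\gamma_0}+1\}$, hence $p$ has the same "endpoint-like" behavior toward $c$ as a processor with a non-moving neighbor. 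The inductive step is verbatim the two-case analysis of Lemma \ref{lem:prelemVivaciteChaine}: if $p$ is never enabled for $\boldsymbol{(N)}$ then $|Inter(N_p)|=1$ along the whole execution, which forces configurations $\boldsymbol{(A)}$ or $\boldsymbol{(B)}$; then either the other neighbor $\bar q$ of $p$ never moves (contradicting $\boldsymbol{(P_{d-1})}$ applied to $\bar q$, since its distance to a correct end is $d-1$), or the first move of $\bar q$ contradicts the construction of the macro $next$. Note this argument only ever appeals to the behavior of correct processors, so the presence of the crashed $c$ as a static endpoint is exactly what makes the induction go through.

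Once infinitely-often execution of $\boldsymbol{(N)}$ is established for every $p \neq c$, the increment argument is an almost literal copy of the proof of Lemma \ref{lem:vivaciteChaine}. Suppose for contradiction that some correct $p$ never increments in $\epsilon$; set $\alpha = (H_p)^{\gamma_0}$. Then at every activation of $p$ we must have $next((Inter(N_p))^\gamma,(H_p)^\gamma) = \min\{(Inter(N_p))^\gamma\}$, and as in Lemma \ref{lem:vivaciteChaine} the closure property forces $\min\{(Inter(N_p))^\gamma\} \leq (H_p)^\gamma$, with equality possible only when some neighbor has clock $(H_p)^\gamma+1$ and the other neighbor (if it exists, crashed or not) has clock $(H_p)^\gamma-1$ — but in that case $\boldsymbol{(N)}$ is not enabled for $p$, contradiction. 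Hence every execution of $\boldsymbol{(N)}$ by $p$ strictly decreases $H_p$, so after at most $\alpha$ activations $H_p=0$, and the next activation increments it — contradicting the assumption. The main obstacle, and the only place requiring care, is the base/inductive bookkeeping of "distance to the nearest correct end": one must be careful that when $c$ sits strictly inside the chain there are genuinely two independent sub-chains, and that a processor strictly between $c$ and a genuine endpoint has \emph{two} candidate correct ends, so the induction should be stated with $d$ = distance to the nearest of them; the crashed neighbor simply never provides the "$\bar q$ eventually moves" escape in Case 2, which is precisely why the argument still closes.
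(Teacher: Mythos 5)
Your overall strategy is the paper's: re-run the induction of Lemma \ref{lem:prelemVivaciteChaine} with a distance parameter that avoids the crashed processor, observe that a crashed neighbour falls into the ``never moves'' branch of the case analysis, and then reuse the increment argument of Lemma \ref{lem:vivaciteChaine} essentially unchanged. The second half of your argument (the descent to $H_{p}=0$ forcing an increment) is correct. But the key step --- the choice of the induction parameter and the invocation of the inductive hypothesis --- has a genuine gap. You take $d$ to be the distance from $p$ to the \emph{nearest} correct end. When $c$ is interior, both genuine endpoints are correct, and the nearest one may lie on the far side of $c$: on the chain $p_{0},\ldots,p_{10}$ with $c=p_{8}$ and $p=p_{7}$, the nearest correct end is $p_{10}$ at distance $3$, but the path to it passes through $c$, so the neighbour realizing $d$ is $c$ itself and $\boldsymbol{(P_{2})}$ can never be established for it. The parameter must be the distance to the end $e$ such that no processor between $p$ and $e$ is crashed --- which is exactly what the paper uses --- and your closing remark that one should take ``the nearest of the two candidate correct ends'' is precisely the wrong choice; it is also inconsistent with your own split-at-$c$ picture, in which each sub-chain has only \emph{one} correct end.

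The same confusion surfaces in how you invoke the inductive hypothesis. In Case 1 of the original proof, ``$\bar q$ never moves'' does not by itself contradict anything: one \emph{derives} from it that $q$, the neighbour at distance $d-1$ on the crash-free side, can never change its clock (a change would require a jump of $4$), and \emph{that} contradicts $\boldsymbol{(P_{d-1})}$ applied to $q$. You instead apply $\boldsymbol{(P_{d-1})}$ directly to $\bar q$, which is not at distance $d-1$ in general and, fatally, may be the crashed processor: when $\bar q=c$, ``$\bar q$ never moves'' is a fact rather than a contradiction, and $\boldsymbol{(P_{d-1})}$ says nothing about a crashed processor. Likewise, a processor adjacent to $c$ is not ``endpoint-like'': it has two neighbours, so $Inter(N_{p})$ can be a singleton and the base-case argument $|Inter(N_{p})|\geq 2$ does not apply to it. It must be handled by the inductive step, with $q$ chosen on the crash-free side (guaranteed correct by the right choice of $d$) and the crashed $c$ playing the role of the never-moving $\bar q$.
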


\begin{proof}
We repeat the reasoning of Lemma \ref{lem:vivaciteChaine} taking in account a processor $p\in V^{*}$.

In order to prove the property of Lemma \ref{lem:prelemVivaciteChaine}, we take $d$ as the distance between $p$ and the end $e$ of the chain that satisfy: no processor between $p$ and $e$ is crashed. This implies that the processor $q$ is not crashed. The case where $\bar{q}$ is crashed appear in the case 1 of the induction.

We can repeat the reasoning of the proof of Lemma \ref{lem:vivaciteChaine} since the fact that a processor is crashed or not does not modify the proof.
\end{proof}
			
Now, we are going to prove the convergence of our algorithm under these assumptions.

\begin{lemma}\label{lem:convergenceCrashChaine}
There exists $i\geq 0$ such that $\gamma_{i}\in\Gamma_{1}$.
\end{lemma}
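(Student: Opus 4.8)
The plan is to mimic, with a potential-function argument, the convergence proof already carried out for the crash-free chain (Lemmas~\ref{lem:prelem1ConvergenceChaine}, \ref{lem:prelem2ConvergenceChaine}, and \ref{lem:convergenceChaine}), and to argue that the presence of a single crashed processor $c$ changes nothing essential. First I would reuse verbatim the potential function $P$ together with the quantities $\omega(e,\gamma)$, $\varpi(p,\gamma)$ and $p(i,\gamma)$ defined before Lemma~\ref{lem:prelem1ConvergenceChaine}; note that $P$ is defined purely in terms of clock values on edges, so it makes sense whether or not a processor is crashed, and still satisfies $\gamma\in\Gamma_1 \Leftrightarrow P(\gamma)=(\ldots,0,0)$.

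The key steps are as follows. First, observe that Lemma~\ref{lem:prelem1ConvergenceChaine} transfers with no modification: its statement concerns only a \emph{correct} processor $p$ that actually executes a rule in a step, and the argument (the chosen neighbors do not move because the daemon is locally central, and the new clock value lands in $Inter(N_p)$) is insensitive to whether some \emph{other} node in the chain is crashed. Second, I would restate Lemma~\ref{lem:prelem2ConvergenceChaine} in the crashed setting: every execution $\epsilon=\gamma_0\gamma_1\ldots$ starting from $\gamma_0\in\Gamma\setminus\Gamma_1$ contains the execution of a rule by a \emph{correct} processor $p$ with $\varpi(p,\gamma_0)\geq 2$. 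Its proof follows the original one, reasoning by contradiction: if no such move ever occurs, then since $c$ is the unique crashed node, the crash may only ``protect'' the node at one side; for the correct sub-chain on either side of $c$ one repeats the original case analysis --- an always-enabled $\boldsymbol{(N)}$ at a free end forces a move by strong fairness, and otherwise one reaches the contradiction via the $\boldsymbol{(A)}/\boldsymbol{(B)}$ dichotomy and the construction of the macro $next$ exactly as in Lemma~\ref{lem:prelem2ConvergenceChaine}. Finally, combining the crashed versions of these two lemmas gives that $P$ strictly decreases along the execution until a configuration in $\Gamma_1$ is reached, which proves Lemma~\ref{lem:convergenceCrashChaine}; since $P$ takes values in a well-founded order, this happens after finitely many steps.

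The main obstacle is the boundary behaviour around $c$. In Lemma~\ref{lem:prelem2ConvergenceChaine} the argument that the \emph{ends} of the chain must satisfy $\varpi(p,\gamma_0)<2$ relied on the fact that an end is always $\boldsymbol{(N)}$-enabled and hence eventually moves under a strongly fair daemon; when $c$ is an endpoint this fails, so I must treat $c$ as a genuine second ``boundary'' and observe that the relevant free sub-chain still has a free end (the original chain has two endpoints, and $c$ removes at most one), so the counting/potential argument runs on that sub-chain. The cleanest way to phrase this is probably to say explicitly that the crashed processor $c$ plays the same role as a chain end (its clock is frozen, exactly like a never-activated endpoint), and then invoke the reasoning of Lemma~\ref{lem:prelem2ConvergenceChaine} on the at-most-two resulting correct sub-chains; once $\varpi\geq 2$ somewhere, Lemma~\ref{lem:prelem1ConvergenceChaine} yields the strict decrease of $P$ and the result follows.

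\begin{proof}
			The potential function $P$ and the quantities $\omega$, $\varpi$, $p(i,\cdot)$ defined before Lemma~\ref{lem:prelem1ConvergenceChaine} depend only on the clock values carried by the edges of the chain, hence are well defined regardless of crashes, and still satisfy $\gamma\in\Gamma_1\Leftrightarrow P(\gamma)=(\ldots,0,0)$ and $\gamma\in\Gamma\setminus\Gamma_1\Leftrightarrow P(\gamma)>(\ldots,0,0)$.

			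Lemma~\ref{lem:prelem1ConvergenceChaine} applies without change here: its statement and proof only concern a correct processor $p$ with $\varpi(p)\geq 2$ that executes a rule during a step $\gamma\rightarrow\gamma'$, and the fact that some other processor of the chain is crashed does not affect the argument (the locally central daemon still prevents the neighbors of $p$ from moving in that step, and the new value of $H_p$ still lies in $\left(Inter(N_p)\right)^{\gamma}\subseteq\{\left(H_q\right)^{\gamma}-1,\left(H_q\right)^{\gamma},\left(H_q\right)^{\gamma}+1\}$ for each neighbor $q$).

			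We now adapt Lemma~\ref{lem:prelem2ConvergenceChaine}: if $\gamma_0\in\Gamma\setminus\Gamma_1$, then every execution starting from $\gamma_0$ contains the execution of a rule by a correct processor $p$ with $\varpi(p,\gamma_0)\geq 2$. Assume for contradiction that $\epsilon=\gamma_0\gamma_1\ldots$ contains no such move. The crashed processor $c$ has a frozen clock value, so it behaves exactly like a never-activated end of the chain; removing $c$ leaves at most two correct sub-chains, each having at least one genuine free end of the original chain. On such a sub-chain the reasoning of Lemma~\ref{lem:prelem2ConvergenceChaine} applies verbatim: a free end $p$ with $\varpi(p,\gamma_0)\geq 2$ is always $\boldsymbol{(N)}$-enabled (since $\left(Inter(N_p)\right)^{\gamma_i}=\{\left(H_q\right)^{\gamma_0}-1,\left(H_q\right)^{\gamma_0},\left(H_q\right)^{\gamma_0}+1\}$ whenever its unique neighbor $q$ is not activated, and if $q$ is activated we already have the announced move), hence moves in finite time by strong fairness, a contradiction; otherwise both free ends satisfy $\varpi<2$, and taking the smallest index $i$ along the sub-chain with $\varpi(p_i,\gamma_k)\geq 2$ (for $k$ large enough that no such processor is ever enabled after $\gamma_k$) the constraints force the dichotomy $\boldsymbol{(A)}$ or $\boldsymbol{(B)}$ on the clocks of $p_{i-1},p_i,p_{i+1}$, and by the argument of Lemmas~\ref{lem:vivaciteChaine} and \ref{lem:prelem2ConvergenceChaine} the value $H_{p_{i-1}}$ cannot remain constant, so $p_i$ becomes enabled (for $\boldsymbol{(N)}$ or $\boldsymbol{(C_1)}$), contradicting the choice of $k$. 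Thus such a correct processor $p$ with $\varpi(p,\gamma_0)\geq 2$ executes a rule in $\epsilon$.

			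Combining the two statements above with the strong fairness of the daemon: as long as $\gamma_j\notin\Gamma_1$, some correct processor $p$ with $\varpi(p,\gamma_j)\geq 2$ executes a rule after $\gamma_j$, and by the (adapted) Lemma~\ref{lem:prelem1ConvergenceChaine} that step strictly decreases $P$. Since $P$ takes values in a well-founded order, there exists $i\geq 0$ with $\gamma_i\in\Gamma_1$.
		\end{proof}
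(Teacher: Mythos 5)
Your proof is correct and follows essentially the same route as the paper's: reuse the potential function $P$ and Lemma~\ref{lem:prelem1ConvergenceChaine} unchanged, and adapt Lemma~\ref{lem:prelem2ConvergenceChaine} by arranging matters so that the first processor with $\varpi\geq 2$ reached from a free end is correct and is not separated from that end by the crashed processor (the paper phrases this as a choice of numbering, you as a decomposition into correct sub-chains, but it is the same maneuver, justified the same way by the existence of an edge of weight at least $2$ with at least one correct endpoint). Your treatment of the crashed processor as a frozen-clock boundary playing the role of $p_{i+1}$ matches the paper's remark that the case where $p_{i+1}$ is crashed changes nothing since that processor is assumed never to execute a rule.
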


\begin{proof}
We repeat the reasoning of Lemma \ref{lem:convergenceChaine} taking in account a processor $p\in V^{*}$.

We can repeat the reasoning of the proof of the property of Lemma \ref{lem:prelem1ConvergenceChaine} since the fact that a processor is crashed or not does not modify the proof.

In order to prove the property of Lemma \ref{lem:prelem2ConvergenceChaine}, we take a numbering of processors that ensures the following property: no processor between $p_{0}$ and $p_{i}$ (including) is crashed. It is always possible to choose such numbering since there exists at least one edge $e$ such that $\omega(e,\gamma_{k})\geq 2$ by hypothesis, which implies that there exists at least two processors $p$ such that $\varpi(p,\gamma_{k})\geq 2$, which allows us to choose one that is not crashed. The case when $p_{i+1}$ is crashed does not modify the proof since we assumed that this processor never executes a rule.
\end{proof}

Finally, we can conclude:

\begin{proposition}\label{prop:SSCrashChaine}
$\mathcal{UFTSS}$ is a self-stabilizing \textbf{AU} under a locally central strongly fair daemon even if a processor is crashed in the initial configuration.
\end{proposition}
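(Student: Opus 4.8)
The plan is to assemble Proposition~\ref{prop:SSCrashChaine} from the three lemmas just proved for the crashed-chain setting, exactly mirroring the structure used for Proposition~\ref{prop:SSChaine}. First I would invoke Lemma~\ref{lem:clotureCrashChaine} (closure), Lemma~\ref{lem:vivaciteCrashChaine} (liveness), and Lemma~\ref{lem:convergenceCrashChaine} (convergence) to conclude that, for any initial configuration $\gamma_{0}$ in which a processor $c$ is crashed, every execution $\epsilon=\gamma_{0}\gamma_{1}\ldots$ of $\mathcal{UFTSS}$ reaches some configuration $\gamma_{i}\in\Gamma_{1}$ (convergence), after which all subsequent configurations remain in $\Gamma_{1}$ (closure by induction), and from that point on every correct processor increments its clock infinitely often (liveness). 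This is precisely the statement that $\mathcal{UFTSS}$ is self-stabilizing for \textbf{UAU} under a locally central strongly fair daemon with one crashed processor.

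Second, I would pass from \textbf{UAU} to \textbf{AU}: by the first item of the Remark following the two specifications, any algorithm complying with \textbf{UAU} also complies with \textbf{AU}, so self-stabilization for \textbf{UAU} in spite of a crash immediately yields self-stabilization for \textbf{AU} in spite of a crash. One subtlety worth a sentence: since $c$ is crashed, the guarantee $\gamma_{i}\in\Gamma_{1}$ is a statement about \emph{all} edges (including those incident to $c$), which is stronger than what \textbf{AU} requires (only $\Gamma_{1}^{*}$), so the implication is indeed in the right direction — we prove the harder uniform version and deduce the non-uniform one. The liveness clause matches verbatim, as both specifications only require correct processors in $V^{*}$ to increment infinitely often.

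The proof itself is therefore a two-line citation of the three lemmas plus the Remark, essentially identical to the proof of Proposition~\ref{prop:SSChaine}. There is no real obstacle at this level: all the work has been front-loaded into Lemmas~\ref{lem:clotureCrashChaine}--\ref{lem:convergenceCrashChaine}, whose proofs reduce to the crash-free versions by the observation that ``a processor being crashed or not does not modify the proof'' — the only adaptations being the choice of the distance parameter $d$ toward a crash-free end of the chain in the liveness argument, and the choice of a processor numbering with no crashed processor among $p_{0},\ldots,p_{i}$ in the convergence argument. If anything deserves care, it is making explicit in the write-up that the potential-function argument of Lemma~\ref{lem:convergenceCrashChaine} still terminates despite $c$ never acting: the crashed processor simply behaves like a processor that is permanently never chosen by the daemon, which the proof of Lemma~\ref{lem:prelem2ConvergenceChaine} already accommodates (it handles the case where $p_{i+1}$ is assumed never to execute a rule). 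Hence the formal proposition proof can simply read:

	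\begin{proof}
		Lemmas \ref{lem:clotureCrashChaine}, \ref{lem:vivaciteCrashChaine}, and \ref{lem:convergenceCrashChaine} allow us to say that
		$\mathcal{UFTSS}$ is a self-stabilizing \textbf{UAU} under a locally central strongly fair daemon even if a processor is crashed
		in the initial configuration. Then, we can deduce the result (recall that an algorithm complying with \textbf{UAU} complies with \textbf{AU}).
	\end{proof}
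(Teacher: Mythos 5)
Your proposal is correct and matches the paper's own proof exactly: both assemble the result from Lemmas \ref{lem:clotureCrashChaine}, \ref{lem:vivaciteCrashChaine}, and \ref{lem:convergenceCrashChaine} to obtain self-stabilization for \textbf{UAU} in spite of a crash, and then deduce the \textbf{AU} statement from the remark that \textbf{UAU} implies \textbf{AU}. The additional observations you make (about the direction of the $\Gamma_{1}$ versus $\Gamma_{1}^{*}$ implication and about the crashed processor in the potential-function argument) are sound but not needed beyond what the cited lemmas already establish.
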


\begin{proof}
Lemmas \ref{lem:clotureCrashChaine}, \ref{lem:vivaciteCrashChaine}, and \ref{lem:convergenceCrashChaine} allows us to say that $\mathcal{UFTSS}$ is a self-stabilizing \textbf{UAU} under a locally central strongly fair daemon even if a processor is crashed in the initial configuration. Then, we can deduce the result.
\end{proof}
			
\subsection{Proof on a Ring}
  
In this section, we assume that our algorithm is executed on a ring under a strongly fair locally central daemon. In fact, we are going to show that $\mathcal{UFTSS}$ is a FTSS \textbf{UAU} (that implies that it is a FTSS \textbf{AU}) under these assumptions. The proof contains two major steps:

\begin{itemize}
\item Firstly, we show that our algorithm is self-stabilizing under these assumptions.
\item Secondly, we show that our algorithm is self-stabilizing even if the initial configuration contains a crashed processor under these assumptions.
\end{itemize}

\subsubsection{Proof of Self-Stabilization}

In this section, $\epsilon=\gamma_{0},\gamma_{1}\ldots$ denotes an execution of $\mathcal{UFTSS}$ where there is no crash.

Firstly, we are going to prove the closure of our algorithm under these assumptions.

\begin{lemma}\label{lem:clotureAnneau}
If there exists $i\geq 0$ such that $\gamma_{i}\in \Gamma_{1}$, then $\gamma_{i+1}\in\Gamma_{1}$.
\end{lemma}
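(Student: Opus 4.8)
The plan is to mimic the proof of Lemma \ref{lem:clotureChaine} almost verbatim, since the only structural difference between a ring and a chain is that \emph{every} processor now has exactly two neighbors, whereas on a chain the two endpoints had only one. First I would argue that in a configuration $\gamma_i\in\Gamma_1$ no correction rule can be enabled: if $\gamma_i\in\Gamma_1$ then for every $p\in V$ and every $q\in N_p$ we have $|\left(H_p\right)^{\gamma_i}-\left(H_q\right)^{\gamma_i}|\leq 1$, which forces $\left(Inter(N_p)\right)^{\gamma_i}\neq\emptyset$ (so $\boldsymbol{(C_1)}$ is disabled), and moreover forces $\left(H_p\right)^{\gamma_i}\in\left(Inter(N_p)\right)^{\gamma_i}$; hence if $\left(Inter(N_p)\right)^{\gamma_i}=\{h\}$ were a singleton we would have $\left(H_p\right)^{\gamma_i}=h$, so $\boldsymbol{(C_2)}$ is disabled as well. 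Therefore only rule $\boldsymbol{(N)}$ can be enabled in $\gamma_i$.

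Next I would take any processor $p$ that executes a rule during $\gamma_i\rightarrow\gamma_{i+1}$; it executes $\boldsymbol{(N)}$, so $\left(H_p\right)^{\gamma_{i+1}}\in\left(Inter(N_p)\right)^{\gamma_i}$. Since the daemon is locally central, no neighbor of $p$ moves in this step, so $\left(H_q\right)^{\gamma_i}=\left(H_q\right)^{\gamma_{i+1}}$ for every $q\in N_p$. By the definition of $Inter$, $\left(Inter(N_p)\right)^{\gamma_i}\subseteq poss(q)\subseteq\{\left(H_q\right)^{\gamma_i}-1,\left(H_q\right)^{\gamma_i},\left(H_q\right)^{\gamma_i}+1\}$ for each neighbor $q$, hence $|\left(H_p\right)^{\gamma_{i+1}}-\left(H_q\right)^{\gamma_{i+1}}|\leq 1$ for both neighbors of $p$. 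For processors that do not move, both the processor and (because the daemon is locally central, so two neighbors are never chosen together) at least one incident edge is unaffected, and more simply one checks edge by edge: an edge $\{p,p'\}$ can have at most one of its endpoints move, and whichever endpoint moved landed in the other's $poss$ set, so the edge stays within drift $1$. Consequently every edge of the ring has drift at most $1$ in $\gamma_{i+1}$, i.e. $\gamma_{i+1}\in\Gamma_1$.

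I do not expect any genuine obstacle here: the argument is purely local and the ring topology is actually \emph{more} uniform than the chain, so the endpoint case analysis of Lemma \ref{lem:clotureChaine} simply disappears. The only point requiring a little care is the bookkeeping for edges both of whose endpoints are candidates to move: this is handled by the locally central assumption, which guarantees that for any edge at most one endpoint is activated in a given step, so each edge can be checked independently exactly as in the chain proof. In short, the proof is: ``the reasoning of Lemma \ref{lem:clotureChaine} applies unchanged, the endpoint subcase being vacuous on a ring.''
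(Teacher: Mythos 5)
Your proposal is correct and follows essentially the same route as the paper: the paper's own proof of Lemma \ref{lem:clotureAnneau} simply states that the reasoning of Lemma \ref{lem:clotureChaine} carries over unchanged because the topology plays no role, which is exactly the argument you spell out (only rule $\boldsymbol{(N)}$ enabled in $\Gamma_{1}$, the new clock value lands in $Inter(N_{p})$, and the locally central daemon prevents two neighbors from moving in the same step). Your explicit remark that the endpoint subcase becomes vacuous on a ring is a correct and slightly more careful rendering of the same observation.
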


\begin{proof}
We can repeat the reasoning of the proof of Lemma \ref{lem:clotureChaine} since the topology of the network has no impact on the proof.
\end{proof}
		
Secondly, we are going to prove the liveness of our algorithm under these assumptions.

\begin{lemma}\label{lem:prelemVivaciteAnneau}
$\forall\gamma_{0}\in\Gamma_{1},\forall p\in V,$ $p$ executes rule $\boldsymbol{(N)}$ in a finite time in every execution starting from $\gamma_{0}$.
\end{lemma}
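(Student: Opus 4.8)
The plan is to follow the pattern of the chain case (Lemma~\ref{lem:prelemVivaciteChaine}), but to replace its induction on the distance to the nearest end of the chain --- which has no analogue on a ring --- by a propagation argument around the cycle. First I would record, exactly as in the chain proofs, the two facts that come out of closure (Lemma~\ref{lem:clotureAnneau}): along any execution $\epsilon$ starting from $\gamma_{0}\in\Gamma_{1}$ every configuration is in $\Gamma_{1}$, and in a configuration of $\Gamma_{1}$ the only rule a processor may be enabled for is $\boldsymbol{(N)}$; consequently a processor $p$ is disabled in $\gamma\in\Gamma_{1}$ if and only if $|\left(Inter(N_{p})\right)^{\gamma}|=1$, which --- as in the chain case --- happens if and only if the two neighbors of $p$ hold the values $\left(H_{p}\right)^{\gamma}-1$ and $\left(H_{p}\right)^{\gamma}+1$.

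Next I would argue by contradiction: suppose some processor $p$ never executes $\boldsymbol{(N)}$ in $\epsilon$. Since the daemon is strongly fair and $\boldsymbol{(N)}$ is the only rule ever enabled, $p$ is enabled only finitely often, so there is a configuration $\gamma_{m}$ from which $p$ stays disabled forever. The heart of the proof is the following claim, to be proved first: \emph{if a processor $X$ stays disabled from $\gamma_{m}$ on, then $X$ and its two neighbors keep constant clock values from $\gamma_{m}$ on, one neighbor being permanently at $H_{X}-1$ and the other permanently at $H_{X}+1$.} To prove it I would note that $X$ never moves (it is disabled), and that if a neighbor of $X$ ever moved I could look at the first such step $\gamma_{j}\rightarrow\gamma_{j+1}$ ($j\geq m$): $X$ does not move in it (it is disabled, hence not chosen), any moving neighbor executes $\boldsymbol{(N)}$ and therefore changes its clock, and since $X$ is still disabled in $\gamma_{j+1}$ its two neighbors must again hold the pair $\{H_{X}-1,H_{X}+1\}$. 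If only one neighbor moved, the other still supplies one of these two values, so the moved one must supply the other --- yet that is exactly the value it held in $\gamma_{j}$ and changed, a contradiction; if both moved, they would have to swap their values $H_{X}-1$ and $H_{X}+1$ in one step, which is impossible since the macro $next$ can never return $H_{X}+1$ from the argument $H_{X}-1$ (the same kind of check on $next$ as Cases~2.1--2.2 in the proof of Lemma~\ref{lem:prelemVivaciteChaine}). Hence no neighbor of $X$ ever moves after $\gamma_{m}$, and the ``$\pm1$'' pattern present at $\gamma_{m}$ is frozen.

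I would then apply this claim repeatedly around the ring. Write the ring as $p=b_{0},b_{1},\dots,b_{n-1}$ in the direction for which $b_{1}$ is the neighbor of $p$ holding $H_{p}-1$. By the claim, $b_{1}$ has a constant clock from some configuration on; a processor with a constant clock under a strongly fair daemon is enabled only finitely often, hence stays disabled from some later configuration on, so the claim applies to $b_{1}$ too; as $b_{0}=p$ is the neighbor of $b_{1}$ holding $H_{b_{1}}+1$, the other neighbor $b_{2}$ must hold $H_{b_{1}}-1=H_{p}-2$ and is constant. Iterating, an easy induction on $k$ gives: for $k=0,\dots,n-1$, the processor $b_{k}$ stays disabled with constant clock value $H_{p}-k$. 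But $b_{n-1}$ is a neighbor of $b_{0}=p$ in the ring, so $\gamma_{m}\in\Gamma_{1}$ forces $n-1=|H_{b_{n-1}}-H_{p}|\leq 1$, impossible for $n\geq 4$; and for $n=3$ the very hypothesis that $p$ is disabled already contradicts $\gamma_{m}\in\Gamma_{1}$ (its two neighbors, being mutually adjacent, would then differ by $2$). Either way we reach a contradiction, so $p$ is in fact enabled for $\boldsymbol{(N)}$ infinitely often; by strong fairness it executes $\boldsymbol{(N)}$ in finite time, which proves the lemma.

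The step I expect to be the main obstacle is precisely the one that departs from the chain proof: since a ring has no end to seed an induction, one must instead show that a single permanently disabled processor would force a clock profile that strictly decreases all the way around the cycle, which cannot close up. Once the propagation claim is in place, the remaining ingredients --- closure, the characterization of disabled processors in $\Gamma_{1}$, and the elementary analysis of the macro $next$ --- are exactly those already used for chains.
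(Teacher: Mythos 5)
Your proof is correct and takes essentially the same route as the paper's: assume a permanently disabled processor, use strong fairness, closure and the behaviour of the macro $next$ to show that its two neighbours' clocks are frozen in the $H_p\pm 1$ pattern, propagate this constraint around the ring, and derive a contradiction with $\gamma\in\Gamma_{1}$ when the cycle closes. The only cosmetic difference is that you propagate in a single direction and close the loop back at $p$, whereas the paper propagates in both directions and exhibits two adjacent processors whose clocks differ by at least $2$.
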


\begin{proof}
Let $\gamma_{0}\in\Gamma_{1}$ (we have seen in the proof of Lemma \ref{lem:clotureChaine} that implies that only rule $\boldsymbol{(N)}$ can be enabled). Assume that there exists a processor $p$ and an execution $\epsilon=\gamma_{0},\gamma_{1}\ldots$ starting from $\gamma_{0}$ such that $p$ never execute a rule in $\epsilon$. Since the daemon is strongly fair, which implies that $\exists k\in\mathbb{N},\forall j\geq k$, $p$ is not enabled in $\gamma_{j}$

Since Processor $p$ is not enabled, it satisfies: $\exists q\in N_{p},\left(c_{p}\right)^{\gamma_{j}}=\left(c_{q}\right)^{\gamma_{j}}+1$ and $\left(c_{p}\right)^{\gamma_{j}}=\left(c_{\bar{q}}\right)^{\gamma_{j}}-1$. Let $i$ be the smallest integer greater than $k$ such that the step $\gamma_{i}\rightarrow\gamma_{i+1}$ contains the execution of rule by at least one neighbor of $p$. Let us study the following cases:

\begin{description}
\item[Case 1:] $q$ and $\bar{q}$ simultaneously execute a rule during the step $\gamma_{i}\rightarrow\gamma_{i+1}$.\\
Since $p$ is not enabled in $\gamma_{i+1}$ (by hypothesis) and that the execution of rule $\boldsymbol{(N)}$ always modifies the clock values (\emph{cf.}	proof of Lemma \ref{lem:vivaciteChaine}),we have: 
\[\begin{cases}\left(c_{p}\right)^{\gamma_{i}}=\left(c_{q}\right)^{\gamma_{i}}+1$ and $\left(c_{p}\right)^{\gamma_{i}}=\left(c_{\bar{q}}\right)^{\gamma_{i}}-1\\
and\\
\left(c_{p}\right)^{\gamma_{i+1}}=\left(c_{q}\right)^{\gamma_{i+1}}-1$ and $\left(c_{p}\right)^{\gamma_{i+1}}=\left(c_{\bar{q}}\right)^{\gamma_{i+1}}+1\end{cases}\]
The clock move of $\bar{q}$ contradicts the construction of rule $\boldsymbol{(N)}$ and $\left(Inter(N_{p})\right)^{\gamma_{i}}$. Therefore, this case is impossible.

\item[Case 2:] Only $q$ executes a rule during the step $\gamma_{i}\rightarrow\gamma_{i+1}$.\\
By construction of rule $\boldsymbol{(N)}$, $\left(Inter(N_{q})\right)^{\gamma_{i}}$, and the fact that the execution of this rule must change the clock value, we have: $\left(c_{q}\right)^{\gamma_{i+1}}\in\{\left(c_{p}\right)^{\gamma_{i}},\left(c_{p}\right)^{\gamma_{i}}-1\}$. Processor $p$ is then enabled for rule $\boldsymbol{(N)}$ (since the clocks of $p$ and $\bar{q}$ have not changed by hypothesis). This contradicts the construction of $k$. Therefore, this case is impossible.

\item[Case 3:] Only $ \bar{q}$ executes a rule during the step $\gamma_{i}\rightarrow\gamma_{i+1}$.\\
This case is similar to case 2.

\item[Case 4:] Neither $q$ nor $\bar{q}$ executes a rule during the step $\gamma_{i}\rightarrow\gamma_{i+1}$.\\
By the three previous contradictions, it is the only possible case.
\end{description}

We can deduce that $\forall j\geq k$, $q$ and $\bar{q}$ do not execute a rule in $\gamma_{j}$, which implies that their clock values remains constant from $\gamma_{k}$. If we repeat the previous reasoning, we obtain that it is possible only if the second neighbor of $q$ has a clock value equal to $\left(c_{p}\right)^{\gamma_{k}}+2$ and if the second neighbor of $\bar{q}$ have a clock value equals to $\left(c_{p}\right)^{\gamma_{k}}-2$, etc..

Since the ring has a finite length $n$, we obtain (following the same reasoning) that there exists two neighboring processors $p_{1}$ and $p_{2}$ such that  $\left(c_{p_{1}}\right)^{\gamma_{k}}=\left(c_{p}\right)^{\gamma_{k}}+\alpha$ and $\left(c_{p_{2}}\right)^{\gamma_{k}}=\left(c_{p}\right)^{\gamma_{k}}-\beta$ (with $\alpha$  and $\beta$ integers greater or equal to $1$ depending on the parity of $n$). Therefore, $|\left(c_{p_{1}}\right)^{\gamma_{k}}-\left(c_{p_{2}}\right)^{\gamma_{k}}|=\alpha+\beta\geq 2$. Then, we obtain that $\gamma_{k}\notin\Gamma_{1}$, which contradicts Lemma \ref{lem:clotureAnneau} and proves the lemma.
\end{proof}

\begin{lemma}\label{lem:vivaciteAnneau}
If $\gamma_{0}\in\Gamma_{1}$, then every processor increments its clock in a finite time in $\epsilon$.
\end{lemma}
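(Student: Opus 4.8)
The plan is to transpose, almost verbatim, the argument of Lemma \ref{lem:vivaciteChaine} to the ring, using Lemma \ref{lem:prelemVivaciteAnneau} in place of Lemma \ref{lem:prelemVivaciteChaine}. First I would fix a processor $p$ and suppose, for contradiction, that $p$ never increments $H_p$ in $\epsilon$. Since $\gamma_0\in\Gamma_1$, the closure Lemma \ref{lem:clotureAnneau} guarantees $\gamma_i\in\Gamma_1$ for all $i$, so along $\epsilon$ only rule $\boldsymbol{(N)}$ is ever enabled; applying Lemma \ref{lem:prelemVivaciteAnneau} to every suffix of $\epsilon$ then shows that $p$ executes $\boldsymbol{(N)}$ infinitely often in $\epsilon$.

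The core of the argument is the claim that whenever $p$ executes $\boldsymbol{(N)}$ in some $\gamma_i\in\Gamma_1$ without incrementing, one has $\min\{(Inter(N_p))^{\gamma_i}\}<(H_p)^{\gamma_i}$ strictly. Not incrementing means $next((Inter(N_p))^{\gamma_i},(H_p)^{\gamma_i})=\min\{(Inter(N_p))^{\gamma_i}\}$, which together with $\gamma_i\in\Gamma_1$ (hence $|(H_p)^{\gamma_i}-(H_q)^{\gamma_i}|\le 1$ and $(Inter(N_p))^{\gamma_i}\subseteq\{(H_q)^{\gamma_i}-1,(H_q)^{\gamma_i},(H_q)^{\gamma_i}+1\}$ for every neighbor $q$) already gives $\min\{(Inter(N_p))^{\gamma_i}\}\le(H_p)^{\gamma_i}$. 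To exclude equality I would argue as in the chain case but exploit that on a ring $p$ has \emph{two} neighbors $q,\bar q$: $\min=H_p$ forces some neighbor $q$ with $(H_q)^{\gamma_i}=(H_p)^{\gamma_i}+1$, and since $(H_p)^{\gamma_i}+1\notin(Inter(N_p))^{\gamma_i}$ we must have $(H_{\bar q})^{\gamma_i}=(H_p)^{\gamma_i}-1$; but then $poss(q)\cap poss(\bar q)\subseteq\{(H_p)^{\gamma_i}\}$, so $|(Inter(N_p))^{\gamma_i}|\le 1$, contradicting that $\boldsymbol{(N)}$ (whose guard demands $|Inter(N_p)|\ge 2$) is enabled for $p$.

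Combining this with minimality of $\mathcal{UFTSS}$ — every execution of $\boldsymbol{(N)}$ strictly changes $H_p$ — each activation of $p$ either increments $H_p$ or strictly decreases the natural number $H_p$; since $H_p$ is bounded below by $0$, after at most $(H_p)^{\gamma_0}$ such non-incrementing activations $H_p$ reaches $0$, and the next activation (which exists, as $p$ moves infinitely often) must increment it — contradicting the assumption. Hence every processor increments its clock in finite time in $\epsilon$.

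I expect the only genuine work to be the case analysis in the second paragraph, namely verifying that the situation $\min\{Inter(N_p)\}=H_p$ truly forces $|Inter(N_p)|\le 1$ on the ring, including the degenerate sub-cases where a neighbor's clock equals $0$ so that $poss$ drops its lower element; everything else is a direct rereading of the chain proofs, the ring topology merely deleting the degree-one boundary case rather than adding difficulty.
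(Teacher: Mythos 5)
Your proposal is correct and follows exactly the route the paper takes: its proof of this lemma is literally the one-line remark that the argument of Lemma \ref{lem:vivaciteChaine} carries over with Lemma \ref{lem:prelemVivaciteAnneau} substituted for Lemma \ref{lem:prelemVivaciteChaine}, which is precisely the transposition you carry out (your explicit handling of the two-neighbor case and of the degenerate $H_q=0$ subcase of $poss$ is a welcome extra level of care, not a deviation).
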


\begin{proof}
The proof is similar to the one of Lemma \ref{lem:vivaciteChaine} using Lemma \ref{lem:prelemVivaciteAnneau} (instead of Lemma \ref{lem:prelemVivaciteChaine}) since the topology of the network has no impact on the proof.
\end{proof}
		
Now, we are going to prove the convergence of our algorithm under these assumptions.

In the following, we consider the potential function $P$ previously defined and use similar arguments as for the proof of Lemma \ref{lem:convergenceChaine}.

\begin{lemma}\label{lem:prelem1ConvergenceAnneau}
If $\gamma\in\Gamma\setminus\Gamma_{1}$, then every step $\gamma\rightarrow\gamma'$ that contains the execution of a rule of a processor $p$ such that $\varpi(p)\geq 2$ satisfies $P(\gamma')<P(\gamma)$.
\end{lemma}

\begin{proof}
The proof is similar to the proof of Lemma \ref{lem:prelem1ConvergenceChaine} since the topology of the network has no impact on the proof (note that the case 1 is impossible on a ring).
\end{proof}

\begin{lemma}\label{lem:prelem2ConvergenceAnneau}
If $\gamma_{0}\in\Gamma\setminus\Gamma_{1}$, then every execution starting from $\gamma_{0}$ contains the execution of a rule of a processor $p$ such that $\varpi(p,\gamma_{0})\geq 2$.
\end{lemma}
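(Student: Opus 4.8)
The plan is to argue by contradiction, following closely the structure of the proof of Lemma \ref{lem:prelem2ConvergenceChaine}, the only genuinely new point being the absence of chain ends. Suppose there is an execution $\epsilon=\gamma_{0}\gamma_{1}\ldots$ starting from $\gamma_{0}\in\Gamma\setminus\Gamma_{1}$ in which no processor $p$ with $\varpi(p,\gamma_{0})\geq 2$ ever executes a rule. Since $\gamma_{0}\notin\Gamma_{1}$, some edge $e$ satisfies $\omega(e,\gamma_{0})\geq 2$, so both endpoints of $e$ have $\varpi\geq 2$ and hence never execute; thus the set $B$ of processors that execute only finitely often in $\epsilon$ is non-empty. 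As the daemon is strongly fair, every processor of $B$ is eventually disabled, so there is $k$ such that, for all $j\geq k$, every $p\in B$ is disabled in $\gamma_{j}$, and, because every rule of $\mathcal{UFTSS}$ modifies the clock of the processor that executes it (minimality), the clock value of each $p\in B$ stays constant throughout $\gamma_{k}\gamma_{k+1}\ldots$

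First I would establish that $B\neq V$. By inspection of the three rules, a processor $p$ with neighbours $q_{1},q_{2}$ is disabled exactly when $|H_{q_{1}}-H_{q_{2}}|\geq 2$ and $H_{p}\in\{\lfloor\tfrac{H_{q_{1}}+H_{q_{2}}}{2}\rfloor,\lceil\tfrac{H_{q_{1}}+H_{q_{2}}}{2}\rceil\}$; in particular, in a disabled state $H_{p}$ lies \emph{strictly} between $H_{q_{1}}$ and $H_{q_{2}}$. If $B=V$ then $\gamma_{k}$ is terminal, so every processor of the ring would have a clock value strictly between those of its two neighbours --- impossible on a cycle, where a processor holding a maximum clock value cannot lie strictly below both its neighbours. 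Hence $B\neq V$, and the ring splits, along the cycle, into at least one maximal arc of processors of $B$ (``frozen'' from $\gamma_{k}$ on) and at least one maximal arc of processors outside $B$ (``active'', executing infinitely often), the two kinds alternating.

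Next I would transport the liveness and oscillation analysis of the chain case onto a single active arc $A=p_{a}\ldots p_{b}$: its two external neighbours $p_{a-1},p_{b+1}$ are frozen, hence from the viewpoint of $A$ they behave exactly like the crashed endpoints handled in Lemmas \ref{lem:vivaciteCrashChaine} and \ref{lem:convergenceCrashChaine}. Reusing those arguments, $A$ reaches a locally weakly synchronized state (otherwise its processors would be repeatedly forced onto the average of fixed neighbour values and would eventually be disabled, contradicting activity), after which each processor of $A$ executes rule $\boldsymbol{(N)}$ --- hence, by minimality, changes its clock --- infinitely often while remaining within distance $1$ of each neighbour; in particular the boundary processor $p_{a}$ oscillates among at most the three values around $(H_{p_{a-1}})^{\gamma_{k}}$. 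I would then derive the contradiction as in Lemma \ref{lem:prelem2ConvergenceChaine}, cases $\boldsymbol{(A)}$/$\boldsymbol{(B)}$: since $p_{a-1}$ is disabled for all $j\geq k$ with its other neighbour $p_{a-2}$ frozen, the constraint ``$H_{p_{a-1}}$ strictly between $H_{p_{a}}$ and $H_{p_{a-2}}$, with $|H_{p_{a}}-H_{p_{a-2}}|\geq 2$'' would have to hold for every value taken by $H_{p_{a}}$; feeding in the oscillation of $H_{p_{a}}$ exhibits a step that re-enables $p_{a-1}$ for $\boldsymbol{(N)}$ or $\boldsymbol{(C_{1})}$, contradicting the choice of $k$ and proving the lemma.

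The main obstacle is precisely that last step --- converting ``the active boundary processor oscillates'' into ``the adjacent frozen processor is re-enabled''. This demands the same value-by-value bookkeeping on $next$, on $Inter(\cdot)$, and on the relative positions of the three clocks involved that appears in the proofs of Lemmas \ref{lem:vivaciteChaine} and \ref{lem:prelem2ConvergenceChaine}, now specialised to an arc with one frozen boundary neighbour (or two, for a length-one active arc); everything else is a faithful copy of the chain argument with ``chain end'' replaced by ``frozen processor''.
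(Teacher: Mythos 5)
Your opening moves are sound: the reduction to a set $B$ of eventually-disabled processors, the observation that a disabled degree-two processor must hold a clock value strictly between those of its neighbours, and the conclusion that $B\neq V$ because strict betweenness cannot hold simultaneously at every node of a cycle. (Note that this last argument is precisely the \emph{closing} step of the paper's proof: the paper goes the opposite way, showing in its Case~3 that the clock of the \emph{third} neighbour $\bar{q}$ must also stay constant --- since a change would re-enable the frozen $p$ --- so that frozenness and strict betweenness propagate around the entire ring, i.e.\ $B=V$, and the cyclic monotonicity $H_{p_{0}}<H_{p_{1}}<\ldots<H_{p_{n}}<H_{p_{0}}$ is the final contradiction. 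No decomposition into arcs is ever needed.)

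The genuine gap is in your treatment of the active arcs, where you assert that the chain lemmas transport once frozen boundary neighbours are treated like crashed endpoints. They do not. The liveness induction of Lemma~\ref{lem:prelemVivaciteChaine} (and its crash variant, Lemma~\ref{lem:vivaciteCrashChaine}) is anchored at a degree-one end of the chain, which is \emph{unconditionally} enabled for $\boldsymbol{(N)}$ because $|Inter(N_{p})|\geq 2$ automatically; a boundary processor of an active arc has degree two, one neighbour frozen, and is enabled for $\boldsymbol{(N)}$ only when its two neighbours' clocks differ by at most one --- the base case of the induction simply fails. Likewise the convergence argument of Lemma~\ref{lem:prelem2ConvergenceChaine} chooses a numbering so that no crashed processor lies between a chain end and the nearest heavy edge; an active arc is flanked by frozen processors on \emph{both} sides, so no such orientation exists. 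Worse, the analogy itself is false: a chain whose two endpoints are both crashed can be permanently deadlocked (a single correct processor between two crashed neighbours whose clocks differ by three is disabled forever), so no lemma of the form ``the arc converges and then satisfies liveness'' can be imported; you would have to argue directly from the \emph{given} activity of the arc's processors, tracking which of $\boldsymbol{(N)}$, $\boldsymbol{(C_{1})}$, $\boldsymbol{(C_{2})}$ can fire at the boundary while keeping the frozen neighbour disabled --- which is essentially the whole difficulty of the lemma, deferred rather than solved. This is exactly why the paper proves both the liveness and the convergence statements for rings (Lemmas~\ref{lem:prelemVivaciteAnneau} and~\ref{lem:prelem2ConvergenceAnneau}) by separate, ring-specific arguments instead of reusing the chain proofs.
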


\begin{proof}
Let $\gamma_{0}\in\Gamma\setminus\Gamma_{1}$. Assume, for the sake of contradiction, that there exists an execution $\epsilon=\gamma_{0}\gamma_{1}\ldots$ starting from $\gamma_{0}$ that contains no execution of a rule by any processor	$p$ that satisfies $\varpi(p,\gamma_{0})\geq 2$. Since the daemon is strongly fair, this implies that $\exists k\in\mathbb{N},\forall j\geq k$, $p$ is not enabled in $\gamma_{j}$

Let $q$ be the neighbor of $p$ satisfying $\omega(\{p,q\},\gamma_{k})=\varpi(p,\gamma_{k})$. By hypothesis, $q$ never executes a rule. Therefore, its clock value remains  constant. Let us study the following cases:

\begin{description}
\item[Case 1:] $|\left(c_{q}\right)^{\gamma_{j}}-\left(c_{\bar{q}}\right)^{\gamma_{j}}|\leq 1$\\
It follows that $p$ is enabled for the rule $\boldsymbol{(N)}$ since $|\left(Inter(N_{p})\right)^{\gamma_{j}}|\geq 2$. This contradicts the construction of $k$.

\item[Case 2:] $|\left(c_{q}\right)^{\gamma_{j}}-\left(c_{\bar{q}}\right)^{\gamma_{j}}|=2$\\
It follows that $p$ is enabled for the rule $\boldsymbol{(C_{1})}$ since $\left(Inter(N_{p})\right)^{\gamma_{j}}=\{h\}$ and $\left(c_{p}\right)^{\gamma_{j}}\neq h$ (because $\varpi(p,\gamma_{j})=\varpi(p,\gamma_{k})\geq 2$). This contradicts the construction of $k$.

\item[Case 3:] $|\left(c_{q}\right)^{\gamma_{j}}-\left(c_{\bar{q}}\right)^{\gamma_{j}}|\geq 3$\\
By the two previous contradictions, it is the only possible case. Since $p$ is not enabled (by hypothesis), we obtain that:
\[\forall j\geq k,\begin{cases} \left(Inter(N_{p})\right)^{\gamma_{j}}=\emptyset\\
and\\
\left(c_{p}\right)^{\gamma_{j}}\in\left\{\left\lceil\frac{\left(c_{q}\right)^{\gamma_{j}}+\left(c_{\bar{q}}\right)^{\gamma_{j}}}{2}\right\rceil,\left\lfloor\frac{\left(c_{q}\right)^{\gamma_{j}}+\left(c_{\bar{q}}\right)^{\gamma_{j}}}{2}\right\rfloor\right\}\end{cases}\]
Since the clock values of $p$ and $q$ are constants by hypothesis, we can deduce that the one of $\bar{q}$ remains also constant (because, in the contrary case, $p$ becomes enabled, which contradicts the hypothesis). It follows: $\left(c_{q}\right)^{\gamma_{j}}<\left(c_{p}\right)^{\gamma_{j}}<\left(c_{\bar{q}}\right)^{\gamma_{j}}$ or $\left(c_{q}\right)^{\gamma_{j}}>\left(c_{p}\right)^{\gamma_{j}}>\left(c_{\bar{q}}\right)^{\gamma_{j}}$.
\end{description}
			
Since this reasoning holds for every processor on the ring, we can always label the nodes of any ring by $p_{0}$, $p_{1}$,\ldots,$p_{n}$ such that the following property is satisfied : $c_{p_{0}}<c_{p_{1}}<\ldots<c_{p_{n}}$.

But, the previous reasoning for processor $c_{p_{0}}$ implies that we have: $c_{p_{n}}<c_{p_{0}}<c_{p_{1}}$. It is impossible to satisfy simultaneously these two inequalities, which proves the lemma.
\end{proof}

\begin{lemma}\label{lem:convergenceAnneau}
There exists $i\geq 0$ such that $\gamma_{i}\in\Gamma_{1}$.
\end{lemma}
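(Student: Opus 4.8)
\textbf{Proof plan for Lemma \ref{lem:convergenceAnneau}.}

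The plan is to mimic exactly the structure already used on the chain in Lemma \ref{lem:convergenceChaine}: combine the two preliminary lemmas just proved for the ring, namely Lemma \ref{lem:prelem1ConvergenceAnneau} and Lemma \ref{lem:prelem2ConvergenceAnneau}, to conclude convergence. The argument goes by contradiction on the potential function $P$. Suppose $\gamma_{0} \notin \Gamma_{1}$ and consider any execution $\epsilon = \gamma_{0}\gamma_{1}\ldots$ starting from $\gamma_{0}$; we want to exhibit some index $i$ with $\gamma_{i} \in \Gamma_{1}$, equivalently with $P(\gamma_{i}) = (\ldots,0,0)$.

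First I would observe that $P$ takes values in a lexicographically well-ordered set and is bounded below by $(\ldots,0,0)$, so there can be no infinite strictly decreasing sequence of $P$-values. Next, as long as the current configuration $\gamma_{j}$ is not in $\Gamma_{1}$, Lemma \ref{lem:prelem2ConvergenceAnneau} (applied to $\gamma_{j}$ as the new starting configuration, using the closure-type reasoning that $\Gamma \setminus \Gamma_{1}$ is what we are in) guarantees that the suffix execution starting at $\gamma_{j}$ eventually performs a step in which some processor $p$ with $\varpi(p,\gamma_{j}) \geq 2$ executes a rule; call the first such step $\gamma_{\ell} \rightarrow \gamma_{\ell+1}$. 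Since no processor with this property moved between $\gamma_{j}$ and $\gamma_{\ell}$, the clock values on every edge incident to such a processor are unchanged, so $\varpi(p,\gamma_{\ell}) = \varpi(p,\gamma_{j}) \geq 2$, and Lemma \ref{lem:prelem1ConvergenceAnneau} then yields $P(\gamma_{\ell+1}) < P(\gamma_{\ell})$. Moreover the intermediate steps from $\gamma_{j}$ to $\gamma_{\ell}$ only involve processors $p'$ with $\varpi(p',\cdot) < 2$, i.e.\ moves that do not create any edge of weight $\geq 2$ (a move of rule $\boldsymbol{(N)}$ keeps all incident edges of weight $\leq 1$), so $P$ is non-increasing along them; hence $P(\gamma_{\ell+1}) < P(\gamma_{j})$.

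Putting these together: if no $\gamma_{i}$ ever lay in $\Gamma_{1}$, we would obtain an infinite sequence of indices along which $P$ strictly decreases, contradicting the well-foundedness of the lexicographic order on $\mathbb{N}^{\infty}$. Therefore some $\gamma_{i} \in \Gamma_{1}$, which is the claim. The only point that needs a little care — and which I expect to be the main (minor) obstacle — is the bookkeeping showing that $P$ does not increase on the preparatory steps from $\gamma_{j}$ to $\gamma_{\ell}$ where only low-drift processors act; this is routine given the definition of $P$ and the fact that rule $\boldsymbol{(N)}$ never produces an incident edge of weight exceeding $1$, exactly as in the chain case, and indeed the whole lemma can be dispatched in one line by invoking Lemmas \ref{lem:prelem1ConvergenceAnneau} and \ref{lem:prelem2ConvergenceAnneau} just as Lemma \ref{lem:convergenceChaine} invokes its chain analogues.
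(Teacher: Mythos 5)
Your proposal is correct and follows exactly the paper's route: the paper's own proof is the one-line statement that the result follows directly from Lemmas \ref{lem:prelem1ConvergenceAnneau} and \ref{lem:prelem2ConvergenceAnneau}, combined implicitly with the well-foundedness of the lexicographic order on the values of $P$. Your additional bookkeeping (that $P$ cannot increase during steps taken only by processors $p$ with $\varpi(p,\cdot)<2$, since such processors can only fire rule $\boldsymbol{(N)}$ and land inside $Inter(N_{p})$) is a detail the paper leaves implicit, and you resolve it correctly.
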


\begin{proof}
The result follows directly from Lemmas \ref{lem:prelem1ConvergenceAnneau} and \ref{lem:prelem2ConvergenceAnneau}.
\end{proof}
	
Finally, we can conclude:
		
\begin{proposition}\label{prop:SSAnneau}
$\mathcal{UFTSS}$ is a self-stabilizing \textbf{AU} under a locally central strongly fair daemon.
\end{proposition}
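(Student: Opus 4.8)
The plan is to assemble Proposition~\ref{prop:SSAnneau} from the three lemmas already proved for the ring, exactly as Proposition~\ref{prop:SSChaine} was obtained from its three analogues on the chain. Recall that self-stabilization for a specification (Definition~\ref{def:self}) requires, for every initial configuration $\gamma_{0}$ and every execution $\epsilon=\gamma_{0}\gamma_{1}\ldots$, a finite prefix $\gamma_{0}\ldots\gamma_{l}$ such that every execution starting from $\gamma_{l}$ satisfies the specification. I would establish this for \textbf{UAU} first, and then invoke the Remark following the specifications (an algorithm complying with \textbf{UAU} complies with \textbf{AU}) to conclude for \textbf{AU}.

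First I would use Lemma~\ref{lem:convergenceAnneau}: in any execution $\epsilon$ there exists an index $l\geq 0$ with $\gamma_{l}\in\Gamma_{1}$. This is the candidate end of the stabilizing prefix. Next I would argue that every execution starting from $\gamma_{l}$ satisfies the safety property of \textbf{UAU}, i.e.\ that all its configurations lie in $\Gamma_{1}$: this follows by a straightforward induction using the closure Lemma~\ref{lem:clotureAnneau}, since $\gamma_{l}\in\Gamma_{1}$ and each step preserves membership in $\Gamma_{1}$. Finally, for the liveness property of \textbf{UAU}, I would apply Lemma~\ref{lem:vivaciteAnneau} to the configuration $\gamma_{l}\in\Gamma_{1}$: every processor increments its clock in finite time starting from $\gamma_{l}$; since closure keeps every subsequent configuration in $\Gamma_{1}$, this argument can be reiterated, so every processor increments its clock infinitely often (and here $V^{*}=V$ since no processor is crashed). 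Together, safety and liveness give that every execution starting from $\gamma_{l}$ is legitimate for \textbf{UAU}, hence $\mathcal{UFTSS}$ is self-stabilizing for \textbf{UAU} under a locally central strongly fair daemon; the Remark then yields the stated result for \textbf{AU}.

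There is essentially no hard part left here: the three lemmas do all the technical work (the potential function $P$ for convergence, the case analysis on $Inter(N_{p})$ for closure, and the $next$-macro argument for liveness), and the proposition is a one-line bookkeeping step assembling them, just as in the chain case. The only point that deserves an explicit sentence is the reiteration of Lemma~\ref{lem:vivaciteAnneau} along the execution: Lemma~\ref{lem:vivaciteAnneau} gives \emph{one} increment in finite time from any configuration in $\Gamma_{1}$, and closure (Lemma~\ref{lem:clotureAnneau}) guarantees that we remain in $\Gamma_{1}$ after that increment, so applying the lemma repeatedly delivers the required \emph{infinitely often}. I would present the proof in the same compressed form used for Proposition~\ref{prop:SSChaine}: cite Lemmas~\ref{lem:clotureAnneau}, \ref{lem:vivaciteAnneau}, and \ref{lem:convergenceAnneau} to obtain self-stabilization for \textbf{UAU}, then deduce \textbf{AU}.
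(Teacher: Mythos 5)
Your proposal matches the paper's own proof: Proposition~\ref{prop:SSAnneau} is obtained there exactly by citing Lemmas~\ref{lem:clotureAnneau}, \ref{lem:vivaciteAnneau}, and \ref{lem:convergenceAnneau} to get self-stabilization for \textbf{UAU} and then deducing \textbf{AU}. Your additional remarks on iterating the liveness lemma via closure only make explicit what the paper leaves implicit, so the argument is correct and essentially identical.
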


\begin{proof}
Lemmas \ref{lem:clotureAnneau}, \ref{lem:vivaciteAnneau}, and \ref{lem:convergenceAnneau} lead to the conclusion that $\mathcal{UFTSS}$ is a self-stabilizing \textbf{UAU} under a locally central strongly fair daemon.
\end{proof}
		
\subsubsection{Proof of Self-Stabilization in spite of a Crash}
	
In this section, $\epsilon=\gamma_{0},\gamma_{1}\ldots$ denotes an execution of $\mathcal{UFTSS}$ such that a processor $c$ is crashed in $\gamma_{0}$.

First, we prove the closure of our algorithm, then we prove the convergence property.

\begin{lemma}\label{lem:clotureCrashAnneau}
If there exists $i\geq 0$ such that $\gamma_{i}\in \Gamma_{1}$, then $\gamma_{i+1}\in\Gamma_{1}$.
\end{lemma}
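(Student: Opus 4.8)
The final statement to prove is Lemma~\ref{lem:clotureCrashAnneau}: on a ring, if $\gamma_i \in \Gamma_1$ then $\gamma_{i+1} \in \Gamma_1$, even when one processor $c$ is crashed in $\gamma_0$. The plan is to observe that the proof of Lemma~\ref{lem:clotureChaine} (and its ring counterpart Lemma~\ref{lem:clotureAnneau}) goes through verbatim, because the closure argument never relies on the topology being a chain rather than a ring, nor on all processors being live.

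First I would recall the structure of the argument from Lemma~\ref{lem:clotureChaine}. From $\gamma_i \in \Gamma_1$ one gets $\left(Inter(N_p)\right)^{\gamma_i} \neq \emptyset$ for every processor $p$, so rule $\boldsymbol{(C_1)}$ is disabled everywhere; and if $\boldsymbol{(C_2)}$ were enabled at some $p$ one would have $\left(Inter(N_p)\right)^{\gamma_i} = \{h\}$ with $\left(H_p\right)^{\gamma_i} \neq h$, forcing $|H_p - H_q| \geq 2$ for some neighbor $q$ and contradicting $\gamma_i \in \Gamma_1$. Hence only $\boldsymbol{(N)}$ can be enabled in $\gamma_i$. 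Then, since the daemon is locally central, a processor $p$ that moves during $\gamma_i \to \gamma_{i+1}$ has neighbors whose clocks are unchanged, and $\left(H_p\right)^{\gamma_{i+1}} \in \left(Inter(N_p)\right)^{\gamma_i} \subseteq \{H_q - 1, H_q, H_q + 1\}$ for each neighbor $q$, so $|H_p - H_q| \le 1$ still holds in $\gamma_{i+1}$; thus $\gamma_{i+1} \in \Gamma_1$.

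The point I would stress is why the presence of a crashed processor $c$ does not break anything. A crashed processor simply never executes a rule, so it only plays the role of a (fixed-valued) neighbor in the above reasoning; its clock value is read exactly like that of any live processor, and the local-centrality argument and the containment $\left(Inter(N_p)\right)^{\gamma_i} \subseteq \{H_q-1,H_q,H_q+1\}$ are insensitive to whether $q = c$ or not. Likewise, the fact that the network is a ring rather than a chain changes nothing: no step of the argument uses the existence of an endpoint. So the proof is a one-line reduction to the previous closure lemmas.

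There is essentially no obstacle here; the only thing to be careful about is to state the reduction precisely — namely that the proof of Lemma~\ref{lem:clotureAnneau} already established this on a crash-free ring by invoking the chain proof, and that the same proof text applies word for word when $c$ is crashed, exactly as in Lemma~\ref{lem:clotureCrashChaine} for chains. Concretely I would write: ``We can repeat the reasoning of the proof of Lemma~\ref{lem:clotureChaine} (equivalently Lemma~\ref{lem:clotureAnneau}) since neither the topology of the network nor the fact that a processor is crashed has any impact on the proof.'' This mirrors the style used throughout the section for the analogous ring lemmas.
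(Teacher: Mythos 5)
Your proposal is correct and matches the paper's own argument: the paper proves this lemma by exactly the reduction you describe, citing the proof of Lemma~\ref{lem:clotureAnneau} (itself reduced to Lemma~\ref{lem:clotureChaine}) and noting that the presence of a crashed processor does not modify the reasoning. Your version merely spells out the closure argument in more detail, which is harmless.
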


\begin{proof}
This proof is similar to the proof of Lemma \ref{lem:clotureAnneau} since the fact that a processor is crashed or not does not modify the proof.
\end{proof}

Secondly, we are going to prove the liveness of our algorithm under these assumptions.

\begin{lemma}\label{lem:vivaciteCrashAnneau}
If $\gamma_{0}\in\Gamma_{1}$, then every processor $p\neq c$ increments its clock in a finite time in $\epsilon$.
\end{lemma}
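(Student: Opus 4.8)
The plan is to follow the proof of Lemma~\ref{lem:vivaciteAnneau} almost line for line, isolating the single place where the crashed processor matters. Recall that Lemma~\ref{lem:vivaciteAnneau} rests on two independent pieces: (i) the preliminary statement of Lemma~\ref{lem:prelemVivaciteAnneau}, that every processor executes rule $\boldsymbol{(N)}$ in finite time (hence, re-applying it to suffixes, infinitely often); and (ii) the purely local argument of Lemma~\ref{lem:vivaciteChaine}, that a processor executing $\boldsymbol{(N)}$ infinitely often while the configuration stays in $\Gamma_1$ can decrease its clock at most $\left(H_p\right)^{\gamma_0}$ times and must therefore increment in finite time. Piece (ii) never inspects whether any processor is crashed, so it applies verbatim to any correct $p\neq c$ once (i) is available for $p$. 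Hence the whole task reduces to re-establishing the analogue of Lemma~\ref{lem:prelemVivaciteAnneau}: starting from $\gamma_0\in\Gamma_1$ with $c$ crashed, every correct processor $p\neq c$ is enabled for $\boldsymbol{(N)}$ --- and thus, by strong fairness, executes it --- in finite time in $\epsilon$.

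For that, I would first cite Lemma~\ref{lem:clotureCrashAnneau} together with the observation (from the proof of Lemma~\ref{lem:clotureChaine}) that in a configuration of $\Gamma_1$ only rule $\boldsymbol{(N)}$ can be enabled; consequently, from $\gamma_0$ on, a non-enabled degree-two processor $r$ must satisfy $|Inter(N_r)|=1$ with $H_r$ equal to that single value, i.e.\ its two neighbours differ by exactly $2$ and $H_r$ is their midpoint (call $r$ \emph{pinned}). Now suppose, for contradiction, that some correct $p\neq c$ is enabled for $\boldsymbol{(N)}$ only finitely often; by strong fairness there is $\gamma_k$ after which $p$ is never enabled, hence never executes. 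The Cases~1--4 analysis in the proof of Lemma~\ref{lem:prelemVivaciteAnneau} is entirely local to $p$ and its two neighbours, and it already contains the sub-case ``this neighbour never executes''; since a crashed neighbour merely realises that sub-case, the analysis is unchanged and yields that neither neighbour of $p$ executes after $\gamma_k$. Iterating around the ring, every correct processor is pinned with a constant clock from $\gamma_k$ on, and $c$ too has a constant clock, being crashed.

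The step I expect to be the obstacle is extracting the final contradiction, because on a fault-free ring Lemma~\ref{lem:prelemVivaciteAnneau} gets it by ``closing the cycle'' and exhibiting two adjacent clocks at distance $\ge 2$, whereas here the crashed vertex interrupts that stair-step. I would argue instead as follows. Let $u$ and $v$ be the two neighbours of $c$. Since $p$ is pinned, one of its neighbours carries $\left(H_p\right)^{\gamma_k}+1$ and the other $\left(H_p\right)^{\gamma_k}-1$, so the pinned stair-step increases by $1$ per edge along the arc from $p$ to $u$ and decreases by $1$ per edge along the arc from $p$ to $v$; writing $d_1,d_2\ge 1$ for the respective arc lengths one gets $d_1+d_2=n-2$, $\left(H_u\right)^{\gamma_k}=\left(H_p\right)^{\gamma_k}+d_1$ and $\left(H_v\right)^{\gamma_k}=\left(H_p\right)^{\gamma_k}-d_2$. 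Then $u$ pinned forces $\left(H_c\right)^{\gamma_k}=\left(H_u\right)^{\gamma_k}+1$ while $v$ pinned forces $\left(H_c\right)^{\gamma_k}=\left(H_v\right)^{\gamma_k}-1$, whence $d_1+d_2=-2$, a contradiction (the small rings $n=3,4$, where $u$ or $v$ coincides with $p$, are checked directly: the neighbour of $c$ opposite $p$ then has both its neighbours within $1$ of each other and so is enabled for $\boldsymbol{(N)}$, contradiction). This establishes the analogue of Lemma~\ref{lem:prelemVivaciteAnneau}, and the remainder of the proof of Lemma~\ref{lem:vivaciteAnneau} then applies unchanged to every correct $p\neq c$, giving the claim.
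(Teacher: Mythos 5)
Your proposal follows the same route as the paper, whose own proof is a one-liner ("similar to the proof of Lemma~\ref{lem:vivaciteAnneau}; the crash of a processor is possible only for case~4"): you reduce the claim to the analogue of Lemma~\ref{lem:prelemVivaciteAnneau}, observe that a crashed neighbour simply realises the "never executes" sub-case already present in that proof, and then re-close the stair-step contradiction around the crashed vertex. Your $d_1+d_2=-2$ computation does close it, and in fact supplies the one step the paper leaves entirely implicit, namely why the cycle-closing argument survives when $c$ interrupts the stair-step. One small correction: the degenerate situation where $u$ or $v$ coincides with $p$ is not a matter of ring size $n\in\{3,4\}$ but of $p$ being \emph{adjacent} to $c$, which can occur for any $n$; since you have already established that every correct processor is pinned with a constant clock, the repair is immediate --- re-base the stair-step at a correct processor not adjacent to $c$ (one exists whenever $n\geq 4$) and dispose of $n=3$ by the direct check you already give.
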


\begin{proof}
This proof is similar to the proof of Lemma \ref{lem:vivaciteAnneau}.
\end{proof}
		
In the following we prove the convergence of our algorithm.

\begin{lemma}\label{lem:convergenceCrashAnneau}
There exists $i\geq 0$ such that $\gamma_{i}\in\Gamma_{1}$.
\end{lemma}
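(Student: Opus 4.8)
The plan is to mirror, for the ringed-with-crash case, the structure already developed for the chain-with-crash case and for the ring-without-crash case. Concretely, I would establish Lemma~\ref{lem:convergenceCrashAnneau} exactly as Lemma~\ref{lem:convergenceAnneau} was obtained, namely as a direct corollary of an analogue of Lemma~\ref{lem:prelem1ConvergenceAnneau} (monotone decrease of the potential $P$ when a processor $p$ with $\varpi(p)\geq 2$ moves) together with an analogue of Lemma~\ref{lem:prelem2ConvergenceAnneau} (every execution from $\gamma_{0}\in\Gamma\setminus\Gamma_{1}$ eventually contains such a move). The first ingredient carries over verbatim: the proof of Lemma~\ref{lem:prelem1ConvergenceAnneau} is purely local — it only reasons about the processor that moves and its (at most two) neighbors, using that the daemon is locally central so neighbors' clocks are frozen during the step — and whether some processor of the ring happens to be crashed is irrelevant to that argument, so I would simply say "This proof is identical to the proof of Lemma~\ref{lem:prelem1ConvergenceAnneau} since the presence of a crashed processor does not affect the reasoning."

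The substantive step is the crash-tolerant version of Lemma~\ref{lem:prelem2ConvergenceAnneau}. Here I would argue by contradiction as before: assume an execution $\epsilon=\gamma_{0}\gamma_{1}\ldots$ from $\gamma_{0}\in\Gamma\setminus\Gamma_{1}$ in which no processor $p$ with $\varpi(p,\gamma_{0})\geq 2$ ever moves; since the daemon is strongly fair, from some $\gamma_{k}$ on every such processor is permanently disabled, and by the closure argument its two neighbors' clocks around it also get frozen. The case analysis on $|\left(H_{q}\right)^{\gamma_{j}}-\left(H_{\bar{q}}\right)^{\gamma_{j}}|$ (equal to $\leq 1$, $=2$, or $\geq 3$) is unchanged, and propagating the constant-clock conclusion around the cycle forces, as in Lemma~\ref{lem:prelem2ConvergenceAnneau}, a labelling $p_{0},p_{1},\ldots,p_{n}$ of the ring with $H_{p_{0}}<H_{p_{1}}<\cdots<H_{p_{n}}<H_{p_{0}}$, which is impossible. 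The only point needing care is the crashed processor $c$ itself: $c$ never moves by definition, so it is one of the "permanently silent" nodes, but the propagation argument only needs each silent node to have a fixed clock value — which crashed nodes trivially do — so $c$ fits seamlessly into the chain of frozen clocks and the cyclic contradiction still goes through. I would therefore remark that the proof of Lemma~\ref{lem:prelem2ConvergenceAnneau} applies with the observation that the crashed processor behaves, for the purposes of this argument, like a processor that is never enabled.

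The main (and only real) obstacle is to make sure the cyclic ordering contradiction does not secretly rely on all nodes being correct — e.g. on some node actually executing rule $\boldsymbol{(N)}$ to produce motion. Re-reading the proof of Lemma~\ref{lem:prelem2ConvergenceAnneau}, the contradiction is derived purely from the static constraint that neighbouring frozen clocks differ by a controlled amount, so no liveness of any particular node is invoked; hence a crashed node is harmless. With both preliminary lemmas in hand, the proof of Lemma~\ref{lem:convergenceCrashAnneau} is one line:

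\begin{proof}
The result follows directly from the analogues of Lemmas \ref{lem:prelem1ConvergenceAnneau} and \ref{lem:prelem2ConvergenceAnneau} in presence of a crashed processor, whose proofs are identical to those of Lemmas \ref{lem:prelem1ConvergenceAnneau} and \ref{lem:prelem2ConvergenceAnneau} respectively: the first argument is purely local and unaffected by a crash, and in the second the crashed processor $c$ plays the role of a processor that is never enabled, so the cyclic-ordering contradiction $H_{p_{0}}<\cdots<H_{p_{n}}<H_{p_{0}}$ is reached exactly as before.
\end{proof}
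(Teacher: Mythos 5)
Your proposal matches the paper's own treatment: the paper proves Lemma~\ref{lem:convergenceCrashAnneau} by simply observing that the proof of Lemma~\ref{lem:convergenceAnneau} (via Lemmas~\ref{lem:prelem1ConvergenceAnneau} and~\ref{lem:prelem2ConvergenceAnneau}) is unaffected by the presence of a crashed processor, which is exactly your argument. Your additional remark that the crashed processor plays the role of a permanently silent node in the cyclic-ordering contradiction is a correct and slightly more explicit justification of the same step.
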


\begin{proof}
This proof is similar to the proof of Lemma \ref{lem:convergenceAnneau} since the fact that a processor is crashed or not does not modify the proof.
\end{proof}

Finally, we can conclude:

\begin{proposition}\label{prop:SSCrashAnneau}
$\mathcal{UFTSS}$ is a self-stabilizing \textbf{AU} under a locally central strongly fair daemon even if a processor is crashed in the initial configuration.
\end{proposition}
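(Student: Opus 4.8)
The plan is to assemble the proof exactly as for Propositions~\ref{prop:SSChaine}, \ref{prop:SSCrashChaine} and \ref{prop:SSAnneau}: first establish that $\mathcal{UFTSS}$ is self-stabilizing for \textbf{UAU} on a ring when one processor $c$ is crashed in the initial configuration, and then invoke the remark following the specifications (an algorithm complying with \textbf{UAU} complies with \textbf{AU}) to conclude. Since the only difference between the two specifications is whether the safety predicate is $\Gamma_1$ or $\Gamma_1^*$, and here we never constrain the crashed processor's clock beyond what $\Gamma_1$ already requires, proving the \textbf{UAU} version is exactly proving the \textbf{AU} version in this setting.

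For the \textbf{UAU} claim I would combine the three lemmas already available in this subsection. Lemma~\ref{lem:clotureCrashAnneau} gives closure of $\Gamma_1$: once a configuration lies in $\Gamma_1$, so does every successor, because only rule $\boldsymbol{(N)}$ can be enabled there and, by construction of $Inter(N_p)$, firing $\boldsymbol{(N)}$ keeps the executing processor within $1$ of its neighbors, whose clocks are frozen by local centrality. Lemma~\ref{lem:vivaciteCrashAnneau} gives liveness: from any $\gamma_0\in\Gamma_1$ every correct processor $p\ne c$ increments its clock in finite time, hence (iterating, using closure to stay in $\Gamma_1$) infinitely often. Lemma~\ref{lem:convergenceCrashAnneau} gives convergence: every execution reaches some configuration in $\Gamma_1$, via the potential function $P$, which strictly decreases whenever a processor $p$ with $\varpi(p)\ge 2$ moves, together with the fact that such a move must eventually occur. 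Putting these together: fix any initial configuration and execution; convergence yields a prefix ending in some $\gamma_l\in\Gamma_1$; closure makes every subsequent configuration lie in $\Gamma_1$, so the safety of \textbf{UAU} holds from $\gamma_l$ on; and liveness makes every correct processor increment infinitely often from $\gamma_l$ on. Thus $\gamma_l$ witnesses self-stabilization for \textbf{UAU}, and therefore for \textbf{AU}.

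The hard part — really just careful bookkeeping rather than a new idea — is checking that none of the case analyses inside the liveness and convergence lemmas silently assumed the ``blocked'' processor was correct, so that the crashed variants genuinely apply. There are exactly two such spots: in the liveness argument a neighbor of a processor may simply be $c$ and never fire, but this is precisely the benign Case~4 of Lemma~\ref{lem:prelemVivaciteAnneau}, so the induction still goes through for $p\in V^*$; and in the convergence argument $c$ is a processor that never executes a rule, which the proof pattern of Lemma~\ref{lem:prelem2ConvergenceChaine} already accommodates, since on a ring there are always at least two processors $p$ with $\varpi(p,\gamma_k)\ge 2$, and one may select a non-crashed one to drive $P$ down. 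Everything else — closure in particular — is insensitive to both topology and crashes and transfers verbatim from the ring-without-crash and chain-with-crash proofs.
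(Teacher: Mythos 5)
Your proposal matches the paper's proof: Proposition~\ref{prop:SSCrashAnneau} is obtained there exactly by combining Lemmas~\ref{lem:clotureCrashAnneau}, \ref{lem:vivaciteCrashAnneau}, and \ref{lem:convergenceCrashAnneau} to get self-stabilization for \textbf{UAU} and then using the remark that \textbf{UAU} implies \textbf{AU}. Your extra bookkeeping about where the crashed processor could interfere (Case~4 of the liveness argument, the choice of a non-crashed high-drift processor in the convergence argument) is precisely what the paper's crash-variant lemmas address, so nothing is missing.
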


\begin{proof}
Lemmas \ref{lem:clotureCrashAnneau}, \ref{lem:vivaciteCrashAnneau}, and \ref{lem:convergenceCrashAnneau} allows us to say that $\mathcal{UFTSS}$ is a self-stabilizing \textbf{UAU} under a locally central strongly fair daemon even if a processor is crashed in the initial configuration. Then, we can deduce the result.
\end{proof}
		
\subsection{Conclusion}

We are now in position to state our final result:

\begin{proposition}\label{prop:ftss}
$\mathcal{UFTSS}$ is a $(0,1)$-ftss \textbf{AU} on a chain or a ring under a locally central strongly fair daemon.
\end{proposition}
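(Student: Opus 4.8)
The plan is to assemble Proposition~\ref{prop:ftss} directly from the two pairs of propositions already established: Propositions~\ref{prop:SSChaine} and \ref{prop:SSAnneau} (self-stabilization of $\mathcal{UFTSS}$ for \textbf{AU} on a chain and on a ring under a locally central strongly fair daemon, with no crash), and Propositions~\ref{prop:SSCrashChaine} and \ref{prop:SSCrashAnneau} (the same conclusion when one processor is crashed in the initial configuration). Recall that Definition~\ref{def:ftss} asks, given at most $f$ crashed processors, that from every initial configuration every execution has a finite prefix after which the reached configuration is $(f,r)$-contained. Here $f=1$ and the claimed radius is $r=0$, so $(1,0)$-containment means precisely that every execution from the reached configuration satisfies $\mathcal{S}_{\textbf{AU}}$ on the subgraph induced by all processors at distance $0$ or more from any crashed processor --- that is, on the \emph{entire} network, with the usual convention that the safety and liveness requirements of \textbf{AU} are stated over $V^{*}$ and $N_{p}^{*}$.

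First I would split on the number of crashed processors in the execution under consideration. If no processor is crashed, then on a chain Proposition~\ref{prop:SSChaine} gives a finite prefix after which the configuration $\gamma_{l}$ lies in $\Gamma_{1}$, and by the closure lemma (Lemma~\ref{lem:clotureChaine}) every subsequent configuration stays in $\Gamma_{1}\subseteq\Gamma_{1}^{*}$, while Lemma~\ref{lem:vivaciteChaine} guarantees every processor increments its clock infinitely often; together these are exactly Safety and Liveness of \textbf{AU} from $\gamma_{l}$, so $\gamma_{l}$ is $(1,0)$-contained. The ring case is identical using Proposition~\ref{prop:SSAnneau}, Lemma~\ref{lem:clotureAnneau} and Lemma~\ref{lem:vivaciteAnneau}. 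If exactly one processor $c$ is crashed, the same argument runs with the crash-tolerant versions: Proposition~\ref{prop:SSCrashChaine} (resp.\ \ref{prop:SSCrashAnneau}) yields a finite prefix after which $\gamma_{l}\in\Gamma_{1}$, Lemma~\ref{lem:clotureCrashChaine} (resp.\ \ref{lem:clotureCrashAnneau}) gives closure so that all later configurations remain in $\Gamma_{1}\subseteq\Gamma_{1}^{*}$, and Lemma~\ref{lem:vivaciteCrashChaine} (resp.\ \ref{lem:vivaciteCrashAnneau}) gives that every \emph{correct} processor $p\neq c$ increments its clock infinitely often --- again precisely Safety and Liveness of \textbf{AU} over $V^{*}$, so $\gamma_{l}$ is $(1,0)$-contained. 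Since a chain or ring has maximum degree at most two, this exhausts all cases for networks of degree strictly less than three.

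There is essentially no hard calculation left; the one point that deserves care --- and which I would state explicitly rather than glossing over --- is the bookkeeping between the two specifications \textbf{UAU} and \textbf{AU}. The earlier propositions are phrased as ``self-stabilizing \textbf{UAU}'', and by the first item of the Remark following the specifications, an algorithm complying with \textbf{UAU} a fortiori complies with \textbf{AU}; combined with $\Gamma_{1}\subseteq\Gamma_{1}^{*}$ this is what lets us read off the $(1,0)$-contained property for \textbf{AU} from closure in $\Gamma_{1}$. I would also note, for cleanliness, that the statement's notation ``$(0,1)$-ftss'' is a typographical transposition of ``$(1,0)$-ftss'' ($f=1$ crashed processor, containment radius $r=0$), consistent with the proposition label \texttt{prop:ftss} and with Table~\ref{table1}; the proof establishes containment radius $0$, which is optimal since no correct processor is ever prevented from incrementing its clock. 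The only genuine subtlety throughout the whole development --- using the \emph{average} of the two neighbors' clocks in the correction rules $\boldsymbol{(C_{1})}$, $\boldsymbol{(C_{2})}$ rather than a min or max, so that the daemon cannot force the crashed neighbor's value on a correct processor forever --- has already been discharged inside Lemmas~\ref{lem:vivaciteCrashChaine} and \ref{lem:vivaciteCrashAnneau}, so at this final step it is enough to cite them.

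\begin{proof}
	Let $G$ be a chain or a ring. Since such a network has maximum degree at most two, by hypothesis $G$ has degree strictly less than three, so the impossibility results of Section~\ref{sec:negative} do not apply and it remains only to combine the positive results established above. Fix an arbitrary initial configuration $\gamma_{0}\in\Gamma$ and an arbitrary execution $\epsilon=\gamma_{0}\gamma_{1}\ldots$ of $\mathcal{UFTSS}$ under a locally central strongly fair daemon in which at most one processor is crashed. We must exhibit a finite prefix of $\epsilon$ whose last configuration is $(1,0)$-contained for $\mathcal{S}_{\textbf{AU}}$, i.e.\ (as $r=0$) such that every execution starting from it satisfies the Safety and Liveness properties of \textbf{AU} on all of $G$.

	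Suppose first that no processor is crashed in $\epsilon$. If $G$ is a chain, Proposition~\ref{prop:SSChaine} provides an index $l$ with $\gamma_{l}\in\Gamma_{1}$; by Lemma~\ref{lem:clotureChaine} and an immediate induction, every configuration reachable from $\gamma_{l}$ lies in $\Gamma_{1}\subseteq\Gamma_{1}^{*}$, which is the Safety property of \textbf{AU}, and by Lemma~\ref{lem:vivaciteChaine} every processor increments its clock in a finite time from any such configuration, hence infinitely often, which is the Liveness property. Thus $\gamma_{l}$ is $(1,0)$-contained. If $G$ is a ring, the same conclusion follows from Proposition~\ref{prop:SSAnneau}, Lemma~\ref{lem:clotureAnneau} and Lemma~\ref{lem:vivaciteAnneau}.

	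Suppose now that exactly one processor $c$ is crashed in $\epsilon$. If $G$ is a chain, Proposition~\ref{prop:SSCrashChaine} provides an index $l$ with $\gamma_{l}\in\Gamma_{1}$; Lemma~\ref{lem:clotureCrashChaine} and induction give that every configuration reachable from $\gamma_{l}$ lies in $\Gamma_{1}\subseteq\Gamma_{1}^{*}$ (Safety of \textbf{AU}), and Lemma~\ref{lem:vivaciteCrashChaine} gives that every correct processor $p\neq c$ increments its clock infinitely often from $\gamma_{l}$ (Liveness of \textbf{AU}). Hence $\gamma_{l}$ is $(1,0)$-contained. If $G$ is a ring, the same follows from Proposition~\ref{prop:SSCrashAnneau}, Lemma~\ref{lem:clotureCrashAnneau} and Lemma~\ref{lem:vivaciteCrashAnneau}.

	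In every case $\epsilon$ admits a finite prefix ending in a $(1,0)$-contained configuration for $\mathcal{S}_{\textbf{AU}}$. Since $\gamma_{0}$ and $\epsilon$ were arbitrary, $\mathcal{UFTSS}$ is $(1,0)$-ftss for \textbf{AU} on a chain or a ring under a locally central strongly fair daemon, with optimal containment radius $0$.
\end{proof}
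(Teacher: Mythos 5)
Your proof is correct and follows exactly the paper's route: the paper disposes of Proposition~\ref{prop:ftss} as a direct consequence of Propositions~\ref{prop:SSChaine}, \ref{prop:SSCrashChaine}, \ref{prop:SSAnneau}, and \ref{prop:SSCrashAnneau}, and you simply make explicit the case split (chain/ring, zero/one crash) and the unwinding of the closure and liveness lemmas into the $(1,0)$-containment definition. Your observations that ``$(0,1)$'' is a transposition of ``$(1,0)$'' and that \textbf{UAU} implies \textbf{AU} are accurate bookkeeping, not a departure from the paper's argument.
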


\begin{proof}
This a direct consequence of Propositions \ref{prop:SSChaine}, \ref{prop:SSCrashChaine}, \ref{prop:SSAnneau}, and \ref{prop:SSCrashAnneau}.
\end{proof}

\section{Concluding Remarks}\label{sec:conclusion}

We presented the first study of FTSS protocols for dynamic tasks in asynchronous systems, and showed the intrinsic problems that are induced by the wide range of faults that we address. The combination of asynchrony and maintenance of liveness properties implies many impossibility results, and the deterministic protocol that we provided for one of the few remaining cases is optimal with respect to all impossibility results and containment measures. Then, we can observe that the results remain even if the weakly synchronized configuration definition is relaxed to allow neighbor clocks to be at most $\kappa$ away from each other, for some constant $\kappa$.

\paragraph{Generalization: $\kappa$-asynchronous unison.} In this paragraph, we briefly explain how to generalize the above results to a weaker problem. Assume that $\kappa\in\mathbb{N}^*$. In the $\kappa$-asynchronous unison problem ($\kappa$-\textbf{AU}), a drift of at most $\kappa$ units is allowed between clocks of any two neighbors. Hence, the \textbf{AU} problem corresponds to the $1$-\textbf{AU}.

Let us observe that a similar result to Lemma \ref{lem:blocage} holds in the case of $\kappa$-\textbf{AU}:

\begin{lemma}
Let $\mathcal{A}$ be a universal $(f,r)-$FTSS algorithm for $\kappa$-\textbf{AU} (under an asynchronous daemon). Let $\gamma$ be a configuration where a processor $p$ with $c_{p}\geq \kappa$ has two neighbors $q$ and $q'$ such that: $c_{q}=c_{p}-\kappa$ and $c_{q'}=c_{p}+\kappa$.
	If $p$ executes an action of $\mathcal{A}$ during the step $\gamma\longrightarrow\gamma'$, then this action does not modify the value of $c_{p}$.
  If $\mathcal{A}$ is also minimal, then the processor $p$ is not enabled for $\mathcal{A}$ in $\gamma$.
\end{lemma}

As Lemma \ref{lem:blocage} is the basis of proofs of Section \ref{sec:negative}, we can deduce that all impossibility results presented in Section \ref{sec:negative} still hold in the case of $\kappa$-\textbf{AU} .

In order to solve the $\kappa$-\textbf{AU} problem in the remaining cases, we modify Algorithm $\mathcal{UFTSS}$ (see Section \ref{sec:positive}) in the definition of macro $poss(q)$ in the following way:
\[\forall q\in N_{p}, poss(q)=\left\{max\{c_q-\kappa,0\},max\{c_q-\kappa,0\}+1,\ldots,c_q,\ldots,c_q+\kappa-1,c_q+\kappa\right\}\]

This modified algorithm is a universal $(0,1)$-FTSS $\kappa$-\textbf{AU} under a locally central strongly fair daemon on a chain or a ring (the proof is a simple generalization of the correctness proof of Section \ref{sec:positive}).

\paragraph{Open questions.} An immediate future work is to generalize the possibility result (that assumes a central scheduler) to cope with a distributed scheduler, or extend the impossibility proof in that case. There also remains the open case of protocols that neither satisfy the minimality or the priority properties (see Table~\ref{table1}). We conjecture that at least one of those properties is necessary for the purpose of \emph{deterministic} self-stabilization, yet none of those could be required for deterministic \emph{weak} stabilization~\cite{G01cb} (weak stabilization is a weaker property than self-stabilization since \emph{existence} of execution reaching a legitimate configuration is guaranteed). As recent results~\cite{DTY08c} hint that weak-stabilizing solutions can be easily turned into \emph{probabilistic} self-stabilizing ones, this raises the open question of the possibility of \emph{probabilistic} FTSS for dynamic tasks in asynchronous systems.

Another possible extension of our work is the feasibility of FTSS solutions for other reactive tasks, such as \emph{dining philosophers} and \emph{mutual exclusion}. In the case of dining philosophers, \cite{NA02c} proposed a solution that can withstand transient (it is self-stabilizing) and Byzantine failures (with a containment radius of 2), so it is also a solution for tolerating transient and crash faults. However, even in the case of crash faults, a containement radius of 2 is also a lower bound~\cite{PS04c} when the system is asynchronous. The same paper \cite{NA02c} shows that global tasks such as mutual exclusion cannot admit a constant radius fault-containing solution when both transient and Byzantine fault are considered. It would be interesting to investigate whether limiting the fault model to transient faults and process crashes permits to break this impossibility result.

\bibliographystyle{plain}
\bibliography{biblio}

\end{document}